\pgfplotsset{compat=1.18}
\theoremstyle{plain}
\newtheorem{theorem}{Theorem}[section]
\newtheorem{corollary}[theorem]{Corollary}
\newtheorem{proposition}[theorem]{Proposition}
\theoremstyle{definition}
\theoremstyle{remark}
\newtheorem{remark}{Remark}
\definecolor{color1}{HTML}{1f77b4}
\tikzset{every node/.prefix style={font=\footnotesize}}
\pgfplotsset{twoonpage/.style={height=200,width=300}}
   \pgfplotsset{proxystyle/.style={
      x tick label style={font=\tiny,rotate=90},
      grid = both,
      grid style=dashed,
      xtick = {0,  27,  54,  81, 108, 135, 162, 189},
      xticklabels={20-05 , 20-11, 21-05, 21-11, 22-06, 22-12, 23-06, 23-12},
      xmin = 0,
      xmax = 210
   }} 
\pgfplotsset{SimStyle/.style={
      x tick label style={font=\small},
      grid = both,
      grid style=dashed,
   }}
\begin{document}

\articletype{ARTICLE TEMPLATE}

\title{Linear short rate model with several delays}

\author{
\name{ \'Alvaro Guinea Juli\'a\textsuperscript{a}\thanks{\'Alvaro Guinea Juli\'a. Email: agjulia@icai.comillas.edu}  and Alet Roux\textsuperscript{b}\thanks{Alet Roux. Email: alet.roux@york.ac.uk}}
\affil{ \textsuperscript{a}Comillas Pontifical University ICAI, Madrid, 28015, Spain; \textsuperscript{b}University of York, Heslington, YO10 5DD, United Kingdom.}
}

\maketitle

\begin{abstract}
This paper introduces a short rate model in continuous time that adds one or more memory (delay) components to the Merton model \citep{merton1970dynamic,merton1973theory} or the Vasi\v{c}ek model \citep{vasicek1977equilibrium} for the short rate. The distribution of the short rate in this model is normal, with the mean depending on past values of the short rate, and a limiting distribution exists for certain values of the parameters. The zero coupon bond price is an affine function of the short rate, whose coefficients satisfy a system of delay differential equations. This system can be solved analytically, obtaining a closed formula. An analytical expression for the instantaneous forward rate is given: it satisfies the risk neutral dynamics of the Heath-Jarrow-Morton model. Formulae for both forward looking and backward looking caplets on overnight risk free rates are presented. Finally, the proposed model is calibrated against forward looking caplets on SONIA rates and the United States yield curve.
\end{abstract}

\begin{keywords}
 Stochastic delay differential equations, Zero coupon bond,  Closed formula, Vasi\v{c}ek model.
\end{keywords}

\section{Introduction}

In the real world, interest rates react to the decisions made by central banks. This means that there are periods in which interest rates are low and periods in which interest rates are high. This behavior can be observed, for example, in the United States three month treasury rate; see Figure \ref{fig:BondRate}(a) which uses data from Yahoo Finance (\href{https://finance.yahoo.com/quote/%5EIRX?p=%5EIRX}{finance.yahoo.com}). Furthermore, interest rates tend to change in response to inflation or economic growth (see for example the work of Benhabib \cite{Benhabib2004}  and Lilley and Rogoff \cite{lilley2019case}). Inflation and economic growth change could provoke a response in central banks that vary interest rates to increase economic growth and decrease inflation. Hence, it can be argued that the value of interest rates depend on past events. Evidence of memory effects in interest rates can be found in the work of Duan and Jacobs \cite{duan2001short}, Meade and Maier \cite{MememoryMeade2003}, and Baillie \cite{BAILLIE19965}. This can also be seen in practice: Figure~\ref{fig:BondRate}(b) shows the sample auto-correlation function of the daily United States three month treasury rate.

\begin{filecontents}[overwrite]{BondPrices.csv}
Time,BondRate
2020-04-27,0.0009300000220537186
2020-05-04,0.000949999988079071
2020-05-11,0.001030000001192093
2020-05-18,0.0010999999940395355
2020-05-25,0.0012800000607967378
2020-06-01,0.0014499999582767487
2020-06-08,0.0015000000596046448
2020-06-15,0.001430000066757202
2020-06-22,0.0013300000131130218
2020-06-29,0.0013300000131130218
2020-07-06,0.0011999999731779098
2020-07-13,0.0010499999672174453
2020-07-20,0.0009799999743700027
2020-07-27,0.0008299999684095383
2020-08-03,0.0008799999952316285
2020-08-10,0.0008799999952316285
2020-08-17,0.0008799999952316285
2020-08-24,0.000949999988079071
2020-08-31,0.001030000001192093
2020-09-07,0.001030000001192093
2020-09-14,0.0008299999684095383
2020-09-21,0.0009000000357627869
2020-09-28,0.0008500000089406968
2020-10-05,0.0009300000220537186
2020-10-12,0.0009000000357627869
2020-10-19,0.0008299999684095383
2020-10-26,0.0008299999684095383
2020-11-02,0.0008299999684095383
2020-11-09,0.0008299999684095383
2020-11-16,0.0005799999833106995
2020-11-23,0.000729999989271164
2020-11-30,0.0007500000298023224
2020-12-07,0.0006499999761581421
2020-12-14,0.0007999999821186065
2020-12-21,0.0007500000298023224
2020-12-28,0.0006499999761581421
2021-01-04,0.0007999999821186065
2021-01-11,0.0007500000298023224
2021-01-18,0.0007000000029802323
2021-01-25,0.00048000000417232515
2021-02-01,0.00023000000044703483
2021-02-08,0.00035000000149011613
2021-02-15,0.00023000000044703483
2021-02-22,0.00035000000149011613
2021-03-01,0.00028000000864267347
2021-03-08,0.0002500000037252903
2021-03-15,4.999999888241291e-05
2021-03-22,0.00014999999664723872
2021-03-29,8.000000379979611e-05
2021-04-05,3.0000000260770322e-05
2021-04-12,8.000000379979611e-05
2021-04-19,0.00014999999664723872
2021-04-26,3.0000000260770322e-05
2021-05-03,8.000000379979611e-05
2021-05-10,4.999999888241291e-05
2021-05-17,3.0000000260770322e-05
2021-05-24,8.000000379979611e-05
2021-05-31,0.00014999999664723872
2021-06-07,0.00017999999225139617
2021-06-14,0.00028000000864267347
2021-06-21,0.00043000001460313797
2021-06-28,0.0003799999877810478
2021-07-05,0.00043000001460313797
2021-07-12,0.0003799999877810478
2021-07-19,0.00045000001788139346
2021-07-26,0.00039999999105930326
2021-08-02,0.00043000001460313797
2021-08-09,0.00045000001788139346
2021-08-16,0.00043000001460313797
2021-08-23,0.00043000001460313797
2021-08-30,0.00039999999105930326
2021-09-06,0.0003799999877810478
2021-09-13,0.00028000000864267347
2021-09-20,0.0002500000037252903
2021-09-27,0.00028000000864267347
2021-10-04,0.00039999999105930326
2021-10-11,0.0003799999877810478
2021-10-18,0.0005000000074505806
2021-10-25,0.00048000000417232515
2021-11-01,0.00039999999105930326
2021-11-08,0.00043000001460313797
2021-11-15,0.00045000001788139346
2021-11-22,0.00043000001460313797
2021-11-29,0.00043000001460313797
2021-12-06,0.0005000000074505806
2021-12-13,0.00029999999329447744
2021-12-20,0.0005799999833106995
2021-12-27,0.00032999999821186065
2022-01-03,0.0008799999952316285
2022-01-10,0.0011800000071525575
2022-01-17,0.0015800000727176665
2022-01-24,0.0017299999296665192
2022-01-31,0.0022300000488758086
2022-02-07,0.0033799999952316283
2022-02-14,0.0031799998879432677
2022-02-21,0.0029800000786781313
2022-02-28,0.003050000071525574
2022-03-07,0.0036500000953674314
2022-03-14,0.00382999986410141
2022-03-21,0.005199999809265137
2022-03-28,0.005
2022-04-04,0.006679999828338623
2022-04-11,0.0074800002574920655
2022-04-18,0.007799999713897705
2022-04-25,0.008080000281333924
2022-05-02,0.008149999976158142
2022-05-09,0.009430000185966491
2022-05-16,0.009779999852180482
2022-05-23,0.01024999976158142
2022-05-30,0.011349999904632568
2022-06-06,0.01287999987602234
2022-06-13,0.015099999904632568
2022-06-20,0.015980000495910644
2022-06-27,0.01603000044822693
2022-07-04,0.01843000054359436
2022-07-11,0.022400000095367432
2022-07-18,0.023550000190734863
2022-07-25,0.022980000972747803
2022-08-01,0.024100000858306884
2022-08-08,0.025
2022-08-15,0.025799999237060545
2022-08-22,0.02755000114440918
2022-08-29,0.028199999332427977
2022-09-05,0.029679999351501465
2022-09-12,0.03069999933242798
2022-09-19,0.031080000400543213
2022-09-26,0.03180000066757202
2022-10-03,0.03292999982833862
2022-10-10,0.03628000020980835
2022-10-17,0.038849999904632566
2022-10-24,0.039700000286102294
2022-10-31,0.04013000011444092
2022-11-07,0.04063000202178955
2022-11-14,0.04135000228881836
2022-11-21,0.041750001907348636
2022-11-28,0.04208000183105469
2022-12-05,0.04172999858856201
2022-12-12,0.041630001068115235
2022-12-19,0.04192999839782715
2022-12-26,0.042600002288818356
2023-01-02,0.04493000030517578
2023-01-09,0.0448799991607666
2023-01-16,0.04538000106811523
2023-01-23,0.04547999858856201
2023-01-30,0.04522999763488769
2023-02-06,0.04635000228881836
2023-02-13,0.046799998283386234
2023-02-20,0.04703000068664551
2023-02-27,0.04717999935150147
2023-03-06,0.048029999732971194
2023-03-13,0.04293000221252442
2023-03-20,0.04502999782562256
2023-03-27,0.045850000381469726
2023-04-03,0.04710000038146973
2023-04-10,0.049029998779296875
2023-04-17,0.04949999809265137
2023-04-24,0.04922999858856201
2023-05-01,0.05079999923706055
2023-05-08,0.050279998779296876
2023-05-15,0.05085000038146972
2023-05-22,0.051149997711181644
2023-05-29,0.052150001525878904
2023-06-05,0.050949997901916504
2023-06-12,0.05065000057220459
2023-06-19,0.05138000011444092
2023-06-26,0.05152999877929688
2023-07-03,0.052129998207092285
2023-07-10,0.052150001525878904
2023-07-17,0.05248000144958496
2023-07-24,0.05255000114440918
2023-07-31,0.05252999782562256
2023-08-07,0.05264999866485596
2023-08-14,0.05277999877929687
2023-08-21,0.05315000057220459
2023-08-28,0.0526800012588501
2023-09-04,0.05293000221252441
2023-09-11,0.05297999858856201
2023-09-18,0.05304999828338623
2023-09-25,0.05300000190734863
2023-10-02,0.053480000495910646
2023-10-09,0.053280000686645505
2023-10-16,0.05300000190734863
2023-10-23,0.05304999828338623
2023-10-30,0.05252999782562256
2023-11-06,0.0525
2023-11-13,0.052329998016357425
2023-11-20,0.05252999782562256
2023-11-27,0.052150001525878904
2023-12-04,0.052329998016357425
2023-12-11,0.052249999046325685
2023-12-18,0.052080001831054684
2023-12-25,0.05179999828338623
2024-01-01,0.052199997901916505
2024-01-08,0.05197999954223633
2024-01-15,0.05197999954223633
2024-01-22,0.051999998092651364
2024-01-29,0.052100000381469724
2024-02-05,0.052199997901916505
2024-02-12,0.052230000495910645
2024-02-19,0.05239999771118164
2024-02-26,0.052150001525878904
2024-03-04,0.052280001640319824
2024-03-11,0.052379999160766605
2024-03-18,0.052199997901916505
2024-03-25,0.05203000068664551
2024-04-01,0.052150001525878904
2024-04-08,0.052300000190734865
2024-04-15,0.05244999885559082
2024-04-22,0.052379999160766605
2024-04-29,0.05244999885559082
\end{filecontents}

\begin{filecontents}[overwrite]{Autocorrelation.csv}
lags,acf
0,1.0
1,0.9985730230705089
2,0.9971270591020875
3,0.9957272048493188
4,0.9943485026777767
5,0.992921387700081
6,0.991404373008677
7,0.989904815574117
8,0.9884412977843104
9,0.9869863208220001
10,0.985485306967453
11,0.9839109124761716
12,0.9823542492880138
13,0.980836517358005
14,0.9793280958607119
15,0.9777495194492299
16,0.9761368670763325
17,0.9745412200160888
18,0.97298237610843
19,0.9714325528558386
20,0.969793872131524
21,0.9680997337127446
22,0.966439862311348
23,0.9648030851725675
24,0.9631538140236674
25,0.9614368179063992
26,0.9596903636893891
27,0.9579602166956478
28,0.9562262919741875
29,0.9544700415736994
30,0.9526385451826439
31,0.9507832835478153
32,0.9489579272374193
33,0.9471479201773254
34,0.9453139389817813
35,0.9434136724857833
36,0.9415023272257655
37,0.9396139025678635
38,0.9377323206591133
39,0.9358168691251619
40,0.9338426937615016
41,0.9318503389990611
42,0.9298744068053031
43,0.9278868780224878
44,0.9258672136454733
45,0.9237876356132922
46,0.9217151313166688
47,0.9196624960053085
48,0.9176077800876788
49,0.9155138907082663
50,0.9134020828529174
51,0.911295373380701
52,0.9092038097480596
53,0.9070905678018845
54,0.904933274420182
55,0.9027401858849202
56,0.900558814932976
57,0.8983742321694466
58,0.8961738232041826
59,0.8939280055919577
60,0.8916521507820537
61,0.889403219393212
62,0.8871471922658477
63,0.8848606074052873
64,0.882532129493943
65,0.8801757144862846
66,0.8778291594543917
67,0.8754671715358477
68,0.8730626540971306
69,0.8706259196455668
70,0.8681762180498165
71,0.8657370155802302
72,0.863289665379164
73,0.8608008692606463
74,0.858299104731824
75,0.8557909235467848
76,0.8532915162099597
77,0.8507758179156601
78,0.8482277811894106
79,0.8456675870836112
80,0.843109276853282
81,0.8405440623498321
82,0.8379505637920928
83,0.8353462835868353
84,0.832745634494212
85,0.8301233252229681
86,0.8274815527374272
87,0.8248270125760436
88,0.8221599994437265
89,0.8194801955242049
90,0.816785107194504
91,0.8140670767854691
92,0.81132925115286
93,0.8085786636129143
94,0.8058195851946056
95,0.803058404345527
96,0.8002794085758514
97,0.797467837547766
98,0.7946415377275562
99,0.791809692965645
100,0.7889846331628383
101,0.7861479091613685
102,0.7833054368156745
103,0.7804626281859761
104,0.7776090490836337
105,0.774757674985171
106,0.771879064787654
107,0.7689756578130635
108,0.7660730355652186
109,0.7631522381185846
110,0.7602328017977934
111,0.7573048281159105
112,0.754371591236344
113,0.7514240749237353
114,0.7484805006588664
115,0.7455374477019495
116,0.7425692297405653
117,0.7395826978476808
118,0.7365748035613082
119,0.7335689771267542
120,0.73055092821553
121,0.727505595349634
122,0.7244415100737224
123,0.7213491380781194
124,0.718261743039323
125,0.7151569311213473
126,0.71201172529123
127,0.7088476799666147
128,0.7056900965145347
129,0.7025394097618647
130,0.6993895834098773
131,0.6962136913985056
132,0.6930175759880792
133,0.6898286275816223
134,0.6866366317027357
135,0.683416688079539
136,0.6801725348783116
137,0.6769215594583374
138,0.6736781057383485
139,0.6704342272606463
140,0.6671821368700793
141,0.6638980963967949
142,0.6606030533184615
143,0.6573301920020611
144,0.6540625234955967
145,0.6507547631054216
146,0.6474353261994298
147,0.6441207825038611
148,0.6408116889117021
149,0.6374956263674564
150,0.6341413601096881
\end{filecontents}

\begin{figure}
 \begin{center}
\begin{tikzpicture}
\pgfplotstableread[col sep=comma,]{BondPrices.csv} {\prices}
\pgfplotstableread[col sep=comma,]{Autocorrelation.csv} {\acf}
    \begin{groupplot}[group style={group name=my plots, group size=2 by 1, horizontal sep = 1cm}, width = 7.5cm, height=6cm]
        \nextgroupplot[proxystyle]
        \addplot[color1] table [x expr={\coordindex}, y={BondRate}] {\prices};      
        \nextgroupplot[SimStyle,xlabel={Lag (days)}]
        \addplot[color1] table [x expr={\coordindex}, y={acf} ] {\acf};
    \end{groupplot}
\node[align=center,anchor=north] at ([yshift=-12mm]my plots c1r1.south) {(a) Three month Treasury rate};
\node[align=center,anchor=north] at ([yshift=-12mm]my plots c2r1.south) {(b) Sample auto-correlation function};
\end{tikzpicture}
\end{center}
\caption{ Daily United States three month treasury rate, 1 May 2020---1 May 2024.}
\label{fig:BondRate}
\end{figure}

In this paper, we propose a short rate model that depends on its past values through delay terms, making it non-Markovian. It is a direct extension of the well-known Vasi\v{c}ek model \citep{vasicek1977equilibrium}, and hence also extends the Merton model \citep{merton1970dynamic,merton1973theory}. Our model leads to an analytical formula for the zero-coupon bond price, and it is well suited to calibration and the pricing of derivatives such as caplets. The model is effectively a continuous-time version of a simple autoregressive time series model---autoregressive time series having been used to successfully model interest rates \citep{HANSEN2021106302,Lanne2003}. As shown in Section \ref{sect:estimation}, the proposed model is able to capture the short-memory behavior of short rate time series.  Another characteristic of this model is that it can generate negative values for short rates, a realistic future in view of the fact that there have been periods with negative interest rates in recent years \citep{inhoffen2021low}.

There is a small but growing body of literature on financial models based on delay differential equations. Flore and Nappo \cite{flore2019feynman} proposed a delayed version of the Cox-Ingersoll-Ross model for the short rate, deriving zero coupon bond prices in terms of the solution of a system of delay differential equation (where the solution cannot be found explicitly). Coffie \cite{Coffie+2023+67+89,Coffie2024} presented several variations of the delayed A{\"\i}t-Sahalia short rate model with jumps. More generally, in the realm of equity options valuation,  Arriojas et al. \cite{arriojas2007delayed} introduced a delayed version of the renowned Black-Scholes-Merton model, and Kim, Kim and Jo \cite{kim_kim_jo_2022} extended this model to include jumps. Lee, Kim and Kim \cite{LEE20112909} proposed a delayed geometric Brownian motion with stochastic volatility. Kazmerchuk, Swishchuk and Wu \cite{KAZMERCHUK200769} incorporated a delay parameter into a local volatility model to price European options. Furthermore, Swishchuk and Xu \cite{swishchuk2011pricing} and Swishchuk and Vadori \cite{swishchuk2014smiling} constructed stochastic volatility models with delay to model variance swaps. More recently, G{\'o}mez-Valle and Mart{\'i}nez-Rodr{\'i}guez \cite{gomez2023estimating} introduced a delayed variation of geometric Brownian motion to price commodity futures.

Our model is based on a delayed version of the Ornstein-Uhlenbeck process with Gaussian noise. Delayed versions of this process have been explored in the literature in various contexts. K{\"u}chler and Mensch \cite{kuchler1992langevins} examined the delayed Ornstein-Uhlenbeck process in detail, determining the limiting distribution and proving the existence of a stationary solution. Mackey and Nechaeva \cite{mackey1995solution} studied the moment stability of linear stochastic delay differential equations. Basse-O'Connor et al. \cite{basse2020stochastic} proved the existence of strong solutions for the delayed Ornstein-Uhlenbeck process with L\'evy noise and compared it with ARMA time series models. Ott \cite{ott2006ornstein} investigated the stability properties of paths generated by the delayed Ornstein-Uhlenbeck process. 

In recent years, overnight rates have emerged as the new standard for interest rate benchmarks, replacing interbank offered rates (IBOR). Examples of these overnight rates include the Sterling Overnight Index Average (SONIA) in the United Kingdom, the Secured Overnight Financing Rate (SOFR) in the United States, and the Euro Short-Term Rate (€STR) in the European Union. In this paper, we price caplets on overnight rates, based on the framework proposed by Lyashenko and Mercurio \cite{lyashenko10looking}, which relies on the concepts of extended zero-coupon bonds and the extended forward measure \citep[Section 4.2.4]{andersen2010interest}. The structure of the model enables us to derive the extended zero-coupon bond and extended forward measure from the conventional definitions of the zero-coupon bond and forward measure. Lyashenko and Mercurio \cite{lyashenko10looking} provided a continuous approximation for the daily published overnight rates, which we use to define both the backward-looking forward rate and the forward-looking rate. We develop a formula in our model that is capable of pricing caplets on both backward and forward-looking rates. 

Classical short-rate models have been applied to price caplets on overnight rates. For instance, Rutkowski and Bickersteth \cite{rutkowski2021pricing} utilized the Vasi\v{c}ek model to price SOFR derivatives, while Turfus \cite{turfus2022caplet} derived a closed-form formula for pricing caplets on backward-looking rates using the Hull-White model. Furthermore, Fontana \cite{Fontana2023} derives explicit formulas for caplets on backward-looking rates on general affine models. Liu and Song \cite{Liu01112024} used continuous-time Markov chain to approximate short rate models and price caplets on risk-free rates. More recently, Fontana, Grbac, and Schmidt \cite{Fontana2024} proposed a Hull-White model with scheduled jumps at regular intervals to price caplets on backward-looking rates. The authors showed empirical evidence of these scheduled jumps in risk-free rates. These jumps are also considered by Fang, Yeh, He and Lin \cite{FANG2024102392}, Brace, Gellert and Schl\"ogl \cite{Brace2024} and Schl\"ogl, Skov and Skovmand \cite{SKOV2024}. It would be possible to add scheduled jumps to the model proposed in \eqref{eq:r}; however, this is outside of the scope of this paper, and is left as future work.

Regarding derivative valuation, Arriojas et al. \cite{arriojas2007delayed} showed that delayed models can reproduce volatility smiles and skews. The term structure of implied volatilities given by caplets behaves differently for short and long maturities \citep{brigo2006interest}. We show in the numerical section that introducing a delay parameter allows the model to capture the differences between caplets with short and long maturities, in a way that the classical models do not.

The paper is structured as follows. In Section \ref{Sect:IntroModel}, we introduce the model and derive an analytical solution to the stochastic delay differential equation. In Section \ref{sec:laplacetransform}, we calculate the joint Laplace transform of the integrated short rate and the short rate itself. This result enables us to determine both the price of zero-coupon bonds and the distribution of the short rate. Section \ref{Sect:ZCB} is devoted to the presentation of a closed-form formula for the zero-coupon bond price with. In Section \ref{Sect:Dist}, we analyze the distribution of the short rate, and provide conditions for the existence a limiting distribution. An analytical expression for the instantaneous forward rate is derived in Section \ref{sect:forward}, and then used to obtain the implied short rates from past years; see Section \ref{sect:yield}. In Section \ref{sec:caplets}, we derive an analytical formula for pricing caplets on overnight risk-free rates. Estimation and calibration against real market data are presented in Section \ref{sec:num_exp_pract_app}. Finally, Appendix \ref{sec:change-of-measure} discusses a structure-preserving change of measure.

\section{Model with delay} \label{Sect:IntroModel}

Let $(\Omega,\mathcal{F},\mathds{Q},\left(\mathcal{F}_t\right)_{t\geq 0})$ be a filtered probability space. Let us assume that the short rate $r$ follows the stochastic delay differential equation
\begin{equation}
dr_t =  \left(a(t) + b r_t + \sum_{j=1}^N c_j r_{t-\tau_j}\right)dt + \sigma(t) dW_t \label{eq:r}
\end{equation} 
for all $t\ge0$, where $N\in\mathbb{N}$, $b,c_1,c_2,\ldots,c_N\in\mathds{R}$, $W=(W_t)_{t\geq0}$ is a Brownian motion with respect to $\left(\mathcal{F}_t\right)_{t\geq 0}$, $a:[0,\infty]\to \mathds{R}$, $\sigma:[0,\infty]\to (0,\infty)$ are two continuous functions, and $\tau_N >\tau_{N-1}> \ldots> \tau_1>0$. The initial condition is
\begin{equation}
 r_t = \phi(t) \text{ for }  t\in (-\infty,0], \label{eq:phi}
\end{equation}
where $\phi:(-\infty,0]\rightarrow\mathds{R}$ is a deterministic integrable function. In most practical applications one would only use values of $\phi$ for $t\in[-\tau_N,0]$.

We will take $\mathds{Q}$ to be a risk-neutral probability. It is possible to make a structure-preserving change of measure from the real-world probability $\mathds{P}$; see Appendix \ref{sec:change-of-measure} for details.

This model generalizes the Merton model (which is the special case $N=1$, $b=c_1=0$) as well as the Vasi\v{c}ek model (the special case with $b<0$ and $N=1$, $c_1=0$) in a number of different ways, depending on the values of the parameters $N$, $b$ and $c_1,\ldots,c_N$. For example, taking $a(t)=a \in \mathds{R}$, $N=1$, $b=0$, $c_1<0$ and $\sigma(t) = \sigma >0$ leads to
\[
dr_t =  \left(a + c_1r_{t-\tau_1}\right)dt + \sigma dW_t,
\]
which is a Vasi\v{c}ek model with a single delay term, with mean reversion speed of $\frac{1}{\lvert c_1\rvert}$ and long-term mean $\frac{a}{\lvert c_1\rvert}$. In its most general form (when $b<0$ and $c_1\neq0$), it can be interpreted as a Vasi\v{c}ek model with speed of mean reversion $\frac{1}{\lvert b\rvert}$, and short term memory, in the sense that the usual Vasi\v{c}ek long-term mean term $\frac{a}{\lvert b\rvert}$ is replaced by $\frac{a}{\lvert b\rvert}+\frac{c}{\lvert b\rvert}r_{t-\tau_1}$. 

We will show below that the stochastic differential equation \eqref{eq:r} has a unique strong solution, and it is possible to derive an explicit analytical formula for it. The explicit solution of equation \eqref{eq:r} uses the solution of the deterministic delay differential equation with initial value
\begin{equation}
\left.
\begin{aligned}
    x'(t) & = bx(t) + \sum_{j=1}^N c_j x(t-\tau_j) & & \text{ when } t>0 \\
    x(t) & = g(t) & & \text{ when } t\in [-\infty,0],
    \end{aligned}
    \right\} \label{eq:deter}
\end{equation} where $g:[-\infty,0] \to \mathds{R}$ is a continuous deterministic function with at most exponential growth, so that the Laplace transforms exist. The following result makes use of the function $R:\mathds{R}\rightarrow\mathds{R}$ defined as 
\begin{equation} \label{eq:R}
 R(t) = 
 \begin{cases}
 \displaystyle \sum_{n=0}^{\lfloor\frac{t}{\tau_1}\rfloor}  \sum_{|\alpha|= n} \frac{c^\alpha}{\alpha !} \left(t- \langle \alpha,\tau \rangle\right)^n e^{b\left(t-\langle \alpha,\tau \rangle\right)} H\left(t- \langle \alpha,\tau \rangle\right)   &\text{ when  }  t\ge0,\\
 0  &\text{ when  }t\in[-\infty,0),
 \end{cases} 
 \end{equation} where $H=\mathds{1}_{[0,\infty)}$ is the Heaviside function and the second summation in \eqref{eq:R} is expressed using multi-index notation with $\alpha=(\alpha_1,\alpha_2,\ldots,\alpha_N)$, $|\alpha| = \sum_{i=1}^N \alpha_i $, $c^\alpha= \prod_{i=1}^N c_i^{\alpha_i}$, $\alpha!= \prod_{i=1}^N \alpha_i!$, $ \langle \alpha,\tau \rangle = \sum_{i=1}^N \alpha_i \tau_i$, and $\alpha_i\geq 0$ for every $i=1,\ldots,N$.

\begin{proposition}\label{prop:DeterDelay}
    The unique solution of \eqref{eq:deter} is
    \[ x(t) = x(0) R(t) + \sum_{j=1}^N c_j \int_{0}^{\tau_j} R(t-s) g(s-\tau_j) ds \text{ for } t\geq 0,\] where $R$ is defined in  \eqref{eq:R}.
\end{proposition}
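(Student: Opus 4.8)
The plan is threefold: (i) establish uniqueness for \eqref{eq:deter} by the method of steps; (ii) show that $R$ is the fundamental (resolvent) solution of the associated homogeneous delay equation; and (iii) recover the stated formula by superposition against that fundamental solution. For (i), I would note that on each interval $[n\tau_1,(n+1)\tau_1]$, $n\ge0$, every delayed argument $t-\tau_j$ with $t$ in that interval lies in $(-\infty,n\tau_1]$ because $\tau_j\ge\tau_1$; hence on that interval \eqref{eq:deter} reduces to a linear first-order ODE $x'(t)=bx(t)+\psi_n(t)$ in which the forcing $\psi_n$ is already known and the value $x(n\tau_1)$ has already been fixed. By Picard--Lindel\"of this ODE has a unique solution, and induction on $n$ gives uniqueness (and existence) on $[0,\infty)$; the matching condition $x(0)=g(0)$ removes the potential corner at $t=\tau_1$, and similarly at later breakpoints, so the solution is in fact $C^1$.

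For (ii) I would write $R=\sum_{\alpha}p_\alpha$, where $p_\alpha(t)=\frac{c^\alpha}{\alpha!}\,(t-\langle\alpha,\tau\rangle)^{|\alpha|}e^{b(t-\langle\alpha,\tau\rangle)}H(t-\langle\alpha,\tau\rangle)$ and $\alpha$ ranges over all multi-indices; on any bounded interval only the finitely many $\alpha$ with $|\alpha|\tau_1\le\langle\alpha,\tau\rangle\le t$ contribute, so the sum is locally finite and may be differentiated termwise. A short computation then gives $p_0'=bp_0+\delta_0$ and, for $\alpha\ne0$, $p_\alpha'=bp_\alpha+\sum_{j:\alpha_j\ge1}c_j\,p_{\alpha-e_j}(\,\cdot\,-\tau_j)$, where $e_j$ is the $j$-th unit multi-index: differentiating the Heaviside factor produces a Dirac mass at $\langle\alpha,\tau\rangle$ that is annihilated by the prefactor $(t-\langle\alpha,\tau\rangle)^{|\alpha|}$ whenever $\alpha\ne0$, and the remaining term is identified using $c_jc^{\alpha-e_j}=c^\alpha$, $(\alpha-e_j)!=\alpha!/\alpha_j$ and $\sum_j\alpha_j=|\alpha|$. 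Summing over $\alpha$ and relabelling $\beta=\alpha-e_j$ in the inner sum yields, in the sense of distributions, $R'=bR+\sum_{j=1}^N c_j R(\,\cdot\,-\tau_j)+\delta_0$; equivalently $R$ is continuous on $[0,\infty)$ with $R(0)=1$, vanishes on $(-\infty,0)$, and satisfies the homogeneous delay equation classically on $(0,\infty)$.

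For (iii), let $x$ be the solution of \eqref{eq:deter} and extend it by $0$ to $t<0$. Splitting $x(t-\tau_j)=x(t-\tau_j)\mathds{1}_{(0,\infty)}(t-\tau_j)+g(t-\tau_j)\mathds{1}_{(0,\tau_j)}(t)$ in \eqref{eq:deter} and accounting for the jump of the extension at $t=0$ shows that the extension satisfies $x'=bx+\sum_j c_j x(\,\cdot\,-\tau_j)+g(0)\delta_0+f$ on $\mathds{R}$, where $f(t)=\sum_{j=1}^N c_j g(t-\tau_j)\mathds{1}_{(0,\tau_j)}(t)$. Convolving with the fundamental solution $R$ from step (ii) inverts the delay operator and gives, for $t\ge0$, $x(t)=g(0)R(t)+(R*f)(t)=x(0)R(t)+\sum_{j=1}^N c_j\int_0^{\tau_j}R(t-s)g(s-\tau_j)\,ds$, which together with (i) proves the proposition. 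As an alternative to steps (ii)--(iii) one can argue entirely with Laplace transforms: the series for $R$ has transform $(z-b-\sum_j c_je^{-z\tau_j})^{-1}$, obtained by expanding a geometric series and inverting termwise, and one checks that the transform of the right-hand side of the asserted formula satisfies the transformed equation, so the two functions coincide by injectivity of the Laplace transform.

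I expect the main obstacle to be the bookkeeping in step (ii): carrying out the multi-index differentiation and re-indexing exactly, and in particular verifying that the Dirac masses produced by differentiating the Heaviside factors cancel for every $\alpha\ne0$ and survive only as the single $\delta_0$ coming from $p_0$. The analogous care with the jump of $R$ at the origin will be needed when differentiating the convolution integral in step (iii) (or, in the Laplace approach, when justifying convergence and inversion of the series).
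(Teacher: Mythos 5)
Your proof is correct, but your primary route is genuinely different from the paper's. The paper works entirely on the Laplace-transform side: it rewrites \eqref{eq:deter} in the form \eqref{eq:deter2}, transforms it, expands $\bigl(s-b-\sum_{j}c_je^{-\tau_js}\bigr)^{-1}$ as a geometric series, applies the multinomial theorem, and inverts term by term, so that the multi-index series \eqref{eq:R} for $R$ and the solution formula are \emph{derived} in a single pass; this is essentially the ``alternative'' you sketch in your final sentences. Your main argument instead takes \eqref{eq:R} as given, verifies by locally finite termwise differentiation that $R$ is the fundamental solution satisfying $R'=bR+\sum_{j=1}^Nc_jR(\cdot-\tau_j)+\delta_0$ in the distributional sense --- your recursion $p_\alpha'=bp_\alpha+\sum_{j:\alpha_j\ge1}c_jp_{\alpha-e_j}(\cdot-\tau_j)$, the cancellation of the Dirac masses for $\alpha\neq0$, and the re-indexing $\beta=\alpha-e_j$ all check out --- and then produces the formula by convolving $R$ against the forcing $g(0)\delta_0+f$. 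The paper's route buys a derivation rather than a verification, at the price of justifying the geometric-series expansion (valid for $\Re s$ large) and the injectivity and term-by-term inversion of the Laplace transform; your route is transform-free and has the side benefit of directly proving the initial value problem \eqref{eq:Rdeter}, which the paper only states as a consequence of Proposition \ref{prop:DeterDelay}. The one step you should make explicit is the claim that convolution with the fundamental solution inverts the forced delay operator: it follows in two lines from $ (R*F)'=R'*F$ for distributions supported on $[0,\infty)$ together with the translation-invariance of convolution, and uniqueness from your step (i) then identifies $R*F$ with $x$. Both arguments rest on the same method-of-steps existence and uniqueness, which you state correctly.
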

\begin{proof} Existence and uniqueness come from direct application of the method of steps; see Remark 3.3 in \cite{smith2011introduction}. To obtain the explicit solution of \eqref{eq:deter}, we make used of the Laplace transform, which, for suitable $f:[0,\infty)\rightarrow\mathds{C}$ and $s\in\mathds{C}$, is defined as
 \[
  L_f(s)  = \int_0^\infty f(u) e^{-s u} du.
 \]The initial value problem \eqref{eq:deter} can be expressed as
\begin{align}
       x'(t) 
       & =  bx(t) + \sum_{j=1}^N c_j g(t-\tau_j) \mathds{1}_{[0,\tau_j)}(t) +  \sum_{j=1}^N c_j x(t-\tau_j) H(t-\tau_j). \label{eq:deter2}
\end{align} Let us define $L_x$ and $L_{g_j}$ as the Laplace transforms of $x(.)$ and $g(.-\tau_j)\mathds{1}_{[0,\tau_j)}(.)$ respectively. By properties of the Laplace transform \citep[see, for example][Theorems 2.1, 2.4, Appendix B]{Dyke2014}, equation \eqref{eq:deter2} can be written as
\begin{align}
       sL_x(s) - x(0)
       & =  bL_x(s) + \sum_{j=1}^N c_j L_{g_j}(s) +  \sum_{j=1}^N c_je^{-\tau_j s} L_x(s).  \label{eq:deter3}
\end{align} Rearranging equation \eqref{eq:deter3} and using the formula for the sum of a geometric series,  we obtain 
\begin{align}
    L_x(s) & = \frac{x(0) + \sum_{j=1}^N c_j L_{g_j}(s)}{s-b-\sum_{j=1}^N c_j e^{-\tau_j s}} \nonumber\\
    & = \frac{x(0) + \sum_{j=1}^N c_j L_{g_j}(s)}{s-b}\frac{1}{1 -\frac{\sum_{j=1}^N c_j e^{-\tau_j s}}{s-b}} \nonumber \\
    & =  \frac{x(0) + \sum_{j=1}^N c_j L_{g_j}(s)}{s-b} \sum_{n=0}^\infty  \frac{\left(\sum_{j=1}^N c_j e^{-\tau_j s}\right)^n}{(s-b)^n} \nonumber\\
    & = \left( x(0) + \sum_{j=1}^N c_j L_{g_j}(s) \right)  \sum_{n=0}^\infty \sum_{|\alpha|= n} \frac{n! }{\alpha!} \frac{c^\alpha e^{- \langle \alpha,\tau \rangle s}}{(s-b)^{n+1}} \nonumber \\
    & =  x(0)  \sum_{n=0}^\infty \sum_{|\alpha|= n} \frac{n! }{\alpha!} \frac{c^\alpha e^{- \langle \alpha,\tau \rangle s}}{(s-b)^{n+1}} + \sum_{j=1}^N \sum_{n=0}^\infty \sum_{|\alpha|= n}  c_j \frac{n! }{\alpha!} \frac{c^\alpha e^{- \langle \alpha,\tau \rangle s}}{(s-b)^{n+1}}  L_{g_j}(s),\label{eq:Laplace_end}
\end{align} where the multinomial theorem is used in the second to last equality. The results in \eqref{eq:Laplace_end} are well defined as long as $s\neq b$ and $\left| \frac{\sum_{j=1}^N c_j e^{-\tau_j s}}{s-b}  \right|<1$.

By properties of the inverse Laplace transform \citep[see, for example][Example 1.2, Theorems 2.4 and 3.2]{Dyke2014}, we have that
\begin{align}
& \mathcal{L}^{-1} \left[  \frac{ n! e^{- \langle \alpha,\tau \rangle s}}{(s-b)^{n+1}} \right] (t)  =  \left(t- \langle \alpha,\tau \rangle\right)^n e^{b\left(t-\langle \alpha,\tau \rangle\right)} H\left(t- \langle \alpha,\tau \rangle\right), \label{eq:Lap1}\\
& \mathcal{L}^{-1} \left[  \frac{ n! e^{- \langle \alpha,\tau \rangle s}}{(s-b)^{n+1}} L_{g_j}(s) \right] (t)  \nonumber\\
& \qquad= \int_0^t   \left(t-s- \langle \alpha,\tau \rangle\right)^n e^{b\left(t-s-\langle \alpha,\tau \rangle\right)} H\left(t-s- \langle \alpha,\tau \rangle\right) g(s-\tau_j) \mathds{1}_{[0,\tau_j)}(s) ds.\label{eq:Lap2}
\end{align}
Equations \eqref{eq:Lap1}-\eqref{eq:Lap2} allow us to obtain the desired result when applying the inverse Laplace transform to equation \eqref{eq:Laplace_end}.
\end{proof}

From Proposition \eqref{prop:DeterDelay}, it follows that the function $R$ in  \eqref{eq:R} satisfies the initial value problem
\begin{equation}
\left.\begin{aligned}
    R'(t) & = bR(t) + \sum_{j=1}^N c_j R(t-\tau_j)  &&\text{when } t>0, \\
    R(t) & = \mathds{1}_{\{0\}}(t) &&\text{when } t\in [-\infty,0].
    \end{aligned}\right\} \label{eq:Rdeter}
\end{equation}  Finally we can get the analytical solution of equation \eqref{eq:r}. 

\begin{proposition} \label{prop:sol:SDE}
The strong solution $(r_t)_{t\ge0}$ of the stochastic differential equation \eqref{eq:r} with initial condition \eqref{eq:phi} is given by
\begin{multline} \label{eq:2.5}
 r_t  = R(t)r_0 + \int_0^t a(s)R(t-s)ds + \sum_{j=1}^N c_j\int_{-\tau_j}^0 R(t-s-\tau_j)\phi(s)ds\\
 + \int_0^t \sigma(s) R(t-s)dW_s 
\end{multline}
for all $t\ge0$, where $R$ is defined in \eqref{eq:R}. 
\end{proposition}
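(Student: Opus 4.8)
The plan is to first establish existence and uniqueness of a strong solution by the method of steps, and then to verify that the right-hand side of \eqref{eq:2.5} is a solution, with the asserted equality following from uniqueness. For existence and uniqueness, observe that on $[0,\tau_1]$ every delayed value $r_{t-\tau_j}$ equals the known deterministic quantity $\phi(t-\tau_j)$, so \eqref{eq:r} reduces to an inhomogeneous linear (Ornstein--Uhlenbeck type) stochastic differential equation with deterministic coefficients, which admits a unique strong solution given by the classical variation-of-constants formula. Iterating over the successive intervals $[k\tau_1,(k+1)\tau_1]$ for $k\ge1$ --- on each of which all delayed terms have been determined in the previous steps --- yields a unique strong solution on $[0,\infty)$; see Remark 3.3 in \cite{smith2011introduction}.

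For the explicit formula, I would split the candidate process as $r_t = m_t + V_t$, where $m_t$ collects the first three (deterministic) terms of \eqref{eq:2.5} and $V_t=\int_0^t\sigma(s)R(t-s)\,dW_s$. To see that $m$ solves the inhomogeneous delay differential equation $m'(t)=a(t)+bm(t)+\sum_{j=1}^Nc_jm(t-\tau_j)$ for $t>0$ with $m(t)=\phi(t)$ for $t\le0$: the substitution $s\mapsto s-\tau_j$ turns $R(t)r_0+\sum_{j=1}^Nc_j\int_{-\tau_j}^0R(t-s-\tau_j)\phi(s)\,ds$ into exactly the solution of \eqref{eq:deter} furnished by Proposition~\ref{prop:DeterDelay} with $g=\phi$ and $x(0)=r_0$; and differentiating $\int_0^t a(s)R(t-s)\,ds$, using $R(0)=1$ together with \eqref{eq:Rdeter}, shows that this last term supplies precisely the Duhamel correction for the inhomogeneity $a$ and vanishes at $t=0$. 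At $t=0$ the candidate evaluates to $r_0$, since $R(0)=1$ and the remaining integrals are over sets of measure zero.

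It then remains to check that $V$, with the convention $V_t:=0$ for $t\le0$, solves $dV_t=\big(bV_t+\sum_{j=1}^Nc_jV_{t-\tau_j}\big)dt+\sigma(t)\,dW_t$. Writing $R(t-s)=1+\int_s^tR'(u-s)\,du$ for $0\le s\le t$ and applying the stochastic Fubini theorem gives
\[
V_t=\int_0^t\sigma(s)\,dW_s+\int_0^t\Big(\int_0^u\sigma(s)R'(u-s)\,dW_s\Big)du,
\]
so that $dV_t=\sigma(t)\,dW_t+\big(\int_0^t\sigma(s)R'(t-s)\,dW_s\big)dt$. Substituting $R'(t-s)=bR(t-s)+\sum_{j=1}^Nc_jR(t-s-\tau_j)$ from \eqref{eq:Rdeter}, and using that $R$ vanishes on $(-\infty,0)$, hence $\int_0^t\sigma(s)R(t-s-\tau_j)\,dW_s=\int_0^{t-\tau_j}\sigma(s)R(t-\tau_j-s)\,dW_s=V_{t-\tau_j}$ (and equals $0$ when $t<\tau_j$), identifies the drift of $V$ as $bV_t+\sum_{j=1}^Nc_jV_{t-\tau_j}$. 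By linearity $m+V$ then solves \eqref{eq:r}--\eqref{eq:phi}, and the uniqueness established above forces it to coincide with $r$.

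I expect the main technical point to be the justification of the stochastic Fubini step and of the differentiation under the stochastic integral, because $R$ is only piecewise $C^1$, with derivative discontinuities at the points $\langle\alpha,\tau\rangle$. This should be harmless: on any $[0,T]$ there are only finitely many such points, each smooth piece of $R$ and of $R'$ is bounded, and $\sigma$ is continuous hence bounded on $[0,T]$, so the integrability hypotheses of the stochastic Fubini theorem are met and the almost-everywhere validity of \eqref{eq:Rdeter} suffices throughout. A small amount of care is also needed because $\phi$ is only assumed integrable rather than continuous, but this affects only the drift of the linear SDEs arising in the method of steps, for which the variation-of-constants solution remains valid under local integrability of the drift.
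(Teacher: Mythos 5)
Your proposal is correct and follows essentially the same route as the paper: existence and uniqueness by the method of steps, then verification of the explicit formula by applying the stochastic Fubini theorem to rewrite $\int_0^t\sigma(s)R(t-s)\,dW_s$ and invoking the initial value problem \eqref{eq:Rdeter} satisfied by $R$ (together with Proposition~\ref{prop:DeterDelay} for the deterministic terms). Your additive split $r=m+V$ is only a cosmetic repackaging of the paper's decomposition $r_t=f(t)+\int_0^t\sigma(s)\,dW_s$, and your closing remarks on the piecewise smoothness of $R$ and the integrability of $\phi$ address points the paper leaves implicit.
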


\begin{proof}
As in the proof of Proposition \ref{prop:DeterDelay}, existence and uniqueness come from applying the method of steps \cite[p.~158]{mao1997}. We will show that the process \eqref{eq:2.5} satisfies equation \eqref{eq:r}. We use the stochastic Fubini's Theorem, which allows us to express the stochastic integral in \eqref{eq:2.5} as \begin{align}
\int_0^t \sigma(s) R(t-s)dW_s & =  \int_0^t \sigma(s) \left[\int_0^{t-s} R'(u) du + R(0) \right] dW_s\nonumber\\
&= \int_{0}^{t}\sigma(s) \left[ \int_{0}^{t} R'(u)\,\mathbf{1}_{\{u \le t - s\}} \,du \right] dW_{s} + \int_0^t \sigma(s) dW_s \nonumber\\
&= \int_{0}^{t} \left[ \int_{0}^{t} \sigma(s) \mathbf{1}_{\{s \le t-u\}}\, dW_{s} \right] R'(u)\,du  + \int_0^t \sigma(s) dW_s \nonumber \\
&= \int_{0}^{t} R'(u)\left[ \int_{0}^{t - u} \sigma(s) dW_{s} \right] du + \int_0^t \sigma(s) dW_s \nonumber \\
& = \int_{0}^{t} R'(t-u) \left[ \int_{0}^{u} \sigma(s) dW_{s} \right]du + \int_0^t \sigma(s) dW_s .\label{eq:StochInt1}
\end{align} Due to the fact that $R$ satisfies the initial value problem \eqref{eq:Rdeter}, we arrive at
\begin{align}
 & \int_{0}^{t} R'(t-u) \left[ \int_{0}^{u} \sigma(s) dW_{s} \right]du  \nonumber \\
 & \quad =  \int_{0}^{t} bR(t-u) \left[ \int_{0}^{u} \sigma(s) dW_{s} \right] du + \sum_{j=1}^N \int_{0}^{t} c_jR(t-u-\tau_j) \left[ \int_{0}^{u} \sigma(s) dW_{s} \right] du\nonumber\\
& \quad = \int_{0}^{t} bR(t-u) \left[ \int_{0}^{u} \sigma(s) dW_{s} \right] du + \sum_{j=1}^N  \int_{0}^{t-\tau_j} c_jR(t-u-\tau_j) \left[ \int_{0}^{u} \sigma(s) dW_{s} \right] du, \label{eq:IntStoch}
\end{align} where the last equality comes from the fact that $R(t) = 0$ when $t< 0$.
This implies that the process $r$ in \eqref{eq:2.5} can be expressed as 
\begin{equation}\label{eq:decomp}
    r_t =  f(t) + \int_0^t \sigma(s) dW_s \text{ for } t\geq 0,
\end{equation} where $f$ is defined as
\begin{multline*}
    f(t) = R(t)r_0  +
    \int_0^t a(s) R(t-s)ds  +  \sum_{j=1}^N c_j \int_{-\tau_j}^0 R(t-s-\tau_j)\phi(s)ds \\
    + \int_{0}^{t} bR(t-s) \left[ \int_{0}^{s} \sigma(u) dW_{u} \right] ds + \sum_{j=1}^N c_j  \int_{0}^{t-\tau_j} R(t-s-\tau_j) \left[ \int_{0}^{s} \sigma(u) dW_{u} \right] ds
\end{multline*}
for all $t\ge0$. Applying the Leibniz rule, we arrive at
\begin{align}
    f'(t) & = R'(t)r_0  + a(t)R(0) +\int_0^t a(s) R'(t-s) ds\nonumber\\
    & \quad + \sum_{j=1}^N c_j \int_{-\tau_j}^0 R'(t-s-\tau_j)\phi(s)ds  \nonumber \\
    & \quad  +  bR(0)\int_{0}^{t} \sigma(u) dW_{u} + \int_{0}^{t} bR'(t-s) \left[ \int_{0}^{s} \sigma(u) dW_{u} \right] ds  \nonumber \\
    & \quad  + \sum_{j=1}^N c_jR(0) \int_{0}^{t-\tau_j} \sigma(u) dW_{u}  + \sum_{j=1}^N c_j \int_{0}^{t-\tau_j}  R'(t-s-\tau_j) \left[ \int_{0}^{s} \sigma(u) dW_{u} \right] ds .\label{eq:f1}
\end{align} Equation \eqref{eq:StochInt1} together with the fact that $R(0)=1$, gives us
\begin{align}
&bR(0)\int_{0}^{t} \sigma(u) dW_{u} + \int_{0}^{t} bR'(t-s) \left[ \int_{0}^{s} \sigma(u) dW_{u} \right] ds  = b \int_0^t \sigma(s) R(t-s)dW_s \label{eq:b1}\\
& c_jR(0) \int_{0}^{t-\tau_j} \sigma(u) dW_{u} 
 + \int_{0}^{t-\tau_j} c_j R'(t-s-\tau_j) \left[ \int_{0}^{s} \sigma(u) dW_{u} \right] ds\nonumber\\
 & = c_j \int_0^{t-\tau_j} \sigma(s) R(t-s-\tau_j)dW_s.\label{eq:c1}
\end{align} Equations \eqref{eq:b1}--\eqref{eq:c1} and the initial value problem \eqref{eq:Rdeter} allow us to write equation \eqref{eq:f1} as
\begin{align}
    f'(t) & = br_0 R(t) + r_0 \sum_{j=1}^N c_j R(t-\tau_j) \nonumber\\
    & \qquad +  a(t) + b\int_0^t a(s) R(t-s) ds  + \sum_{j=1}^N c_j \int_0^{t-\tau_j} a(s) R(t-s-\tau_j) ds\nonumber\\
    & \qquad +  b \sum_{j=1}^N c_j   \int_{-\tau_j}^0 R(t-s-\tau_j)\phi(s)ds \nonumber\\
    & \qquad + \sum_{j=1}^N  c_j  \sum_{k=1}^N c_k \int_{-\tau_j}^0 R(t-s-\tau_j-\tau_k)\phi(s)ds  \nonumber \\
    & \qquad +  b \int_0^{t} \sigma(s) R(t-s)dW_s +\sum_{j=1}^N  c_j \int_0^{t-\tau_j} \sigma(s) R(t-s-\tau_j)dW_s . \label{eq:fPrima1} 
\end{align} Grouping the terms multiplied by $b$ and $c_j$ in equation \eqref{eq:fPrima1}, we obtain that
\begin{align}
    f'(t) & =  a(t) + br_t + \sum_{j=1}^N  c_j r_{t-\tau_j}.\label{eq:fPrima2} 
\end{align} Equations \eqref{eq:fPrima2} and \eqref{eq:decomp} show that the process $r$ in \eqref{eq:2.5} satisfies the stochastic delay differential equation \eqref{eq:r}.

\end{proof}

\section{Laplace transform} \label{sec:laplacetransform}

In this section we derive an explicit formula for the exponential-affine transform in \eqref{eq:transform} below. Introducing auxiliary processes $\gamma_j$ for $j=1,\ldots,N$, following an idea of Flore and Nappo \cite{flore2019feynman}, allows us to obtain an affine formula for the conditional expectation of the transform. Explicit formulae for bond prices and the characteristic function of the short rate will be derived in due course as special cases of this formula.

\begin{theorem} \label{th:transform}
For any $T>t\ge0$, $z\in\mathds{C}$ and $d_0,d_1\in\mathds{R}$, we have
\begin{equation}
E_\mathds{Q}\left(e^{zr_T+\int_t^T(d_0+d_1r_s)ds}\middle|\mathcal{F}_t\right) = e^{A(T-t) + D(T-t)r_t + \sum_{j=1}^N c_j\int_{t-\tau_j}^t D(T-u-\tau_j)r_u du} \label{eq:transform}
\end{equation}
where $A:[0,\infty)\rightarrow\mathds{R}$ and $D:\mathds{R}\rightarrow\mathds{R}$ satisfy the system of delay differential equations
\begin{align}
   D'(\ell) &= bD(\ell) + \sum_{j=1}^N c_jD(\ell-\tau_j) + d_1, \label{eq:Ricatti-D} \\
   A'(\ell) &= a(T-\ell)D(\ell) + \tfrac{1}{2}\sigma^2(T-\ell)D^2(\ell) + d_0 \label{eq:Ricatti-A}
\end{align}
for all $\ell>0$, with initial values $A(0)=0$ and $D(\ell)=zH(\ell)$ for all $\ell\in(-\infty,0]$, where $H$ is the Heaviside function.
\end{theorem}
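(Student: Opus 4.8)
The plan is to verify the exponential-affine ansatz directly by an Itô/martingale argument. Fix $T>0$ and define the process
\[
M_t = e^{A(T-t) + D(T-t)r_t + \sum_{j=1}^N c_j\int_{t-\tau_j}^t D(T-u-\tau_j)r_u\,du + \int_0^t (d_0+d_1 r_s)\,ds}
\]
for $t\in[0,T]$. The claim \eqref{eq:transform} is equivalent to $M$ being a martingale (so that $E_\mathds{Q}(M_T\mid\mathcal F_t)=M_t$, and then one divides by the $\mathcal F_t$-measurable factor $e^{\int_0^t(d_0+d_1r_s)ds}$ and uses $A(0)=0$, $D(0)=z$). To show $M$ is a (local, then true) martingale, I would write $M_t = e^{Y_t}$ where $Y_t$ collects the exponent, compute $dY_t$ using the SDE \eqref{eq:r} for $dr_t$ and the Leibniz rule for the term with variable limits $t-\tau_j$ and $t$, and then apply Itô's formula to get $dM_t = M_t\,dY_t + \tfrac12 M_t\,d\langle Y\rangle_t$. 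The goal is to choose $A$ and $D$ so that the finite-variation part of $dM_t$ vanishes identically, leaving only the stochastic integral $M_t D(T-t)\sigma(t)\,dW_t$.

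The key computation is the drift cancellation. Differentiating the exponent: the $A$-term contributes $-A'(T-t)$; the $D(T-t)r_t$ term contributes $-D'(T-t)r_t + D(T-t)\,dr_t/dt$ (drift part $D(T-t)(a(t)+br_t+\sum_j c_j r_{t-\tau_j})$); the integral term $\sum_j c_j\int_{t-\tau_j}^t D(T-u-\tau_j)r_u\,du$ contributes, by the fundamental theorem of calculus, the boundary terms $\sum_j c_j\big(D(T-t-\tau_j)r_t - D(T-(t-\tau_j)-\tau_j)r_{t-\tau_j}\big) = \sum_j c_j D(T-t-\tau_j)r_t - \sum_j c_j D(T-t)r_{t-\tau_j}$ — note that the second group cancels exactly the delayed terms $D(T-t)\sum_j c_j r_{t-\tau_j}$ coming from $dr_t$. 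The quadratic variation contributes $\tfrac12 D^2(T-t)\sigma^2(t)$. Finally the running integral contributes $d_0+d_1 r_t$. Collecting the coefficient of $r_t$ forces
\[
-D'(T-t) + bD(T-t) + \sum_{j=1}^N c_j D(T-t-\tau_j) + d_1 = 0,
\]
which is exactly \eqref{eq:Ricatti-D} with $\ell=T-t$; collecting the constant (in $r_t$) terms forces
\[
-A'(T-t) + a(t)D(T-t) + \tfrac12\sigma^2(t)D^2(T-t) + d_0 = 0,
\]
which is \eqref{eq:Ricatti-A} with $\ell=T-t$ (so $t=T-\ell$, giving $a(T-\ell)$, $\sigma(T-\ell)$ as written). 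One must also check the initial/boundary data match: at $t=T$ the exponent must reduce to $zr_T + \int_0^T(d_0+d_1r_s)ds$, which needs $A(0)=0$ and $D(0)=z$; and for the delayed integral term to be well-defined and consistent when $T-u-\tau_j<0$ we need $D(\ell)=0$ for $\ell<0$, i.e. the stated condition $D(\ell)=zH(\ell)$ on $(-\infty,0]$ — this also ensures the boundary term at $u=t-\tau_j$ is handled correctly when $t<\tau_j$ (the integral is then over an empty-weight region, consistent with $R$-type conventions used earlier).

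Two technical points remain. First, existence and uniqueness of $D$ solving the delay ODE \eqref{eq:Ricatti-D} with the given history: this is a linear (inhomogeneous, constant-coefficient) delay differential equation, so the method of steps applies just as in Proposition \ref{prop:DeterDelay}, and in fact $D$ can be written explicitly via the resolvent $R$ of \eqref{eq:R}; then $A$ is obtained by direct integration of \eqref{eq:Ricatti-A}. Second, upgrading $M$ from local martingale to true martingale: since $r$ is a Gaussian process (from Proposition \ref{prop:sol:SDE}, $r_t$ is a deterministic function plus $\int_0^t\sigma(s)R(t-s)dW_s$), the exponent $Y_t$ is, for real $z,d_0,d_1$, an affine functional of a Gaussian process with continuous (hence bounded on $[0,T]$) coefficients, so $\sup_{t\le T}E_\mathds{Q}(M_t^p)<\infty$ for any $p>1$; for complex $z$ one controls $|M_t|$ the same way via $\mathrm{Re}(z)$. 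This uniform integrability (or an $L^2$ bound on the stochastic integral $\int_0^t M_sD(T-s)\sigma(s)dW_s$) yields the martingale property and hence \eqref{eq:transform}. The main obstacle is bookkeeping the Leibniz-rule boundary terms from the $\int_{t-\tau_j}^t$ integral and seeing cleanly that they cancel the delayed drift contributions; everything else is routine once the ansatz is differentiated carefully.
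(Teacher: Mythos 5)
Your proposal is correct and follows essentially the same route as the paper: an exponential-affine ansatz whose delay-integral term (the paper's auxiliary processes $\gamma_{t,j}$ with $\Gamma_j$ chosen as $c_jD(\cdot+\tau_j)$) produces exactly the Leibniz boundary terms needed to cancel the $r_{t-\tau_j}$ drift contributions, after which the vanishing of the drift yields \eqref{eq:Ricatti-D}--\eqref{eq:Ricatti-A}. The only differences are presentational: you run the argument as a direct verification that $M$ is a true martingale (explicitly addressing the integrability upgrade, which the paper leaves implicit), whereas the paper posits the ansatz for the conditional expectation $\psi_t$ and reads off the ODEs from the martingale property of $y_t\psi_t$.
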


\begin{proof}
  Taking any $T>0$, and taking as given a deterministic (and integrable) function $\Gamma_j:[0,T]\rightarrow\mathds{R}$ that will be chosen below, define an auxiliary process $\gamma_j=(\gamma_{t,j})_{t\in[0,T]}$ as
  \[
    \gamma_{t,j}
    = \int_{t-\tau_j}^t \Gamma_j(s) r_s \mathds{1}_{[-\tau_j,T-\tau_j]}(s)ds \text{ for all } t\in [0,T] \text{ and  for all } j=1,\ldots,N.  
  \]
  Note that $\gamma_{T,j}=0$ and, for all $t\in [0,T]$,
  \begin{align*}
         \gamma_{t,j}  & =  \int_{-\tau_j}^t \Gamma_j(s) r_s \mathds{1}_{[-\tau_j,T-\tau_j]} (s) ds -  \int_{0}^{t} \Gamma_j(s-\tau_j) r_{s-\tau_j} ds\\
 & = \gamma_{0,j} + \int_{0}^t \left(\Gamma_j(s) r_s \mathds{1}_{[0,T-\tau_j]} (s) - \Gamma_j(s-\tau_j) r_{s-\tau_j}\right) ds.
\end{align*}

  Define now
  \[
   \psi_t = E_\mathds{Q}\left(e^{zr_T+\int_t^T(d_0+d_1r_s)ds}\middle|\mathcal{F}_t\right) \text{for all }t\in[0,T].
  \]
  We claim that
\begin{equation}\label{Ansatzeq}
 \psi_t = e^{A(t) + D(t) r_t + \sum_{j=1}^N \gamma_{t,j} } \text{ for all } t\in [0,T],
\end{equation} 
where $A,D:[0,T]\rightarrow\mathds{R}$ are two deterministic functions with $A(T) = 0$ and $D(T) = z$. Application of the It\^o formula in \eqref{Ansatzeq} gives
\begin{align}
 \frac{d\psi_t}{\psi_t} 
 &= \left(A'(t) + a(t)D(t) + \tfrac{1}{2}\sigma^2(t)D^2(t)\right)dt \nonumber \\
 &\qquad + \left(D'(t)+bD(t)+ \sum_{j=1}^N \Gamma_j(t)\mathds{1}_{[0,T-\tau_j]}(t)\right)r_t dt \nonumber \\
 &\qquad + \left(\sum_{j=1}^N\left( c_j D(t) - \Gamma_j(t-\tau_j)\right) r_{t-\tau_j}\right)dt + \sigma(t) D(t)dW_t. \label{eq:SDE-gives-gamma}
\end{align}
The function $\Gamma_j$ is now chosen so as to allow $r_{t-\tau_j}$ to be eliminated from \eqref{eq:SDE-gives-gamma}. Noting that the value of $\gamma_j$ does not depend on $\Gamma_j(t)$ for any $t>T-\tau_j$, we simply choose $\Gamma_j$ so as to be continuous and constant for such $t$. In summary, we obtain
\begin{equation} \label{eq:Gamma}
 \Gamma_j(t) =
 \begin{cases}
  c_jD(t+\tau_j) & \text{if } t\in[-\tau_j,T-\tau_j], \\
  z & \text{if } t\in (T-\tau_j,T].
 \end{cases} \text{ for } j=1,\ldots,N.
\end{equation}

Define now another auxiliary process $y=(y_t)_{t\in[0,T]}$ as
  \[
   y_t = e^{\int_0^t(d_0+d_1r_s)ds} \text{ for all }t\ge0.
  \]
  It follows that
  \[
   y_t\psi_t = E_\mathds{Q}\left(y_T\psi_T\middle|\mathcal{F}_t\right) \text{ for all }t\in[0,T], 
  \]
  in other words, $(y_t\psi_t)_{t\in[0,T]}$ is a martingale. The It\^o product rule together with \eqref{eq:SDE-gives-gamma} and \eqref{eq:Gamma} gives that
  \begin{multline*}
   \frac{d(y_t\psi_t) }{y_t\psi_t} 
   = \left(A'(t) + a(t)D(t) + \tfrac{1}{2}\sigma^2(t)D^2(t) + d_0\right) dt \\
   + \left(D'(t)+bD(t) + \sum_{j=1}^N c_jD(t+\tau_j)\mathds{1}_{[0,T-\tau_j]}(t) + d_1\right)r_tdt + \sigma(t) D(t)dW_t.
  \end{multline*}
  The martingale property of $(y_t\psi_t)_{t\in[0,T]}$ then implies that $A$ and $D$ satisfies the system of differential equations
  \begin{align*}
   D'(t) &= -bD(t) - \sum_{j=1}^N c_j D(t+\tau_j )\mathds{1}_{[0,T-\tau_j]}(t) - d_1, \\
   A'(t) &= -a(t)D(t) -\tfrac{1}{2}\sigma^2(t)D^2(t) -d_0.
  \end{align*}
  By abuse of notation and the transformation $\ell=T-t$ we obtain \eqref{eq:Ricatti-D}--\eqref{eq:Ricatti-A} and finally~\eqref{eq:transform}.
\end{proof}

The key task now is to solve the delay differential equation \eqref{eq:Ricatti-D}; the following result provides an analytical solution. Once \eqref{eq:Ricatti-D} has been solved, the function $A$ can be obtained by integrating \eqref{eq:Ricatti-A}.

\begin{theorem} \label{th:dde}
The solution to the delay differential equation \eqref{eq:Ricatti-D} with initial condition $D(\ell)=zH(\ell)$ for all $\ell\in(-\infty,0]$ is given by 
\begin{align}
 D(\ell) & = z + d_1\sum_{n=0}^{\left\lfloor \frac{\ell}{\tau_1}\right\rfloor} \sum_{|\alpha|= n}  \frac{c^\alpha}{(n+1)\alpha! } (\ell-\langle \alpha,\tau \rangle)^{n+1} H(\ell - \langle \alpha,\tau \rangle) \nonumber \\
  & \qquad + z\sum_{j=1}^N\sum_{n=1}^{\left\lfloor \frac{\ell}{\tau_1}\right\rfloor} \sum_{|\alpha|= n} \frac{ c_j c^\alpha}{(n+1) \alpha !} (\ell-\langle \alpha,\tau \rangle - \tau_j)^{n+1} H(\ell - \langle \alpha,\tau \rangle -\tau_j)  \label{eq:D-without-b}
 \end{align}
for all $\ell\ge0$ when $b=0$, and when $b\neq0$ the function $D$ satisfies
\begin{align}
        D(\ell) & = z + (d_1+zb)\sum_{n=0}^{\left\lfloor \frac{\ell}{\tau_1}\right\rfloor} \sum_{|\alpha|= n}  \frac{c^\alpha}{\alpha!} D_{n,\alpha}(\ell-\langle \alpha,\tau \rangle) \nonumber\\
    & \qquad + z \sum_{j=1}^N \sum_{n=1}^{\left\lfloor \frac{\ell}{\tau_1}\right\rfloor}  \sum_{|\alpha|= n} \frac{c_j c^\alpha}{\alpha!} D_{n,\alpha}( \ell -\langle \alpha,\tau \rangle-\tau_j) \label{eq:D-with-b}
\end{align} for all $\ell\ge0$, where
\begin{align}
     D_{n,\alpha}(\ell) & =  \frac{(-1)^n n!}{b^{n+1}} \left(e^{b \ell }\sum_{r=0}^n\frac{(-1)^{-r}}{r!}b^r(\ell)^r - 1\right) H(\ell).\label{eq:Dnalpha}  
\end{align}
\end{theorem}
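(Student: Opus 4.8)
The plan is to follow the Laplace-transform argument already used in the proof of Proposition~\ref{prop:DeterDelay}. Existence and uniqueness of a solution on $[0,\infty)$ again follow from the method of steps, so the task is to produce the closed form. Write $Q(s)=\sum_{j=1}^N c_je^{-\tau_js}$ and $P(s)=s-b-Q(s)$, and let $L_D$ be the Laplace transform of $D$. Taking the transform of \eqref{eq:Ricatti-D}: the initial value $D(0)=z$ enters through the transform of $D'$; the history $D(\ell)=zH(\ell)$ for $\ell\le0$ contributes nothing to the delayed terms $D(\ell-\tau_j)$ on $(0,\tau_j)$, since $D$ vanishes almost everywhere on $(-\infty,0)$; and the constant $d_1$ contributes $d_1/s$. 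This yields $P(s)\,L_D(s)=z+d_1/s$, whence
\[
 L_D(s)=\frac{zs+d_1}{s\,P(s)}.
\]

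The one genuinely non-routine step is to substitute $s=P(s)+b+Q(s)$ in the numerator, which splits the transform as
\[
 L_D(s)=\frac{z}{s}+\frac{zb+d_1}{s\,P(s)}+\frac{z\,Q(s)}{s\,P(s)}.
\]
I would then expand $1/P(s)=\sum_{n\ge0}Q(s)^n/(s-b)^{n+1}$ as a geometric series (valid where $|Q(s)/(s-b)|<1$, e.g.\ for $s$ with large real part) and apply the multinomial theorem to $Q(s)^n$ exactly as in \eqref{eq:Laplace_end}, turning the last two terms into multi-index sums of elementary factors of the form $e^{-\mu s}/\bigl(s(s-b)^{n+1}\bigr)$ with $\mu=\langle\alpha,\tau\rangle$ or $\mu=\langle\alpha,\tau\rangle+\tau_j$ (up to the multi-index constants $n!\,c^\alpha/\alpha!$ and, in the third term, $c_j$). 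Inverting term by term, the first summand gives the constant $z$, and for the others the building block is
\[
 \mathcal{L}^{-1}\!\left[\frac{n!\,e^{-\mu s}}{s(s-b)^{n+1}}\right](\ell)=H(\ell-\mu)\int_0^{\ell-\mu}w^n\,e^{bw}\,dw,
\]
which follows from \eqref{eq:Lap1} together with the fact that division by $s$ corresponds to integration. Evaluating the inner integral by repeated integration by parts (when $b\neq0$) yields exactly the function $D_{n,\alpha}$ of \eqref{eq:Dnalpha}, and for $b=0$ it equals $(\ell-\mu)^{n+1}/(n+1)$; assembling the three contributions then produces \eqref{eq:D-with-b} (respectively \eqref{eq:D-without-b} when $b=0$), the sum over $n$ ranging exactly over the indices for which the summands do not already vanish.

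Legitimacy of the term-by-term inversion, together with the cutoff $\lfloor\ell/\tau_1\rfloor$, follows from the same observation used for \eqref{eq:R}: for fixed $\ell$, a term with $|\alpha|=n$ vanishes unless $\langle\alpha,\tau\rangle\le\ell$, and since $\tau_i\ge\tau_1$ this forces $n\le\ell/\tau_1$, so every sum over $n$ is finite.

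I expect the main difficulty to be bookkeeping: carrying the factors $n!/\alpha!$, $c^\alpha$, $\langle\alpha,\tau\rangle$ and the shifts $\tau_j$ correctly through the multinomial expansion and the inverse transform, and checking that the closed form of $\int_0^{\ell-\mu}w^n e^{bw}\,dw$ really is the $D_{n,\alpha}$ of \eqref{eq:Dnalpha} (the cases $n=0,1$ are a quick guard against sign and index slips). As an alternative derivation, or as a cross-check, one can also verify directly---in the style of the proof of Proposition~\ref{prop:sol:SDE}---that $D(\ell)=zR(\ell)+d_1\int_0^\ell R(u)\,du$ solves \eqref{eq:Ricatti-D}, using that $R$ satisfies \eqref{eq:Rdeter} with $R(0)=1$, and then re-expand $R$ and $\int_0^\ell R$ by means of \eqref{eq:R} to recover \eqref{eq:D-with-b}--\eqref{eq:D-without-b}.
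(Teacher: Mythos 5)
Your proposal is correct and follows essentially the same route as the paper: Laplace transform of \eqref{eq:Ricatti-D}, geometric-series expansion of $1/(s-b-\sum_j c_je^{-\tau_js})$, multinomial theorem, and term-by-term inversion, with the integral $\int_0^{\ell-\mu}w^ne^{bw}\,dw$ evaluated by parts to produce $D_{n,\alpha}$. The only cosmetic difference is that you invert $L_D$ directly (absorbing the integration into the $1/s$ factor), whereas the paper inverts $L_{D'}=sL_D-z$ and then integrates $D'$ over $[0,\ell]$ --- your numerator substitution $s=P(s)+b+Q(s)$ is exactly the same algebra as the paper's computation of $L_{D'}$.
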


\begin{proof} Existence and uniqueness of equation \eqref{eq:Ricatti-D} come from the method of steps; see Remark 3.3 in \cite{smith2011introduction}. We use a number of elementary properties of the Laplace transform \citep[see, for example][p.~7, Example 1.2, Theorems 2.1, 2.4, Appendix B]{Dyke2014}, and take as given a value of $s$ for which all expressions below are well defined, i.e.~we require $s\notin\{0,b\}$ and $\left\lvert \frac{ce^{-\tau s}}{s-b}\right\rvert < 1$. 
 Observing that \eqref{eq:Ricatti-D} is equivalent to
 \[
  D'(\ell) = bD(\ell) + \sum_{j=1}^N c_j D(\ell-\tau_j)H(\ell-\tau_j) + d_1 \text{ for all }\ell>0,
 \]
 it follows that
 \[
  sL_D(s)-z =  L_{D'}(s)  = bL_D(s) + \sum_{j=1}^N c_j e^{-\tau_j s} L_D(s) + \frac{d_1}{s}.
 \]
 After rearrangement, this becomes
 \begin{align*}
       L_{D'}(s)  & = \frac{d_1+zb+z \sum_{j=1}^N c_j e^{-\tau_j s}}{s-b- \sum_{j=1}^N c_j e^{-\tau_j s}}\\
       & = \frac{1}{s-b}\left(d_1+zb+z  \sum_{j=1}^N c_j e^{-\tau_j s} \right)\frac{1}{1-\frac{ \sum_{j=1}^N c_j e^{-\tau_j s} }{s-b}}. 
 \end{align*}
  Using the formula for the sum of a geometric series, we obtain
  \begin{align}
  L_{D'}(s) &= \frac{1}{s-b}\left(d_1+zb+z \sum_{j=1}^N c_j e^{-\tau_j s}\right)\sum_{n=0}^\infty\left(\frac{\sum_{j=1}^N c_j e^{-\tau_j s}}{s-b}\right)^n \nonumber \\
  &= (d_1+zb)\sum_{n=0}^\infty \sum_{|\alpha|= n} \frac{n! }{\alpha!} \frac{c^\alpha e^{- \langle \alpha,\tau \rangle s}}{(s-b)^{n+1}} \nonumber\\
  & \qquad + z  \sum_{j=1}^N \sum_{n=0}^\infty \sum_{|\alpha|= n}  c_j \frac{n! }{\alpha!} \frac{c^\alpha e^{- (\langle \alpha,\tau \rangle +\tau_j) s}}{(s-b)^{n+1}}, \label{eq:Laplace-intermediate}
 \end{align} where we have used the multinomial theorem in the last equality. By properties of the inverse Laplace transform, we have that 
 \[
 \mathcal{L}^{-1}\left( \frac{n!e^{-h s}}{(s-b)^{n+1}} \right)   = (\ell-h)^ne^{b(\ell-h)}H(\ell-h)
 \]
 for all $n\in\mathds{N}_0$ and $h\ge0$, inverting the Laplace transform in \eqref{eq:Laplace-intermediate} gives
 \begin{multline}\label{eq:DerD1}
  D'(\ell)
  = (d_1+zb) \sum_{n=0}^\infty \sum_{|\alpha|= n} \frac{c^\alpha}{\alpha !} (\ell -  \langle \alpha,\tau \rangle )^ne^{b(\ell- \langle \alpha,\tau \rangle)}H(\ell- \langle \alpha,\tau \rangle) \\
  + z \sum_{j=1}^N \sum_{n=0}^\infty \sum_{|\alpha|= n} c_j \frac{c^\alpha}{\alpha !} (\ell-(\langle \alpha,\tau \rangle + \tau_j))^{n}e^{b(\ell-(\langle \alpha,\tau \rangle + \tau_j))}H(\ell-(\langle \alpha,\tau \rangle + \tau_j))
  \end{multline} for all $\ell\ge0$.
  This can be integrated directly, and
  \begin{multline} \label{eq:Laplace-intermediate-integrated}
   D(\ell) = z + (d_1+zb)\sum_{n=0}^{\left\lfloor \frac{\ell}{\tau_1}\right\rfloor}  \sum_{|\alpha|= n} \frac{c^\alpha}{\alpha!} \int_{\langle \alpha,\tau \rangle}^{\ell\vee \langle \alpha,\tau \rangle}(u- \langle \alpha,\tau \rangle)^ne^{b(u- \langle \alpha,\tau \rangle)}du \\
  + z \sum_{j=1}^N  \sum_{n=1}^{\left\lfloor \frac{\ell}{\tau_1}\right\rfloor}  \sum_{|\alpha|= n} \frac{c_j c^\alpha}{\alpha!}  \int_{\langle \alpha,\tau \rangle + \tau_j}^{\ell \vee (\langle \alpha,\tau \rangle + \tau_j)}(u-(\langle \alpha,\tau \rangle + \tau_j))^{n}e^{b(u-(\langle \alpha,\tau \rangle + \tau_j))}du \text{ for all } \ell\ge0.
  \end{multline}
  
We distinguish between two cases in \eqref{eq:Laplace-intermediate-integrated}, depending on the value of $b$. Take first $b\neq0$. For each $n\in\mathds{N}$ such that $\ell \ge k$ with \[k\in\{ \langle \alpha,\tau \rangle, \langle \alpha,\tau \rangle +\tau_1,\ldots, \langle \alpha,\tau \rangle +\tau_N \},\] a change of variable and integration by parts gives that
  \begin{align*}
   \int_{k}^\ell (u-k)^n e^{b(u-k)}du
   &= \frac{1}{b^{n+1}}\int_0^{b(\ell-k)} v^n e^v dv \\
   &= \frac{n!(-1)^n}{b^{n+1}}\left(e^{b(\ell-k)}\sum_{r=0}^n\frac{(-1)^{-r}}{r!}b^r(\ell-k)^r - 1\right).
  \end{align*}
  This formula also applies trivially when $n=0$, and therefore \eqref{eq:D-with-b} holds true.
  
  Similarly, if $b=0$, then the integrals in \eqref{eq:Laplace-intermediate-integrated} simplify significantly and we can obtain
  \begin{align*}
   \int_{k}^\ell (u-k)^n du 
   = \frac{1}{n+1}(\ell-k)^{n+1},
  \end{align*}
  which gives \eqref{eq:D-without-b} as claimed.
\end{proof}

\section{Zero coupon bonds}\label{Sect:ZCB}

Let us consider zero coupon bonds in this short rate model. The arbitrage-free price of a zero-coupon bond with maturity date $T>0$ is
\begin{align}\label{eq:BondPrce1}
    B(t,T) & =E_{\mathds{Q}}\left(\left. e^{-\int_t^T r_s ds }\right| \mathcal{F}_t \right) \text{ for all } t\in [0,T]
\end{align} \citep[(9.22)]{musiela2006martingale}. We have the following result.

\begin{proposition}\label{prop:BondPriceFormula}
 The arbitrage-free price of a zero-coupon bond with maturity date $T>0$ is given for all $t\in[0,T]$ by
 \begin{equation}
 B(t,T) = e^{A(T-t) + D(T-t)r_t + \sum_{j=1}^N c_j \int_{t-\tau_j}^t D(T-s-\tau_j)r_s ds}, \label{eq:bond}
 \end{equation}
 where the functions $A:[0,\infty)\rightarrow\mathds{R}$ and $D:\mathds{R}\rightarrow\mathds{R}$ are defined as $D(\ell) = 0$ for all $\ell\in(-\infty,0)$ and
 \begin{align}
  D(\ell) &=
  \begin{cases}
   - \displaystyle\sum_{n=0}^{\left\lfloor \frac{\ell}{\tau_1}\right\rfloor} \sum_{|\alpha|= n}  \frac{c^\alpha}{(n+1)\alpha! } (\ell-\langle \alpha,\tau \rangle)^{n+1} H(\ell - \langle \alpha,\tau \rangle) &\text{if }b=0, \\
   -\displaystyle \sum_{n=0}^{\left\lfloor \frac{\ell}{\tau_1}\right\rfloor} \sum_{|\alpha|= n}  \frac{c^\alpha}{\alpha!} D_{n,\alpha}(\ell) &\text{if }b\neq0,
  \end{cases} \label{eq:DSolBond}\\
  A(\ell) &= \int_0^\ell a(T-u) D(u)du + \tfrac{1}{2}\int_0^\ell \sigma^2(T-u) D^2(u)du
 \end{align}
 for all $\ell\ge0$, where $D_{n,\alpha}$ is defined as in equation \eqref{eq:Dnalpha}.
\end{proposition}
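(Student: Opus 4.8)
The plan is to obtain Proposition~\ref{prop:BondPriceFormula} as an immediate specialisation of Theorem~\ref{th:transform}. Comparing the defining expression \eqref{eq:BondPrce1} of the zero-coupon bond price with the exponential-affine transform \eqref{eq:transform}, one sees that
\[
 B(t,T) = E_{\mathds{Q}}\left(e^{-\int_t^T r_s\,ds}\,\middle|\,\mathcal{F}_t\right)
\]
is exactly the left-hand side of \eqref{eq:transform} with the parameter choice $z=0$, $d_0=0$, $d_1=-1$. Hence, for $T>t\ge0$, the right-hand side of \eqref{eq:transform} yields \eqref{eq:bond}, where $A:[0,\infty)\to\mathds{R}$ and $D:\mathds{R}\to\mathds{R}$ are the solutions of the system \eqref{eq:Ricatti-D}--\eqref{eq:Ricatti-A} evaluated at $z=0$, $d_0=0$, $d_1=-1$, with initial data $A(0)=0$ and $D(\ell)=0\cdot H(\ell)=0$ for all $\ell\in(-\infty,0]$; the latter is consistent with the stated extension $D(\ell)=0$ for $\ell<0$.

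First I would pin down $D$. Applying Theorem~\ref{th:dde} with $z=0$ and $d_1=-1$, every term carrying a factor $z$ in \eqref{eq:D-without-b} and \eqref{eq:D-with-b} drops out, and the surviving prefactor reduces to $d_1=-1$ in the case $b=0$, respectively $d_1+zb=-1$ in the case $b\ne0$. This leaves precisely the two branches of \eqref{eq:DSolBond}, with $D_{n,\alpha}$ as in \eqref{eq:Dnalpha}. Since the sum over $n$ runs only up to $\lfloor\ell/\tau_1\rfloor$, this is a genuine finite sum for each $\ell$, so \eqref{eq:bond} is a bona fide closed-form expression with no convergence concerns. I would then recover $A$: with $d_0=0$, equation \eqref{eq:Ricatti-A} reads $A'(\ell)=a(T-\ell)D(\ell)+\tfrac12\sigma^2(T-\ell)D^2(\ell)$, and integrating from $0$ to $\ell$ using $A(0)=0$ gives the stated formula for $A$ directly.

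Finally, since Theorem~\ref{th:transform} is stated for $T>t$, the boundary case $t=T$ of \eqref{eq:bond} should be checked separately: evaluating \eqref{eq:DSolBond} at $\ell=0$ gives $D(0)=0$, and for $s\in[T-\tau_j,T]$ one has $T-s-\tau_j\le0$, hence $D(T-s-\tau_j)=0$; together with $A(0)=0$ this makes the exponent in \eqref{eq:bond} vanish, so $B(T,T)=1$, matching \eqref{eq:BondPrce1}. As the result is essentially a corollary of Theorems~\ref{th:transform} and~\ref{th:dde}, there is no serious obstacle; the only step demanding care is the bookkeeping in the specialisation of Theorem~\ref{th:dde} --- checking that setting $z=0$ annihilates exactly the $z$-weighted double sums and that the leftover coefficient is $-1$, so that \eqref{eq:DSolBond} emerges in the form claimed.
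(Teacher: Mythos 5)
Your proposal is correct and follows exactly the paper's route: the paper proves this proposition in one line as the special case of Theorems~\ref{th:transform} and~\ref{th:dde} with $z=d_0=0$ and $d_1=-1$. Your additional bookkeeping (verifying which terms of \eqref{eq:D-without-b}--\eqref{eq:D-with-b} survive, integrating \eqref{eq:Ricatti-A}, and checking $B(T,T)=1$) simply makes explicit what the paper leaves to the reader.
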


\begin{proof}
 This is a special case of Theorems \ref{th:transform} and \ref{th:dde} with $z=d_0=0$ and \mbox{$d_1=-1$}.
\end{proof}

The bond price formula \eqref{eq:bond} resembles the general exponential-affine structure of bond prices found in affine short rate models (see Musiela and Rutkowski \cite[Section 10.2.2]{musiela2006martingale}). 

Applying the It\^o formula in \eqref{eq:bond}, it is straightforward to establish that
\begin{align}\label{eq:BondRNP}
    dB(t,T) & =   r_t B(t,T) dt + \sigma(t) D(T-t) B(t,T) dW_t
\end{align} for all $t\in [0,T]$. This is a stochastic differential equation with deterministic volatility, and therefore it is possible to obtain closed formulae for options on bonds \citep[see, for example,][Proposition 11.3.1]{musiela2006martingale} in terms of $D$ and the parameters of the model, and by extension also for caps, floors and other derivative securities. This includes stock pricing models where the short rate follows \eqref{eq:r} under the risk neutral probability $\mathds{Q}$ \citep[Proposition 11.3.2]{musiela2006martingale}.

\section{Distribution of the short rate}\label{Sect:Dist}

The conditional characteristic function of the short rate, defined for all $T>t\ge0$ as
\begin{align}\label{eq:caractFunction}
    \phi^r_{T|t}(u) &=  E_{\mathds{Q}}\left(\left. e^{iu r_T}\right| \mathcal{F}_t \right) \text{ for all }  u\in \mathds{R}.
\end{align}
It be obtained as a special case of Theorem \ref{th:transform}, as follows.

\begin{proposition}
For any $T>t\ge0$ we have
\begin{align}
    \phi^r_{T|t}(u)& = e^{iu\left(\int_0^{T-t}a(T-s)R(s)ds + R(T-t)r_t + \sum_{j=1}^N c_j \int_{t-\tau_j}^t R(T-s-\tau_j)r_s ds\right)} \nonumber\\
 &\times   e^{-\frac{1}{2}u^2\int_0^{T-t} \sigma^2(T-s)R^2(s)ds} \label{eq:charfun}
\end{align}
for all $u\in \mathds{R}$, where $R$ is the function defined in \eqref{eq:R}.
\end{proposition}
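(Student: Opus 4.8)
The plan is to recognise $\phi^r_{T|t}$ as the particular case $z=iu$, $d_0=d_1=0$ of the exponential-affine transform in Theorem~\ref{th:transform}. With this choice the exponent $zr_T+\int_t^T(d_0+d_1r_s)ds$ in \eqref{eq:transform} reduces to $iur_T$, so that the left-hand side becomes precisely $\phi^r_{T|t}(u)=E_\mathds{Q}(e^{iur_T}\mid\mathcal{F}_t)$, and it only remains to identify the functions $A$ and $D$ supplied by Theorem~\ref{th:transform} under these parameter values and substitute them into the right-hand side.

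First I would solve \eqref{eq:Ricatti-D} under the present parameters. Setting $d_1=0$ turns \eqref{eq:Ricatti-D} into the homogeneous delay equation $D'(\ell)=bD(\ell)+\sum_{j=1}^N c_jD(\ell-\tau_j)$ for $\ell>0$, with initial data $D(\ell)=zH(\ell)$ for $\ell\in(-\infty,0]$. On $(-\infty,0]$ the function $zH$ agrees with $z\mathds{1}_{\{0\}}$, so $z^{-1}D$ and the function $R$ of \eqref{eq:R} solve exactly the same initial value problem \eqref{eq:Rdeter}; by the uniqueness furnished by the method of steps, $D(\ell)=zR(\ell)=iu\,R(\ell)$ for all $\ell$ (both sides also vanishing for $\ell<0$, and both equal to $z$ at $\ell=0$).

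Next I would integrate \eqref{eq:Ricatti-A}. With $d_0=0$ and $D=iu\,R$ it reads $A'(\ell)=iu\,a(T-\ell)R(\ell)-\tfrac12u^2\sigma^2(T-\ell)R^2(\ell)$, and integrating from $0$ to $\ell$ with $A(0)=0$ gives $A(\ell)=iu\int_0^\ell a(T-s)R(s)\,ds-\tfrac12u^2\int_0^\ell\sigma^2(T-s)R^2(s)\,ds$. Substituting $\ell=T-t$, together with $D(T-t)=iu\,R(T-t)$ and $D(T-s-\tau_j)=iu\,R(T-s-\tau_j)$, into \eqref{eq:transform} and separating the terms carrying the factor $iu$ from the single term carrying $-\tfrac12u^2$ yields \eqref{eq:charfun}.

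The argument is essentially bookkeeping, and I do not expect a genuine obstacle; the only point needing a little care is the identification $D=iu\,R$, which rests on the linearity of \eqref{eq:Ricatti-D}, the coincidence of the two initial conditions on $(-\infty,0]$, and uniqueness. As a sanity check one could instead read \eqref{eq:charfun} directly off Proposition~\ref{prop:sol:SDE}: shifting that representation to start at time $t$ shows that, conditionally on $\mathcal{F}_t$, the random variable $r_T$ is Gaussian with mean $\int_0^{T-t}a(T-s)R(s)\,ds+R(T-t)r_t+\sum_{j=1}^N c_j\int_{t-\tau_j}^t R(T-s-\tau_j)r_s\,ds$ and variance $\int_0^{T-t}\sigma^2(T-s)R^2(s)\,ds$ (the latter coming from the It\^o isometry applied to the stochastic-integral term), and the characteristic function of such a Gaussian is exactly the right-hand side of \eqref{eq:charfun}.
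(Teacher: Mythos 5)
Your proposal is correct and follows essentially the same route as the paper: both treat the statement as the special case $z=iu$, $d_0=d_1=0$ of Theorems \ref{th:transform} and \ref{th:dde}, identify $D(\ell)=iu\,R(\ell)$, and integrate \eqref{eq:Ricatti-A} to get $A$. The only (harmless) difference is that you obtain $D=iu\,R$ by a uniqueness argument comparing the initial value problems \eqref{eq:Ricatti-D} and \eqref{eq:Rdeter}, whereas the paper reads it off directly from the explicit formulae \eqref{eq:D-without-b}--\eqref{eq:D-with-b}; your closing remark via Proposition \ref{prop:sol:SDE} matches the paper's Corollary \ref{cor:NormalDistr}.
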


\begin{proof}
Fix any $u\in\mathds{R}$. This corresponds to the special case of Theorems \ref{th:transform} and~\ref{th:dde} with $d_0=d_1=0$ and $z=iu$.

Direct calculation from \eqref{eq:D-without-b} and \eqref{eq:D-with-b} give
\[
 D(\ell) = iuR(\ell) \text{ for all } \ell\in \mathds{R}.
\]
It follows that  
 \[
 A(\ell) = iu\int_0^\ell a(T-s) R(s)ds - \tfrac{1}{2}u^2\int_0^\ell \sigma^2(T-s) R^2(s)ds \text{ for all }\ell\ge0\]
 by \eqref{eq:Ricatti-A} and the result follows immediately.
 \end{proof}
 
It can be inferred from Proposition \ref{prop:sol:SDE} that the (unconditional) distribution of $r_T$ is normal for all $T>0$. The form of \eqref{eq:charfun} shows that its conditional distribution is also normal, with the parameters being provided by the following corollary.

\begin{corollary}\label{cor:NormalDistr}
For any $0\le t<T$, the conditional distribution of $r_T$ given $\mathcal{F}_t$ is normal with mean
\begin{equation} \mu_{t,T} = \int_0^{T-t}a(T-u)R(u)du + R(T-t)r_t + \sum_{j=1}^N c_j \int_{t-\tau_j}^t R(T-u-\tau_j)r_u du \label{eq:cond-mean}
\end{equation}
and variance
\begin{equation} \sigma_{t,T}^2 = \int_0^{T-t} \sigma^2(T-u) R^2(u)du. \label{eq:cond-variance}
\end{equation}
\end{corollary}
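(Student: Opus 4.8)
The plan is to read off the conditional mean and variance directly from the conditional characteristic function in \eqref{eq:charfun}, using the standard fact that a complex-valued function of the form $u\mapsto e^{i u m - \frac{1}{2}u^2 v}$ with $m\in\mathds{R}$ and $v\ge 0$ is the characteristic function of a normal distribution with mean $m$ and variance $v$. Since the preceding proposition already establishes that
\[
 \phi^r_{T|t}(u) = \exp\!\left(i u\,\mu_{t,T} - \tfrac{1}{2}u^2\,\sigma_{t,T}\right)
\]
with $\mu_{t,T}$ and $\sigma_{t,T}$ exactly the expressions in \eqref{eq:cond-mean} and \eqref{eq:cond-variance}, the corollary is essentially a matter of identification together with an appeal to the uniqueness theorem for characteristic functions (L\'evy's theorem).

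First I would observe that $\mu_{t,T}$ is a real number: it is a sum of Lebesgue integrals of real-valued continuous (hence locally integrable) functions against the deterministic kernel $R$, plus the $\mathcal{F}_t$-measurable but real-valued term $R(T-t)r_t$, plus finitely many integrals of $R$ against the (pathwise continuous, hence integrable on the compact windows $[t-\tau_j,t]$) short-rate trajectory. Likewise $\sigma_{t,T}\ge 0$, since $\sigma^2(T-u)R^2(u)\ge 0$ and the integral is over the bounded interval $[0,T-t]$; continuity of $\sigma$ and $R$ on $[0,T-t]$ (the latter being a finite sum of continuous pieces by \eqref{eq:R}) guarantees the integral is finite. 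Then I would simply rewrite \eqref{eq:charfun} by grouping the two exponential factors into the single exponent $i u\,\mu_{t,T} - \tfrac12 u^2 \sigma_{t,T}$, matching term by term with \eqref{eq:cond-mean}–\eqref{eq:cond-variance}.

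Finally, conditionally on $\mathcal{F}_t$, the function $u\mapsto\phi^r_{T|t}(u)$ is (a version of) the conditional characteristic function of $r_T$; since it coincides with the characteristic function of the $\mathcal{N}(\mu_{t,T},\sigma_{t,T})$ law, the conditional law of $r_T$ given $\mathcal{F}_t$ is $\mathcal{N}(\mu_{t,T},\sigma_{t,T})$ by uniqueness of characteristic functions. The unconditional normality claim made in the surrounding text follows either from the $t=0$ case (with $\mathcal{F}_0$ trivial up to null sets) or, independently, from Proposition \ref{prop:sol:SDE}, which writes $r_T$ as a deterministic quantity plus the Wiener integral $\int_0^T\sigma(s)R(T-s)\,dW_s$ of a deterministic $L^2$ integrand, hence Gaussian.

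There is essentially no obstacle here — the content was already done in deriving \eqref{eq:charfun}. The only point requiring a word of care is the measure-theoretic subtlety that $\mu_{t,T}$ and $\sigma_{t,T}$ are random (they depend on $r_t$ and on the path of $r$ over $[t-\tau_j,t]$, all $\mathcal{F}_t$-measurable), so the statement should be read as: on the event fixing these $\mathcal{F}_t$-measurable parameters, the regular conditional distribution of $r_T$ is the stated normal law. This follows from the identity of characteristic functions holding $\mathds{Q}$-almost surely for each fixed $u$, together with separability in $u$ (it suffices to check rational $u$ and use continuity) to get a single null set.
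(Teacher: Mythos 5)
Your proposal is correct and follows exactly the route the paper intends: the paper gives no separate proof of the corollary, simply noting that the form of \eqref{eq:charfun} exhibits the conditional characteristic function of a normal law with the stated mean and variance, which is precisely your identification-plus-uniqueness argument. Your additional care about the $\mathcal{F}_t$-measurability of the random parameters and the almost-sure identification over a countable dense set of $u$ is a welcome tightening but not a different approach.
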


The short rate being normally distributed is useful for simulating its values as well as estimating its parameters from historical data (using the maximum likelihood method, for example). Analytical (but complicated) formulae exist for all but one of the integrals in \eqref{eq:cond-mean}--\eqref{eq:cond-variance}; experience from numerical experiments suggests that numerical quadrature approximates the integrals (including the second integral in \eqref{eq:cond-mean}) very well.
\begin{remark}
By the tower property and equation \eqref{eq:charfun}, the characteristic function in \eqref{eq:caractFunction} can be expressed as \begin{align*}
     \phi^r_{T|t}(u) &=  E_{\mathds{Q}}\left(\left. e^{iu r_T}\right| \mathcal{F}_t \right) \\
     & = E_{\mathds{Q}}\left(\left. e^{iu r_T}\right| \sigma\left( \left\{r_{t-s}\right\}_{s=0}^{\tau_N} \right) \right)
\end{align*} for  $T>t\geq 0$  and 
 all  $u\in \mathds{R}$,  where $\sigma\left( \left\{r_{t-s}\right\}_{s=0}^{\tau_N}\right)$ is the sigma field generated by the random variables $r_{t-s}$ for $0\leq s\leq \tau_N$. This means that the conditional distribution of $r_T$ given $\mathcal{F}_t$, depends only on the values of the short rate from time $t-\tau_N$ up to time~$t$.
\end{remark}
We conclude this section by studying the stability of the process defined in \eqref{eq:r}. The stability of the solution of \eqref{eq:r} is related to the stability of the solution of the deterministic initial value problem \eqref{eq:deter}. The behavior of \eqref{eq:deter} is determined by the value of the roots of the characteristic function $h$, which is defined as
\[h(\lambda) =  \lambda -b -\sum_{j=1}^N c_je^{-\lambda \tau_j}  \]  for  $\lambda \in \mathds{C}$. A value of $\lambda$ that makes $h(\lambda) =0$ is called a characteristic root. Let us define the set $\Lambda$ as the set of all characteristic roots that is
\[ \Lambda = \left\{\lambda\in\mathds{C} : h(\lambda) =0\right\}.\] In fact, its stability is related to the supremum of the real parts of the characteristic roots, i.e.
\[ \nu_0 = \sup\{ \Re(\lambda) : \lambda\in \Lambda \}.\]
It is shown that the solution of \eqref{eq:deter} is asymptotically stable \citep[p. 39]{smith2011introduction}, if and only if $\nu_0 <0$ \citep[p. 533]{HALE1985533}. Next result from Hale, Infante and Fu-Shiang \cite{HALE1985533}, gives the conditions for which the solution of \eqref{eq:deter} is asymptotically stable.

\begin{theorem}\label{eq:StabilyDeter}\citep[Corollary 3.4]{HALE1985533}
The solution of the deterministic delay differential equation with initial value \eqref{eq:deter} is asymptotically stable for every value of the delay parameters $\tau_1,\ldots,\tau_N>0$ if and only if
\begin{equation}\label{eq:cond1}
    b + \sum_{j=1}^N c_j \neq 0, \quad |b| \geq  \sum_{j=1}^N |c_j| \text{ and } b<0.
\end{equation}
\end{theorem}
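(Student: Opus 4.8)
The statement is quoted from \cite[Corollary~3.4]{HALE1985533}; here I sketch how one would prove it directly. The plan is to translate the stability question into one about the characteristic roots and then study, uniformly in the delays, when these all lie in the open left half-plane. By the discussion preceding the theorem, the solution of \eqref{eq:deter} is asymptotically stable exactly when $\nu_0<0$, i.e.\ exactly when every zero of $h(\lambda)=\lambda-b-\sum_{j=1}^N c_je^{-\lambda\tau_j}$ has negative real part, so the task is to determine when this holds for \emph{every} admissible $(\tau_1,\dots,\tau_N)$. The main tool is the localisation bound: whenever $\Re\lambda\ge0$ one has $\lambda=b+\sum_j c_je^{-\lambda\tau_j}$ with $|e^{-\lambda\tau_j}|=e^{-\tau_j\Re\lambda}\le1$, hence $|\lambda|\le|b|+\sum_j|c_j|$; so there are only finitely many roots in any right half-plane, these depend continuously on the delays, and $\nu_0<0$ is equivalent to there being no root with $\Re\lambda\ge0$ at all.

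For sufficiency I would assume \eqref{eq:cond1}. Since $b+\sum_j c_j\le b+\sum_j|c_j|\le b+|b|=0$ and $b+\sum_j c_j\ne0$, we get $b+\sum_j c_j<0$. Next I would show that $h$ has no zero on the imaginary axis for any delays: if $h(i\omega)=0$, the real part gives $-b=\sum_j c_j\cos(\omega\tau_j)$, which is impossible if $|b|>\sum_j|c_j|$ and, if $|b|=\sum_j|c_j|$, forces $|\cos(\omega\tau_j)|=1$, hence $\sin(\omega\tau_j)=0$, for every $j$ with $c_j\ne0$; then the imaginary part yields $\omega=0$, and $h(0)=-(b+\sum_j c_j)\ne0$, a contradiction. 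Finally, for a fixed delay vector $\tau$ I would deform it along $s\mapsto s\tau$, $s\in[0,1]$: at $s=0$ the characteristic function is $\lambda-(b+\sum_j c_j)$, whose only zero $b+\sum_j c_j<0$ lies in the left half-plane, and since $h$ never vanishes on the boundary of $\{\Re\lambda\ge0,\ |\lambda|\le\rho\}$ for $\rho>|b|+\sum_j|c_j|$ (by the previous step and the localisation bound), the argument principle shows the number of zeros of $h$ in the open right half-plane is a continuous, integer-valued function of $s$, hence constantly $0$. Together with the absence of imaginary zeros this gives $\nu_0<0$ for every admissible delay vector.

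For necessity I would assume delay-independent asymptotic stability and recover \eqref{eq:cond1}. If $b+\sum_j c_j\ge0$, then either it is $0$ and $\lambda=0$ is a zero of $h$ for all delays, or it is positive and $h(0)=-(b+\sum_j c_j)<0<\lim_{\lambda\to+\infty}h(\lambda)$ gives a positive real zero for all delays; either way stability fails, so $b+\sum_j c_j<0$. Suppose now that $b\ge0$ or $\sum_j|c_j|>|b|$; a short case distinction (using $\sum_j c_j<-b$, which follows from $b+\sum_j c_j<0$) shows that then $|b|<\sum_j|c_j|$, i.e.\ $-b$ lies strictly inside $[-\sum_j|c_j|,\sum_j|c_j|]$. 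Because the set $\{\sum_j c_je^{i\theta_j}:\theta_1,\dots,\theta_N\in[0,2\pi)\}$ is a disk or an annulus with outer radius $\sum_j|c_j|$, one can choose angles $\theta_j$ with $\sum_j c_j\cos\theta_j=-b$ and $\sum_j c_j\sin\theta_j<0$; putting $\omega:=-\sum_j c_j\sin\theta_j>0$ and $\tau_j:=\theta_j/\omega>0$ (perturbed slightly so as to be distinct and ordered) yields delays for which $h(i\omega)=0$, so the solution is not asymptotically stable---a contradiction. Hence \eqref{eq:cond1} holds.

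I expect the genuine difficulty to lie in this last construction on the necessity side: solving the real and imaginary parts of $h(i\omega)=0$ simultaneously in the delay parameters, and, if one wants true exponential instability rather than a merely marginal root, checking transversality of the imaginary-axis crossing. Sufficiency is comparatively routine---a ``no root crosses the imaginary axis'' homotopy---with the only subtle point being the boundary case $|b|=\sum_j|c_j|$, dealt with above. These are precisely the technical ingredients assembled in \cite[Corollary~3.4]{HALE1985533}, to which the statement is attributed.
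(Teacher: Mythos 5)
The paper does not prove this theorem at all: it is imported verbatim, with attribution, from \cite[Corollary 3.4]{HALE1985533}, and no proof environment follows the statement. There is therefore no in-paper argument to compare against; what can be assessed is whether your blind sketch is a faithful reconstruction of the cited result, and it essentially is. The sufficiency half (localisation of right-half-plane roots in a disk of radius $|b|+\sum_j|c_j|$, exclusion of imaginary-axis roots including the boundary case $|b|=\sum_j|c_j|$ via the forced equalities $\lvert\cos(\omega\tau_j)\rvert=1$, and a homotopy $s\mapsto s\tau$ counted by the argument principle) and the necessity half (deriving $b+\sum_jc_j<0$, reducing the failure of \eqref{eq:cond1} to $|b|<\sum_j|c_j|$, and constructing delays that place a root at $i\omega$) follow the standard route of Hale--Infante--Tsen. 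One concrete repair is needed in the necessity construction: ``perturbing the $\tau_j$ slightly so as to be distinct and ordered'' is the wrong move, since perturbing the delays moves the root off the imaginary axis and could push it into the left half-plane, undoing the contradiction. The correct device is to replace $\theta_j$ by $\theta_j+2\pi k_j$ for suitable nonnegative integers $k_j$ before setting $\tau_j=\theta_j/\omega$: this leaves $e^{-i\omega\tau_j}=e^{-i\theta_j}$, hence the root at $i\omega$, unchanged, while letting you make the delays strictly positive, distinct and ordered as the model \eqref{eq:deter} requires. With that adjustment (and noting that a purely imaginary characteristic root already contradicts asymptotic stability, so no transversality check is needed), the sketch is complete at the level of detail one would expect for a cited classical result.
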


Finally, this last result allows us to give the conditions under which the unique solution of equation \eqref{eq:r} has a limiting distribution. 
\begin{proposition}\label{prop:stationary_distr}
Assume that the conditions in \eqref{eq:cond1} hold and that the functions 
\(a,\sigma : [0,\infty)\to\mathbb{R}\) are bounded and satisfy
\[
\lim_{t\to\infty} a(t) = a_\infty \in \mathbb{R},
\qquad
\lim_{t\to\infty} \sigma(t) = \sigma_\infty > 0.
\]
Then, for every choice of delay parameters \(\tau_1,\dots,\tau_N>0\), the process
\((r_t)_{t\ge 0}\) admits a limiting distribution: as \(T\to\infty\),
\(r_T\) converges in distribution to a normally distributed random variable with mean
\[
m_\infty = a_\infty \int_0^\infty R(u)\,du
\]
and variance
\[
v_\infty^2 = \sigma_\infty^2 \int_0^\infty R^2(u)\,du,
\]
\end{proposition}
\begin{proof}
    Since the function $R$ satisfies the initial value problem \eqref{eq:Rdeter}, we have that $R$ is asymptotically stable; see Theorem \ref{eq:StabilyDeter}. This implies that the function $R$ can be bounded by an exponential function,
    \[ |R(t)| \leq M e^{-\gamma t} \text{ for } t\geq 0\] for some $\gamma,M>0$  \citep[p~215, Corollary 6.1]{hale2006functional}. From last result and the dominated convergence Theorem,
    \begin{align*}
        \lim_{T\to \infty} \int_0^{T}a(T-u)R(u)du & =  \lim_{T\to \infty} \int_0^{\infty}a(T-u)R(u) \mathbf{1}_{[0,T]}(u) du\\
        & = \int_0^{\infty}a_\infty R(u) du.
    \end{align*} Similarly, we can obtain,
    \begin{align*}
    \lim_{T\to \infty} \int_0^{T}\sigma^2(T-u)R^2(u)du  &  = \sigma^2_\infty \int_0^{\infty} R^2(u)du.  
    \end{align*} For each $T>0$, the random variable $r_T$ is normal and  since the first two moments converge, it follows that the distribution of \(r_T\) converges weakly to a normal distribution with mean $m_\infty$ and variance $v^2_\infty$.
\end{proof}

\section{ Instantaneous forward rate} \label{sect:forward}

It is possible to compute the instantaneous forward rate in terms of the short rate. Furthermore, we will show that it satisfies the risk-neutral dynamics of the Heath-Jarrow-Morton model.

Let $D$ be defined as in \eqref{eq:DerD1} in Theorem \ref{th:dde} with $z=0$ and $d_1=-1$ (equivalently, equation \eqref{eq:DSolBond} in Proposition \ref{prop:BondPriceFormula}). Then it is clear from \eqref{eq:DerD1} that
\begin{align}\label{eq:derD}
    \frac{\partial D(\ell)}{\partial \ell} & = -R(\ell) \text{ for } \ell\in [0,T]. 
\end{align}
It then follows from \eqref{eq:bond}, in conjunction with
    \begin{align*}
        \int_0^{T-t} a(T-u) D(u)du & =  \int_t^{T} a(u) D(T-u)du, \\
      \int_0^{T-t} \sigma^2(T-u) D^2(u)du & =  \int_t^T \sigma^2(u) D^2(T-u)du,
    \end{align*}
that the instantaneous forward rate satisfies
\begin{align}
    f(t,T) & = - \frac{\partial}{\partial T} \ln B(t,T)\nonumber\\
    &= -a(T)D(0)  - \int_t^T a(u) D'(T-u) du \nonumber \\
    & \qquad - \frac{\sigma^2(T)}{2} D^2(0)  -  \int_t^T \sigma^2(u) D(T-u) D'(T-u) du  \nonumber \\
    & \qquad + R(T-t) r_t + \sum_{j=1}^N c_j  \int_{t-\tau_j}^t R(T-u-\tau_j)r_u du \nonumber \\
    & = \int_t^T a(u) R(T-u) du  +\int_t^T \sigma^2(u) D(T-u) R(T-u) du \nonumber\\
    & \qquad + R(T-t) r_t + \sum_{j=1}^N c_j \int_{t-\tau_j}^t R(T-u-\tau_j)r_u du.\label{eq:ForwardRates}
\end{align} 

The following result gives us a stochastic differential equation for the forward rate.

\begin{proposition}\label{prop:Forward}
The instantaneous forward rate defined in \eqref{eq:ForwardRates} satisfies the stochastic differential equation
 \[
 df(t,T) =  -\sigma(t)^2 D(T-t)R(T-t)  dt + \sigma(t) R(T-t)dW_t
 \]  for $t\in[0,T]$ with initial condition
 \begin{align*}
      f(0,T) & = \int_0^T a(u) R(T-u) du  +\int_0^T \sigma^2(u) D(T-u) R(T-u) du\\
      & \qquad + R(T) r_0 + \sum_{j=1}^N c_j  \int_{-\tau_j}^0 R(T-u-\tau_j)\phi(u) du.
 \end{align*}
\end{proposition}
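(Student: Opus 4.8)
The plan is to differentiate the closed-form expression \eqref{eq:ForwardRates} for $f(t,T)$ with respect to $t$, treating $T$ as fixed, and to identify the resulting drift and diffusion terms. The expression \eqref{eq:ForwardRates} has four summands: the two deterministic integrals $\int_t^T a(u)R(T-u)\,du$ and $\int_t^T \sigma^2(u)D(T-u)R(T-u)\,du$, the term $R(T-t)r_t$, and the delay term $\sum_{j=1}^N c_j \int_{t-\tau_j}^t R(T-u-\tau_j)r_u\,du$. First I would handle the two deterministic integrals using the fundamental theorem of calculus: differentiating in $t$ produces $-a(t)R(T-t)\,dt$ and $-\sigma^2(t)D(T-t)R(T-t)\,dt$ respectively (using $R(0)=1$). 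The second of these is exactly the claimed drift term, so the remaining task is to show that the $t$-derivatives of $R(T-t)r_t$ and of the delay integral together cancel the spurious term $-a(t)R(T-t)\,dt$ and produce the diffusion term $\sigma(t)R(T-t)\,dW_t$, with no leftover drift.

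Next I would apply the It\^o product rule to $R(T-t)r_t$, which gives $-R'(T-t)r_t\,dt + R(T-t)\,dr_t$; substituting the dynamics \eqref{eq:r} for $dr_t$ expands this into $\big(-R'(T-t)r_t + R(T-t)a(t) + bR(T-t)r_t + R(T-t)\sum_{j=1}^N c_j r_{t-\tau_j}\big)dt + \sigma(t)R(T-t)\,dW_t$. The diffusion piece is already the claimed one. For the delay integral, the Leibniz rule yields $\sum_{j=1}^N c_j\big(R(T-t-\tau_j)r_t - R(T-t)r_{t-\tau_j} + \int_{t-\tau_j}^t \partial_t R(T-u-\tau_j)\,r_u\,du\big)dt$; since $R$ satisfies \eqref{eq:Rdeter}, one has $\partial_t R(T-u-\tau_j) = -R'(T-u-\tau_j)$, but more usefully the $r_t$-coefficient contributions combine. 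Collecting all drift terms, the coefficient of $a(t)$ is $R(T-t) - R(T-t) = 0$; the coefficient of $r_t$ is $-R'(T-t) + bR(T-t) + \sum_{j=1}^N c_j R(T-t-\tau_j)$, which vanishes precisely because $R$ solves \eqref{eq:Rdeter} (for $T-t>0$); and the remaining terms involving $r_{t-\tau_j}$ from the product rule cancel against the $-R(T-t)r_{t-\tau_j}$ boundary terms from the Leibniz rule. What survives is the integral term $\int_t^T \sigma^2(u)D'(T-u)R(T-u)\,du$-type contributions together with $\int_{t-\tau_j}^t$ integrals, and these must be shown to telescope to zero using \eqref{eq:derD}, i.e. $D'(\ell) = -R(\ell)$, so that the $\sigma^2$-integral derivative $-\sigma^2(t)D(T-t)R(T-t)$ from the second deterministic integral is the unique surviving drift.

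The main obstacle will be the careful bookkeeping of the delay integral's derivative: the Leibniz rule produces both the boundary terms at $u=t$ and $u=t-\tau_j$ and an integral of $\partial_t R(T-u-\tau_j)$, and one must track how the latter interacts with the other integral terms (in particular re-indexing sums over $j$ and matching them against the convolution structure of $R$ encoded in \eqref{eq:Rdeter}). An alternative and perhaps cleaner route, which I would fall back on if the direct differentiation becomes unwieldy, is to avoid differentiating \eqref{eq:ForwardRates} altogether: instead use $f(t,T) = -\partial_T \ln B(t,T)$ together with the bond dynamics \eqref{eq:BondRNP}, namely $d\ln B(t,T) = \big(r_t - \tfrac12\sigma^2(t)D^2(T-t)\big)dt + \sigma(t)D(T-t)\,dW_t$, and then differentiate this in $T$ and interchange $\partial_T$ with $d$ (justified by the smoothness of $D$ and standard stochastic Fubini arguments as already used in the proof of Proposition \ref{prop:sol:SDE}). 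Since $\partial_T D(T-t) = D'(T-t) = -R(T-t)$ by \eqref{eq:derD} and $\partial_T\big(\tfrac12 D^2(T-t)\big) = D(T-t)D'(T-t) = -D(T-t)R(T-t)$, this immediately gives $df(t,T) = -\sigma^2(t)D(T-t)R(T-t)\,dt + \sigma(t)R(T-t)\,dW_t$, and the initial condition $f(0,T)$ is just \eqref{eq:ForwardRates} evaluated at $t=0$ with $r_u = \phi(u)$ on $[-\tau_j,0]$. I would present this second approach as the proof, noting the first as a consistency check.
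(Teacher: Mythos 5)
Your presented proof (the second route) is correct, but it is genuinely different from the paper's. The paper proves the proposition by differentiating the explicit representation \eqref{eq:ForwardRates} directly in $t$: it rewrites the delay term as an auxiliary process $\gamma_{t,j}=\gamma_{0,j}+\int_0^t\bigl(R(T-s-\tau_j)r_s-R(T-s)r_{s-\tau_j}\bigr)ds$ (so its differential consists only of boundary-type terms), substitutes the dynamics \eqref{eq:r} for $dr_t$ in the term $R(T-t)r_t$, and then cancels the entire drift except $-\sigma^2(t)D(T-t)R(T-t)$ by invoking the delay ODE \eqref{eq:Rdeter} for $R$. You instead differentiate the log-bond dynamics \eqref{eq:BondRNP} in $T$ and use only $\partial_T D(T-t)=-R(T-t)$; this is shorter and avoids the delay bookkeeping entirely, at the cost of having to justify interchanging $\partial_T$ with the stochastic differential (which you correctly flag as a stochastic-Fubini argument of the kind already used for Proposition \ref{prop:sol:SDE}), and it leans on \eqref{eq:BondRNP}, which the paper states without detailed proof. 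Both arguments are sound; the paper's buys a self-contained verification from the short-rate dynamics, yours buys brevity. One caution about your first (fallback) route, which is essentially the paper's: the integrand $R(T-u-\tau_j)r_u$ of the delay integral does not depend on $t$, so the Leibniz rule produces \emph{only} the two boundary terms and no interior term; your claim that $\partial_t R(T-u-\tau_j)=-R'(T-u-\tau_j)$ is incorrect (that derivative is zero), and the surviving drift cancels via the identity $R'(T-t)=bR(T-t)+\sum_j c_jR(T-t-\tau_j)$ rather than by any telescoping of integrals. Since you do not rely on that route for the proof you present, this does not affect the validity of your argument.
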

\begin{proof}
Observe that the instantaneous forward rate in \eqref{eq:ForwardRates}  can be written as
\begin{align}
     f(t,T) &= \int_t^T a(u) R(T-u) du  +\int_t^T \sigma^2(u) D(T-u) R(T-u) du \nonumber\\ 
     & \qquad + R(T-t) r_t +\sum_{j=1}^N c_j \gamma_{t,j},\label{eq:ForwardRates2}
\end{align}
where the processes $\gamma_j=(\gamma_{t,j})_{t\in[0,T]}$ are defined as
  \[
    \gamma_{t,j}
    = \int_{t-\tau_j}^t  R(T-s-\tau_j) r_s \mathds{1}_{[-\tau_j,T-\tau_j]}(s)ds \text{ for all } t\in [0,T] \text{ and } j=1,2,\ldots,N.     
  \] Similar to the proof of Theorem \ref{th:transform}, we have that
  \begin{align*}
         \gamma_{t,j}  = \gamma_{0,j} + \int_{0}^t \left( R(T-s-\tau_j)  r_s - R(T-s) r_{s-\tau_j} \right) ds.
\end{align*} 

Equations \eqref{eq:derD} and \eqref{eq:ForwardRates2} allow us to write the forward rate in differential form as
\begin{align}
    df(t,T) & = \left[ -a(t)R(T-t) - \sigma(t)^2 D(T-t)R(T-t) -  R'(T-t) r_t\right] dt \nonumber \\
    & \qquad + R(T-t)\left[ a(t)+ br_t + \sum_{j=1}^N c_j r_{t-\tau_j} \right] dt  + \sigma(t) R(T-t) dW_t \nonumber\\
    & \qquad + \sum_{j=1}^N c_j \left[ R(T-t-\tau_j)  r_t  - R(T-t) r_{t-\tau_j} \right]dt \label{eq:DerftT}.
\end{align}  
The claimed result comes from the initial value problem in equation \eqref{eq:Rdeter}.
\end{proof}

Observe from Proposition \ref{prop:Forward} that the instantaneous forward rate can be written as
 \[
 df(t,T) =  \sigma(t,T) \left(\int_t^T \sigma(t,s) ds\right)  dt  + \sigma(t,T)dW_t
 \] where $\sigma(t,T) =\sigma(t) R(T-t)$. To see this, notice from \eqref{eq:derD} that \begin{equation*}
     -\sigma(t) D(T-t)  = \sigma(t) \int_0^{T-t} R(s)ds =  \int_t^T \sigma(t) R(s-t) ds.
 \end{equation*}
 This implies that the forward rate satisfies the risk-neutral dynamics of the Heath-Jarrow-Morton model.

\section{Implied short rates}\label{sect:yield}
The proposed model depends on the function $\phi$, which represents the initial value of the short rate on the interval $(-\infty,0]$. Since the short rate value is not observed, one can try to obtain the short rate from the price of zero-coupon bonds given by the market. As we have seen in Section \ref{Sect:ZCB}, we can price zero-coupon bonds, and thus, we can recover the function $\phi$ using the yield curve.  In this section, we assume that the model proposed in equation \eqref{eq:r} has only one delay parameter.

For $t\in[0,\tau_1]$, the dynamics \eqref{eq:r} for the short rate can be written as
\[dr_t = \left( \nu(t)  + b r_t\right)dt + \sigma dW_t,\]
where the function $\nu:[0,\tau_1]\rightarrow\mathds{R}$ is defined as
\begin{equation}
 \nu(s) = a(s) + c_1 \phi(s-\tau_1)  \text{ for } s\in[0,\tau_1] \label{eq:nudef}.
\end{equation}
This implies that the short rate behaves like the Hull-White model \citep{hull1994numerical} when $t\le \tau_1$. Hence we can replicate the yield curve perfectly for maturities $T\in[0,\tau_1]$ and compute the implied short rate on the interval $[-\tau_1,0]$. To obtain perfect replication of the yield curve for maturities smaller or equal to $\tau_1$, we need to pick $\nu$ as
\begin{align}\label{eq:nu}
\nu(s) = \frac{\partial f^M(0,s)}{\partial s} - b f^M(0,s) - \frac{\sigma^2}{2b} \left( 1- e^{2bs}\right) \text{ for } s\in [0,\tau_1], 
\end{align} where $\left(f^M(0,s)\right)_{s\in[0,\tau_1]}$ represents the instantaneous market forward rates at time $0$ \citep[p.~73]{brigo2006interest}. In addition to this, we also need to impose the initial condition \begin{align} \label{eq:r0Forward}
    r_0 & = f^M(0,0). 
\end{align}
We can now use this to recover $\phi$. From \eqref{eq:nudef}, we get that
\begin{equation}\label{eq:phidef}
    \phi(s) = \frac{1}{c_1} (\nu(s+\tau) - a(s)) \text{ for } s \in [-\tau_1,0],
\end{equation}
where $\nu$ is given by \eqref{eq:nu}. Observe that $r_0=\phi(0)$ and hence from \eqref{eq:nudef} and \eqref{eq:r0Forward}, we impose that
\[ f^M(0,0) =  \frac{1}{c_1}(\nu(\tau_1) -a(0)).\] This means that $r_0=f^M(0,0)$. 

Hence, if the function $\phi$ satisfies equation \eqref{eq:phidef}, the model can perfectly replicate the prices of bonds with maturity less or equal to $\tau_1$. The parameters $a$, $b$, $c_1$, and $\sigma$ can be calibrated using zero-coupon bonds with maturities bigger than $\tau_1$. In this we can recover the implied function $\phi$ which is obtained  by the market. That is, this implied function $\phi$ in \eqref{eq:phidef} represents the implied short rates given by the market. 

\section{Caplets}\label{sec:caplets}

We now consider pricing caplets on backward-looking rates. In this section, we build on the ideas presented by Lyashenko and Mercurio \cite{lyashenko10looking} and adapt them to the specific characteristics of our model. Define the money market account $M=(M(t))_{t\geq 0}$ as
\[ M(t) = e^{\int_0^tr_udu} \text{ for all } t\ge0,\]
and then the extended zero coupon bond $\left(B^*(t,T)\right)_{t\ge0}$, as
\begin{align}\label{eq:ZCBExtended}
    B^*(t,T)  &= \begin{cases}
      E_{\mathds{Q}}\left[\left. e^{\int_t^T r_u du} \right| \mathcal{F}_t \right]   & \text{ if } t\in [0,T] \\
         e ^{\int_T^t r_u du} & \text{ if }  t\geq T,
    \end{cases}  \nonumber \\
    &= \begin{cases}
      B(t,T)  & \text{ if } t\in [0,T] \\
         \frac{M(t)}{M(T)} & \text{ if }  t\geq T.
    \end{cases}  
 \end{align}
From equation \eqref{eq:BondRNP}, the extended zero coupon bond can be written as\begin{equation}\label{eq:zcbSol}
     B^*(t,T) = B(0,T) \exp\left\{ \int_{0}^t \left( r_s - \frac{\sigma^2(s)}{2} D^2(T-s) \right) ds + \int_0^t \sigma(s)D(T-s) dW_s \right\}
 \end{equation} for  $t\in[0,T]$. Because $D(t) =0$ for $t\in (-\infty,0]$, the definition of extended zero coupon bonds in \eqref{eq:ZCBExtended} still satisfies \eqref{eq:zcbSol} when $t> T$. To see this, notice that for $t>T$, we have
 \begin{align*}
& B(0,T) \exp\left\{ \int_{0}^t \left( r_s - \frac{\sigma^2(s)}{2} D^2(T-s) \right) ds + \int_0^t \sigma(s) D(T-s) dW_s \right\}\\
     & =  B(0,T) \exp\left\{ \int_{0}^T \left( r_s - \frac{\sigma^2(s)}{2} D^2(T-s) \right) ds + \int_0^T \sigma(s) D(T-s) dW_s \right\}\\
     &\quad \cdot \exp\left\{ \int_{T}^t \left( r_s - \frac{\sigma^2(s)}{2} D^2(T-s) \right) ds + \int_T^t \sigma(s) D(T-s) dW_s \right\}\\
     & = B(T,T)  e^{ \int_T^t r_s ds} = \frac{M(t)}{M(T)} = B^*(t,T).
 \end{align*}It follows that the extended zero coupon bond also satisfies the stochastic differential equation \eqref{eq:BondRNP} for all $t\ge0$. This allows us to define a new measure using the extended zero coupon bond as numeraire, called the extended $T$-forward measure \citep[Section 4.2.4]{andersen2010interest}. This new measure combines the usual $T$-forward measure up to time $T$ with the risk-neutral probability $\mathds{Q}$ after time $T$. To construct the extended $T$-forward measure, we define the process $Z = (Z_t)_{t\geq 0}$ as 
 \[ Z_t = \exp\left\{ - \frac{1}{2} \int_0^t \sigma^2(s) D^2(T-s) ds + \int_0^t \sigma(s) D(T-s) dW_s\right\} \text{ for } t\geq 0.\] 
 Since $D$ is a deterministic continuous function, the process $Z$ is a martingale with respect to $(\mathcal{F}_t)$. Note that $Z_t = Z_T$ for all $t\ge T$. By Girsanov Theorem,  there is an equivalent measure $\mathds{Q}^{T^*}$ with Radon-Nikodym density
 \begin{equation}\label{eq:Q*}
     \left.\frac{d \mathds{Q}^{T^*}}{d\mathds{Q}}\right|_{\mathcal{F}_t}  = Z_t = Z_{T\wedge t}\text{ for } t\geq 0
 \end{equation} with respect to $\mathds{Q}$ and the process $W^{T^*} = \left(W_t^{T^*}\right)_{t\geq 0}$ defined as \[ W_t^{T^*} = W_t - \int_0^t \sigma(s) D(T-s) ds \text{ for } t\geq 0\] is a Brownian motion under $\mathds{Q}^{T^*}$. The measure $\mathds{Q}^{T^*}$ in \eqref{eq:Q*} behaves like the usual $T$-forward measure when $t\leq T$. 
 
 In the case $t> T$, the measure $\mathds{Q}^{T^*}$ acts like the risk-neutral measure. Furthermore, by application of Bayes formula \citep[Theorem A.113]{pascucci2011pde}, we have for every $H\geq t$ that
 \begin{align}
     E_{T^*}\left[\left.f(X) \right| \mathcal{F}_t \right]  & = \frac{ E_{\mathds{Q}}\left[\left.f(X) Z_{T\wedge H}\right| \mathcal{F}_t \right]}{E_{\mathds{Q}}\left[\left. Z_{T\wedge H}\right| \mathcal{F}_t \right]} \nonumber\\
     & = \frac{ Z_{T\wedge H} E_{\mathds{Q}}\left[\left.f(X) \right| \mathcal{F}_t \right]}{Z_{T\wedge H}} = E_{\mathds{Q}}\left[\left.f(X) \right| \mathcal{F}_t \right] \text{ for } t>T, \label{eq:QequalQ*}
 \end{align} where $E_{T^*}\left[\left. . \right| \mathcal{F}_t \right]$ is the conditional expectation with respect to $\mathds{Q}^{T^*}$, $f$ is a bounded measurable function and $X$ is a random variable. Equation \eqref{eq:QequalQ*} tells us that the conditional distribution of any random variable $X$ given $\mathcal{F}_t$ is the same under the risk-neutral measure $\mathds{Q}$ and under the equivalent measure $\mathds{Q}^{T^*}$ when $t>T$. Hence the measure $\mathds{Q}^{T^*}$ defined in \eqref{eq:Q*} is an extended $T$-forward measure.

\subsection{Risk-free rates}

Risk free rates like SONIA, SOFR, and €STR are published on a daily basis. The actual daily-compounded setting-in-arrears rate on the time period $[S,T)$ is given by
\begin{equation}\label{eq:dailyrates}
    \frac{1}{\Delta} \left[ \prod_{i=1}^n (1+r_i \delta)-1\right],
\end{equation} where $r_i$ is the risk free rate on day $i$, $\delta$ is the day count fraction, $n$ is the number of business days in the time interval $[S,T)$ and $\Delta = T-S$. From \eqref{eq:dailyrates},  one can obtain the continuous time approximation
\begin{align}
    R(S,T) & = \frac{1}{\Delta} \left[ e^{\int_S^T r_u du} - 1\right]  = \frac{1}{\Delta} \left[ \frac{M(T)}{M(S)} - 1\right]. \label{eq:RFR} 
\end{align} This continuous time approximation is obtained by taking the limit in \eqref{eq:dailyrates} when $\delta \to 0$. From the continuous time approximation in \eqref{eq:RFR}, is it possible to define two types of forward rates, namely the backward looking forward rate, which is defined as
    \[ R_{S,T}(t) = E_{T^*}\left[ \left. R(S,T)\right| \mathcal{F}_t\right] \text{ for } t\geq 0,\]
and the forward looking rate, which is defined as
    \[F(S,T) = E_{T^*}\left[ \left. R(S,T)\right| \mathcal{F}_S\right].\]
     Notice that the forward rate is a particular case of the backward looking rate with $t=S$. It can be proved that the backward looking rate can be expressed in terms of extended zero coupon bonds, namely
    \begin{align}
        R_{S,T}(t) & = \begin{cases}
       \frac{1}{\Delta} \left[\frac{B(t,S)}{B(t,T)} -1\right]      &  \text{ when } t \leq S, \\
         \frac{1}{\Delta} \left[\frac{M(t)/M(S)}{B(t,T)} -1\right]    & \text{ when } S < t < T,\\
         \frac{1}{\Delta} \left[\frac{M(T)}{M(S)} -1\right]    & \text{ when } t \geq  T
        \end{cases} \nonumber\\
        & = \frac{1}{\Delta} \left[ \frac{B^*(t,S)}{B^*(t,T)} - 1\right]  \text{ for } t\geq 0 \label{eq:backward_rate}
    \end{align} \citep[p. 12]{cao2023lmm}. We will prove that it is possible to obtain an analytical formula for caplets on the backward looking rate \eqref{eq:backward_rate}.

\subsection{Dynamics of the backward looking rate}

We are interested in studying the dynamics of the backward looking rate  $R_{S,T}$ in the extended $T$-forward measure. We derive the dynamics of the backward-looking rate from the dynamics of the extended zero-coupon. The backward looking rate can be expressed as
\begin{equation}\label{eq:RY}
    R_{S,T}(t) = \frac{1}{\Delta} \left[ Y_{S,T}(t) - 1\right],
\end{equation} where $Y_{S,T}(t) = \frac{B^*(t,S)}{B^*(t,T)}$. We will show that is easier to model the bond ratio $Y_{S,T}$ than the backward rate $R_{S,T}$. Equation \eqref{eq:zcbSol} and the It\^o formula give us the stochastic differential equation
\begin{align}
    dY_{S,T}(t) & = Y_{S,T}(t) \sigma(t)\left(D(S-t) - D(T-t) \right)\left[dW_t - \sigma(t) D(T-t) dt\right] \nonumber\\
    & = Y_{S,T}(t) \sigma(t)\left(D(S-t) - D(T-t) \right)dW_t^{T^*}.\label{eq:YSDE}
\end{align} It can be observed that under the extended $T$-forward measure, the process $Y_{S,T}$ is a martingale and satisfies a log-normal distribution. Equations \eqref{eq:RY}--\eqref{eq:YSDE} and the It\^o formula allow us to express the backward looking rate as
\begin{equation}
     dR_{S,T}(t)  =  \left( R_{S,T}(t) + \frac{1}{\Delta}\right) \sigma(t) \left( D(S -t) - D(T-t) \right)dW_t^{T^*}. \label{eq:SDER}
\end{equation}  Notice that the backward looking rate is also a martingale under the extended $T$-forward measure and does not satisfy a normal nor log-normal distribution. For this reason, working with $Y_{S,T}$ is more convenient than with $R_{S,T}$.

\begin{proposition}
The backward looking interest rate $R_{S,T}$ satisfies
\begin{align}
R_{S,T}(t) &=  R_{S,T}(0) \psi_t - \psi_t \int_0^t \frac{1}{ \Delta \psi_u } \sigma^2(u) \left( D(S - u) - D(T-u) \right)^2du \nonumber\\
& \qquad + \psi_t \int_0^t \frac{1}{\Delta \psi_u}  \sigma(u) \left( D(S -u) - D(T-u) \right)dW_u^{T^*}, \label{eq:Rsol}
\end{align} where the process $\psi = \left(\psi_t\right)_{t\geq 0}$ is defined as
\begin{align}
     \psi_t  & = \exp\left\{ -\frac{1}{2}\int_0^t \sigma^2(u) \left( D(S -u) - D(T-u) \right)^2 du \right\} \nonumber \\
      & \cdot \exp\left\{\int_0^t \sigma(u) \left( D(S -u) - D(T-s) \right) dW_u^{T^*} \right\} \text{ for } t \geq 0. \label{eq:psi}
\end{align}
\end{proposition}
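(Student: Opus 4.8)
The plan is to recognise \eqref{eq:SDER} as a linear stochastic differential equation in $R_{S,T}$ and solve it by variation of constants, with $\psi$ playing the role of the integrating factor that absorbs the homogeneous part. Abbreviate $\theta(u) = \sigma(u)\left(D(S-u) - D(T-u)\right)$; since $D$ is continuous (Theorem \ref{th:dde}) and $\sigma$ is continuous, $\theta$ is a bounded continuous deterministic function on every compact interval $[0,t]$, so the coefficient in \eqref{eq:SDER} is Lipschitz in $R_{S,T}(t)$ there and \eqref{eq:SDER} has a unique strong solution. It therefore suffices to exhibit one solution and check it agrees with the stated formula.

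First I would note that $\psi$ in \eqref{eq:psi} is the stochastic exponential of $\int_0^{\cdot}\theta(u)\,dW_u^{T^*}$, hence the unique solution of $d\psi_t = \psi_t\theta(t)\,dW_t^{T^*}$ with $\psi_0 = 1$; it is strictly positive, so $1/\psi$ is well defined and, $\theta$ being bounded on $[0,t]$, the integrals $\int_0^t \theta^2(u)/(\Delta\psi_u)\,du$ and $\int_0^t \theta(u)/(\Delta\psi_u)\,dW_u^{T^*}$ make sense. Then I would make the ansatz $R_{S,T}(t) = \psi_t V_t$ and identify $V$. Applying the It\^o product rule and substituting $d\psi_t = \psi_t\theta(t)\,dW_t^{T^*}$ reduces \eqref{eq:SDER} to
\[
\psi_t\,dV_t + d\langle \psi,V\rangle_t = \tfrac{1}{\Delta}\theta(t)\,dW_t^{T^*}.
\]
Writing $dV_t = \mu_t\,dt + \nu_t\,dW_t^{T^*}$ and matching the $dW_t^{T^*}$ and $dt$ parts gives $\nu_t = \theta(t)/(\Delta\psi_t)$ and $\mu_t = -\theta^2(t)/(\Delta\psi_t)$, so that, using $V_0 = R_{S,T}(0)/\psi_0 = R_{S,T}(0)$,
\[
V_t = R_{S,T}(0) - \int_0^t \frac{\theta^2(u)}{\Delta\psi_u}\,du + \int_0^t \frac{\theta(u)}{\Delta\psi_u}\,dW_u^{T^*}.
\]
Multiplying through by $\psi_t$ and substituting $\theta(u) = \sigma(u)\left(D(S-u)-D(T-u)\right)$ yields precisely \eqref{eq:Rsol}.

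Finally I would close the loop by verifying directly that this candidate solves \eqref{eq:SDER}: applying the It\^o product rule to $\psi_t V_t$ with the $V$ above cancels the $dt$ contributions and recovers $\left(\psi_tV_t + \tfrac{1}{\Delta}\right)\theta(t)\,dW_t^{T^*} = \left(R_{S,T}(t)+\tfrac{1}{\Delta}\right)\sigma(t)\left(D(S-t)-D(T-t)\right)\,dW_t^{T^*}$, and $\psi_0V_0 = R_{S,T}(0)$ gives the correct initial value; uniqueness then identifies it with the backward-looking rate. I do not expect a genuine obstacle here—the argument is a routine linear-SDE computation—so the only points that deserve an explicit word are the strict positivity and regularity of $\psi$ and the boundedness of $\theta$ on compacts, which together guarantee that $1/\psi$, the drift integral, and the stochastic integral in \eqref{eq:Rsol} are all well defined.
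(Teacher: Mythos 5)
Your argument is correct and is essentially the paper's own proof: both treat \eqref{eq:SDER} as a linear SDE and solve it by variation of constants with the stochastic exponential $\psi$ as integrating factor, the only cosmetic difference being that the paper computes $d(1/\psi_t)$ and then $d(R_{S,T}(t)/\psi_t)$ by the It\^o product rule, whereas you write $R_{S,T}(t)=\psi_t V_t$ and match coefficients to identify $dV_t$ --- the two computations coincide term by term. Your added remarks on the strict positivity of $\psi$, the boundedness of $\theta$ on compacts, and uniqueness for \eqref{eq:SDER} are sound and slightly more careful than the paper, which simply cites Kloeden and Platen for the method.
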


\begin{proof}
To solve the stochastic differential equation \eqref{eq:SDER}, we proceed as in the work of Kloeden and Platen \cite[Section 4.2]{platen1992Numerical}. Observe that the process $\psi$ in equation \eqref{eq:psi} satisfies the stochastic differential equation
$$d\psi_t = \psi_t \sigma(t) \left( D(S-t) - D(T-t) \right) dW_t^{T^*}.$$
By the It\^o formula, we have
\begin{multline*}
    d\left(\frac{1}{\psi_t} \right) =\left( \frac{1}{\psi_t}  \sigma^2(t) \left( D(S -t) - D(T-t) \right)^2 \right)dt \\
     -\frac{1}{\psi_t}  \sigma(t) \left( D(S -t) - D(T-t) \right) dW_t^{T^*}.
\end{multline*}
By the It\^o product rule, we have that
\begin{multline}
  d\left(\frac{R_{S,T}(t)}{\psi_t}\right) 
   = - \frac{1}{\Delta \psi_t} \sigma^2(t) \left( D(S -t) - D(T-t) \right)^2dt \\
   + \frac{1}{\Delta \psi_t}  \sigma(t) \left( D(S -t) - D(T-t) \right)dW_t^{T^*}. \label{eq:RSDE2}
\end{multline} From equation \eqref{eq:RSDE2}, one can trivially obtain the result in \eqref{eq:Rsol}.
\end{proof}

\begin{remark}
    The volatility of the backward looking rate $R_{S,T}$ in \eqref{eq:SDER} is given by the function
    \[ g(u) = \sigma(u) \left( D(S -u) - D(T-u) \right) \text{ for } u \geq 0.\] Since $D(u) = 0$ for $u\in (-\infty,0]$, $g(u) =0 $ for $u\geq T$, so the volatility disappears for values of $t$ that are bigger than $T$. From equation \eqref{eq:derD}, the derivative of $g$ satisfies\[
         g'(u) = \sigma'(u) \left(D(S-u) - D(T-u)\right)
          - \sigma(u) \left(R(S-u) - R(T-u)\right)
    \] for all $u\ge0$. If we assume that $\sigma(t) = \sigma >0$ for all $t\geq 0$, then using the definition of the function $R$ in \eqref{eq:R}, we arrive at
    \begin{align*}
        g'(u) &  = - \sigma R(T-u)\\
        & = - \sigma \sum_{n=0}^{\lfloor\frac{T-u}{\tau_1}\rfloor} \sum_{|\alpha|= n} \frac{c^\alpha}{\alpha !} \left(T-u- \langle \alpha,\tau \rangle\right)^n e^{b\left(T-u-\langle \alpha,\tau \rangle\right)} H\left(T-u- \langle \alpha,\tau \rangle\right)
    \end{align*} for $u\in [S,T]$. Notice that if $c^\alpha\geq0$, the derivative is negative on the interval $[S,T]$. We have proved that the volatility decays on the interval $[S,T]$ and disappears on the interval $(T,\infty)$. This is consistent with the findings of  Lyashenko and Mercurio \cite{lyashenko10looking}.
\end{remark}

\subsection{Caplet formula}
Since there are two types of interest rates, it is possible to define two types of caplet payoffs with strike $K$ and expiration date $T$, namely the forward looking caplet with payoff $\Delta\left( R_{S,T}(S) - K\right)^+ $ and the backward looking caplet with payoff $\Delta\left( R_{S,T}(T) - K\right)^+ $.
The price of a caplet under the extended $T$-forward measure is
\begin{align}\label{eq:Caplet}
    V_{\text{Caplet}}(t,\ell) & = \Delta B(t,T) E_{T^*}\left[ \left. (R_{S,T}(\ell) -K )^+ \right| \mathcal{F}_t\right] 
\end{align} for $\ell\in [S,T]$ and $t\geq 0$ with $\ell \geq t$.  If $\ell = S$, equation \eqref{eq:Caplet}, gives the price of a forward looking caplet, while if $\ell = T$, it gives the price of a backward looking caplet. From equations \eqref{eq:RY} and \eqref{eq:Caplet},  we have that
\begin{align}
    V_{\text{Caplet}}(t,\ell)  & = \Delta B(t,T) E_{T^*}\left[ \left. (R_{S,T}(\ell) -K )^+ \right| \mathcal{F}_t\right] \nonumber\\
    & = \Delta B(t,T) E_{T^*}\left[ \left.  \left(\frac{1}{\Delta} (Y_{S,T}(\ell)-1) -K \right)^+ \right| \mathcal{F}_t\right] \nonumber\\
    & =  B(t,T) E_{T^*}\left[ \left.  \left(Y_{S,T}(\ell)- 1 -K \Delta \right)^+ \right| \mathcal{F}_t\right] \nonumber\\
    & =  B(t,T) E_{T^*}\left[ \left.  \left(Y_{S,T}(\ell)- \hat{K} \right)^+ \right| \mathcal{F}_t\right], \label{eq:Caplet2}
\end{align} where $\hat{K} = 1+ K\Delta$. 

The next result gives the analytical formula for the price of a forward looking caplet and a backward looking caplet.
\begin{proposition}
The price of the caplet in \eqref{eq:Caplet} at time $t\geq 0$, with $\ell\geq t$ and $\ell\in [S,T]$, is given by
\begin{equation}\label{eq:CapletFormula}
    V_{\text{Caplet}}(t,\ell) = B(t,T) \left[ Y_{S,T}(t) N(d_+) - \hat{K} N(d_-)  \right],
\end{equation}where
\begin{align*}
d_{\pm} & = \frac{\ln \frac{Y_{S,T}(t) }{\hat{K}}  \pm \frac{1}{2} \nu(t,\ell)}{\sqrt{\nu(t,\ell)}}\\    
\nu(t,\ell) & = \int_t^{\ell} \sigma^2(u) \left( D(S -u) - D(T-u) \right)^2 du.
\end{align*}\end{proposition}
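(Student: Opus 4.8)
The plan is to exploit the structure established in \eqref{eq:YSDE}: under the extended $T$-forward measure $\mathds{Q}^{T^*}$ the bond ratio $Y_{S,T}$ is a strictly positive martingale solving a geometric-Brownian-type SDE whose volatility $u\mapsto\sigma(u)\bigl(D(S-u)-D(T-u)\bigr)$ is \emph{deterministic}. Once this is in hand, \eqref{eq:Caplet2} becomes a Black--76 style expectation, and the closed form \eqref{eq:CapletFormula} follows from the classical lognormal computation.

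Concretely, I would first solve the linear SDE \eqref{eq:YSDE}: for $t\le\ell\le T$,
\[
 Y_{S,T}(\ell) = Y_{S,T}(t)\exp\left\{-\tfrac12\nu(t,\ell) + \int_t^\ell \sigma(u)\bigl(D(S-u)-D(T-u)\bigr)\,dW_u^{T^*}\right\},
\]
with $\nu(t,\ell)=\int_t^\ell \sigma^2(u)\bigl(D(S-u)-D(T-u)\bigr)^2\,du$ as in the statement. Because the integrand in the stochastic integral is deterministic, the Wiener integral $\xi:=\int_t^\ell \sigma(u)(D(S-u)-D(T-u))\,dW_u^{T^*}$ is independent of $\mathcal{F}_t$ and, under $\mathds{Q}^{T^*}$, centered Gaussian with variance $\nu(t,\ell)$ (It\^o isometry). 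Hence, conditionally on $\mathcal{F}_t$, $\ln Y_{S,T}(\ell)$ is normal with mean $\ln Y_{S,T}(t)-\tfrac12\nu(t,\ell)$ and variance $\nu(t,\ell)$; in particular $E_{T^*}\!\left[Y_{S,T}(\ell)\mid\mathcal{F}_t\right]=Y_{S,T}(t)$, consistent with the martingale property noted after \eqref{eq:YSDE}.

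Second, I would invoke the elementary lemma that if $Z=Fe^{-v/2+\sqrt v\,G}$ with $G$ standard normal, $F>0$ and $v>0$, then $E\bigl[(Z-\hat K)^+\bigr]=F\,N(d_+)-\hat K\,N(d_-)$ with $d_\pm=\bigl(\ln(F/\hat K)\pm v/2\bigr)/\sqrt v$; this is proved by restricting to $\{Z>\hat K\}$ and completing the square in the Gaussian density. Applying it with $F=Y_{S,T}(t)$ and $v=\nu(t,\ell)$ inside the conditional expectation in \eqref{eq:Caplet2}, and multiplying by $B(t,T)$, yields \eqref{eq:CapletFormula}. The case $\ell=S$ gives the forward-looking caplet and $\ell=T$ the backward-looking one; note that $D(S-u)=D(T-u)=0$ for $u\ge T$, so $\nu(t,\ell)$ is unchanged past $T$ and the formula is internally consistent for both choices of $\ell$.

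There is no real obstacle here beyond bookkeeping; the only point needing a word is the degenerate case $\nu(t,\ell)=0$ (which occurs exactly when $t=\ell$, or more generally when $\sigma\bigl(D(S-\cdot)-D(T-\cdot)\bigr)$ vanishes a.e.\ on $[t,\ell]$). There $Y_{S,T}(\ell)=Y_{S,T}(t)$ is $\mathcal{F}_t$-measurable, the caplet equals its intrinsic value $B(t,T)\bigl(Y_{S,T}(t)-\hat K\bigr)^+$, and one simply checks that the right-hand side of \eqref{eq:CapletFormula} converges to this as $\nu(t,\ell)\downarrow0$ (with $d_\pm\to\pm\infty$ according to $\operatorname{sgn}\ln(Y_{S,T}(t)/\hat K)$), so \eqref{eq:CapletFormula} is read in the obvious limiting sense. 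Everything else is the standard Black calculation.
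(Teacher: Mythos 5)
Your proposal is correct and follows essentially the same route as the paper: establish that $\ln Y_{S,T}(\ell)$ given $\mathcal{F}_t$ is normal with mean $\ln Y_{S,T}(t)-\tfrac12\nu(t,\ell)$ and variance $\nu(t,\ell)$ under $\mathds{Q}^{T^*}$, then apply the generalized Black--Scholes--Merton call formula to \eqref{eq:Caplet2}. The only difference is that you spell out the solution of \eqref{eq:YSDE} and the lognormal expectation lemma (plus the degenerate case $\nu=0$) where the paper simply cites them.
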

\begin{proof}  Formula \eqref{eq:CapletFormula} follows from the fact that
$$ \left. \ln Y_{S,T}(\ell)\right| \mathcal{F}_t \sim N \left( \ln Y_{S,T}(t) - \frac{1}{2}v(t,\ell), v(t,\ell) \right),$$ equation \eqref{eq:Caplet2}, and the generalized Black-Scholes-Merton call option formula \citep[p.~148]{wilmott2006paul}.
\end{proof}
Equation \eqref{eq:CapletFormula} with $\ell =S$ gives us the price of the forward looking caplet while it gives us the price of the backward looking caplet when $\ell =T$.

\section{Numerical experiments and practical applications}\label{sec:num_exp_pract_app}

We implement the results shown in Sections \ref{Sect:ZCB}--\ref{sec:caplets}. First, we estimate the parameters against time series data and show that the proposed model in \eqref{eq:r} offers an improvement against the model without delay parameters; see Section \ref{sect:estimation}. In Section \ref{sect:pract_app_yield}, we use the yield curve data extracted from the United States Federal Reserve to compute the implied short rates. Additionally in Section \ref{sect:pract_app_caplet}, we use caplet data from LSEG Data \& Analytics for further calibration of the proposed model. For the sake of simplicity, we use a single delay parameter in the experiments shown in Sections  \ref{sect:pract_app_yield}--\ref{sect:pract_app_caplet} and the functions $a$ and $\sigma$ are assumed to be constants in all the experiments.

\subsection{Estimation} \label{sect:estimation}

In this section, we would like to estimate the parameters of the model \eqref{eq:r} using time series data. However, the short rate is not directly observed; we use the daily United States six-month treasury rate as a proxy for the short rate. The data is taking from \href{https://fred.stlouisfed.org/series/DGS6MO}{United States Federal Reserve} , from January~01,~2015 to January~01,~2025. There are $M\in \mathds{N}$ observations equally spaced in time, with a time step $\Delta= \frac{1}{252}$. We also define the time $t_i=i\Delta$ for $i=1,\ldots,M$. Using the It\^o formula, it is not difficult to see that the short rate at time $t_{i}$ can be written as 
\begin{align*}
    r_{t_i} & = r_{t_{i-1}}e^{b\Delta}  + a \int_{t_{i-1}}^{t_i} e^{-b(s-t_{i})} ds + \sum_{j=1}^N  c_j \int_{t_{i-1}}^{t_i} e^{-b(s-t_i)} r_{s-\tau_j} ds + \sigma \int_{t_{i-1}}^{t_i} e^{-b(s-t_i)} dW_s.
\end{align*} If the time step $\Delta \leq \tau_1$, we have that
\[ r_{t_i} \left|\mathcal{F}_{t_{i-1}} \right. \sim N\left( m_{t_i},s^2_{t_{i}} \right)\] with
\begin{align}
    m_{t_i} & =   r_{t_{i-1}}e^{b\Delta}  + a \int_{t_{i-1}}^{t_i} e^{-b(s-t_{i})} ds + \sum_{j=1}^N  c_j \int_{t_{i-1}}^{t_i} e^{-b(s-t_i)} r_{s-\tau_j} ds, \label{eq:integralm}\\
    s^2_{t_{i}} & = \sigma^2 \int_{t_{i-1}}^{t_i} e^{-2 b(s-t_i)} ds.\label{eq:integrals}
\end{align} The Riemann integrals that appear in \eqref{eq:integralm}--\eqref{eq:integrals} can be approximated using the trapezoid method (see \cite{Cretarola_FigaTalamanca_Patacca2020}), hence the parameters $a,b,c_1,c_2,\ldots,c_N$ and $\sigma$ can be estimated using linear regression.  The delay parameters $\tau_1,\tau_2,\ldots,\tau_n$ are selected using the periodogram. 

 First, we use the periodogram to estimate the delay parameters $\tau_1,\tau_2,\ldots,\tau_n$. The periodogram allows us to identify the main Fourier frequencies of the time series increments. Peaks of the periodogram correspond to dominant cyclical components of the
series. To that end, we use the ``periodogram" function included in the Python library SciPy \citep{2020SciPy-NMeth}. Once the periodogram is computed, we select the frequencies that are above the  $95\%$ confidence band for white noise \citep[p. 539]{percival2020spectral}. We rank those frequencies in descending order by the periodogram's value. For each selected frequency $f_j$, we compute the value of the delay as $\tau_j = \nicefrac{1}{f_j}$ .  Once we have computed the delay parameters, we fit the model by adding them one at a time.

 Initially, we fit the model with no delays; then, we fit the model with just one delay, whose frequency has the highest peak in the periodogram. After that, we fit the model with two delay parameters whose frequencies have the two highest peaks in the periodogram. We continue in this way until all selected delays are included in the model. The Akaike information criterion (AIC) \citep[p.~605]{deleeuw1992introduction} is shown in Figure \ref{fig:estimation_mse}. According to the AIC, we should select the model with $8$ delay parameters. The estimated parameters for the model with the $8$  delay parameters are $\hat{a} = 0.3211$, $\hat{b} = 6.3050$, $\hat{\sigma} = 0.4664$, and the estimated delay parameters, and their coefficients are given as follows:
  
\[
\begin{array}{c|cccccccc}
j          & 1        & 2        & 3        & 4        & 5        & 6        & 7        & 8 \\ \hline
\hat\tau_j & 0.0104   & 0.0147   & 0.0194   & 0.0254   & 0.0322   & 0.0402   & 0.0781   & 0.2540 \\
\hat c_j   & -1.923   & -4.942   & -16.775  & 0.667    & -12.548  & 1.867    & 31.865   & -14.554.
\end{array}
\] For the estimated model with eight delay parameters, we can not guarantee that the model has a limiting distribution; see Proposition \ref{prop:stationary_distr}. Furthermore, in Figure \ref{fig:estimation_mse}, we also show the value of the delay parameter that is added in each estimation step. The first delay value included in the model is $0.0194$. Notice that several delay values are smaller than $\nicefrac{1}{12}$, implying that the proposed model captures short-term memory behavior. However, longer memory behavior is also captured, as evidenced by a delay of $0.254$.

\begin{figure}[h!]
\centering
\begin{tikzpicture}
  \begin{axis}[
      height=6cm ,
       width=10cm, 
      axis y line*=left,
      axis x line*=bottom,
      xlabel={Number of delays},
      ylabel={AIC},
      xmin=0, xmax=8,
      grid=both,
      legend style={at={(0.02,0.98)},anchor=north west,draw=none,fill=none},
  ]
    \addplot+[blue,mark=*] table[
        x=index,
        y=Akaike,
        col sep=comma
    ]{DataForGraphs.csv};
    \addlegendentry{Akaike information}
  \end{axis}

   \begin{axis}[
       height=6cm ,
   width=10cm,
      axis y line*=right,
      axis x line=none,
      ylabel={Delays},
      xmin=0, xmax=8,
      legend style={at={(0.98,0.98)},anchor=north east,draw=none,fill=none}
  ]
  \addplot+[green,mark=square*, mark options={fill=green, draw=green}] table[
        x=index,
        y=Delays,
        col sep=comma,
    ]{DataForGraphs.csv};
    \addlegendentry{Delays}
  \end{axis}
\end{tikzpicture}

 \caption{Akaike information criterion obtained by the fitted model with a different number of delay parameters. Delay parameter that is added at each estimation step.}
 \label{fig:estimation_mse}
\end{figure}

 Finally, we study the presence of autocorrelation in the residuals. To that end, we apply the Ljung–Box test with $1$ lag to the residuals obtained by the model with different numbers of delays. The p-values obtained by the fitted models are shown in Figure \ref{fig:estimation_Ljung_Box}. The p-value obtained by the model with no delays is $0.0063$. Hence, we should reject the null hypothesis that the residuals have no correlation. However, the inclusion of the delay parameters into the model gives us a p-value that does not allow us to reject the null hypothesis. This shows that the introduction of delay parameters captures short-memory behavior that the model with no delay parameters is not able to capture.  

\begin{figure}[h!]
\centering
\begin{tikzpicture}
  \begin{axis}[
    xlabel={Number of delays},
    ylabel={P-value},
    grid=major,          
    width=10cm,          
    height=6cm           
  ]
    \addplot table [x=index, y=Lag1, col sep=comma] {DataForGraphs.csv};

  \end{axis}
\end{tikzpicture}
 \caption{P-value of the Ljung–Box test with 1 lag obtained by the fitted model's residuals with different delay parameters.}
 \label{fig:estimation_Ljung_Box}
\end{figure}

\subsection{Implied short rates}\label{sect:pract_app_yield}

In this section, we use the method explained in Section \ref{sect:yield} to obtain the implied short rates, to that end we use the spot yield curve on April~19,~2024. The spot yield curve represents the yield to maturity for zero coupon bonds. To construct this curve, we use the par yield curve provided by the \href{https://home.treasury.gov/resource-center/data-chart-center/interest-rates/TextView?type=daily_treasury_yield_curve&field_tdr_date_value_month=202409}{United States Federal Reserve} for maturities up to one year (as Treasury bills are effectively zero coupon bonds at these maturities). We extend the spot yield curve for longer maturities (annually, up to 15 years) using the estimates provided by the \href{https://www.federalreserve.gov/data/nominal-yield-curve.htm}{United States Federal Reserve}. These estimates are constructed using the Nelson-Siegel-Svensson model \citep{Svensson1994}; see also the work of Refet, Sack and Wright \cite{Refet2007}. The yield curve is then used to determine market zero coupon bond prices. The United States spot yield curve is depicted in Figure \ref{fig:yieldcurve}. Notice that the yield curve is inverted, meaning that the yield on short-term bonds is higher than that of long-term bonds.

\begin{figure}[h!]
\centering\begin{tikzpicture} 
 \begin{axis}[
 xlabel={Maturity (years)},
width=11cm,height=8cm,
legend pos=north east,
legend style={fill=white,fill opacity=0.6,text opacity=1},
SimStyle]
\addplot[
 mark=square,
 color=blue,
]
 coordinates {
(0.083, 0.0549) (0.167, 0.0550999999999999) (0.25, 0.0545) (0.333, 0.0544) (0.5, 0.0538999999999999) (1.0, 0.0516999999999999) (2.0, 0.049088) (3.0, 0.0476689999999999) (4.0, 0.046694) (5.0, 0.046067) (6.0, 0.045709) (7.0, 0.045555) (8.0, 0.045552) (9.0, 0.0456569999999999) (10.0, 0.045837) (11.0, 0.046063) (12.0, 0.0463159999999999) (13.0, 0.046578) (14.0, 0.046836) (15.0, 0.047081)
};
\end{axis} 
 \end{tikzpicture}
 \caption{United States spot yield curve on April 19, 2024.}
 \label{fig:yieldcurve}
\end{figure}

We obtain the implied function $\phi$ using zero coupon bonds with maturities up to $\tau_1$, while zero coupon bonds with maturities greater than $\tau_1$ are used to calibrate the parameters $a$, $b$, $c_1$, and $\sigma$. In our experiment, we set the delay parameter $\tau_1$ to four values: $1$ year, $2$ years, $3$ years, and $4$ years. This means we obtain the implied short rates for each delay parameter, with $\tau_1 = 1$, $\tau_1 = 2$, $\tau_1 = 3$, and $\tau_1 = 4$. For each $\tau_1$, the implied function $\phi$ is expected to capture information about the short-rate dynamics over the preceding $\tau_1$ years. The initial value of the implied function $\phi$ is derived from \eqref{eq:phidef}, and the parameters $a$, $b$, $c_1$, and $\sigma$ are calibrated by minimizing the mean squared error, i.e.
\begin{equation}\label{eq:MSEbonds}
    \text{MSE}_{\text{Bonds}} = \frac{1}{N_{\tau_1}}\sum_{j\in I_{\tau_1}} \left( B_{\text{Market}}(0,j) - B_{\text{Model}}(0,j)\right)^2,
 \end{equation} Here, \( B_{\text{Market}} \) represents the market price of a zero-coupon bond, while \( B_{\text{Model}} \) denotes the zero-coupon bond price as predicted by the model. \( N_{\tau} \) is the number of bonds with maturities greater than \( \tau_1 \), and \( I_{\tau_1} \) is the set of maturities greater than \( \tau_1 \). The implied \( \phi \) for each \( \tau_1 \) is presented in Figure \ref{fig:impliedphi}. Notice that, for each \( \tau_1 \), the implied \( \phi \) is inconsistent with the model, as \( \phi \) should be independent of the chosen delay parameter \( \tau_1 \). Nevertheless, the implied \( \phi \) suggests that interest rates rose, stabilized, and then slightly declined in the years preceding April 19, 2024. This pattern aligns roughly with historical trends in short-term rates, as indicated by the United States three-month Treasury bill yield data from the \href{https://fred.stlouisfed.org/series/DTB3}{FRED database}. However, the implied \( \phi \) does not closely match the actual interest rate values over this period. One potential improvement for the model could be the use of a random initial function \( \phi \) rather than a deterministic one.

The calibrated parameter values for each \( \tau_1 \) are presented in Table \ref{table:param_cali_partial}. Notice that all parameters, except for \( c_1 \), remain consistent across different values of \( \tau_1 \). For each \( \tau_1 \), these parameter values result in a model with a limiting distribution, as established in Proposition \ref{prop:stationary_distr}.

\begin{table}
\tbl{Parameter values obtained by partial calibration.}
{\begin{tabular}{|c |S[table-format=2.5] |S[table-format=2.5] |S[table-format=2.5] |S[table-format=2.5] | }
\hline
Parameter & \multicolumn{1}{c|}{$\tau_1 =1$} & \multicolumn{1}{c|}{$\tau_1 =2$} & \multicolumn{1}{c|}{$\tau_1 =3$} & \multicolumn{1}{c|}{$\tau_1 =4$}  \\\hline
\hline
$a$ & 0.05219 &  0.06048 & 0.06382 & 0.06383 \\
\hline
$b$ & -1.00232  & -1.00156 & -1.00027  & -0.99856 \\
\hline
$c_1$ & -0.14587 & -0.31822  & -0.38127 & -0.37103\\ 
\hline
$\sigma$ & 0.00402 & 0.00534 & 0.00638 & 0.00799 \\
\hline
\end{tabular}}
\label{table:param_cali_partial}
\end{table}
In Figure \ref{fig:cal_yield_partial}, we show the market zero coupon bond price and the bond price given by the model. Figure \ref{fig:cal_yield_partial} additionally contains the absolute error between the market price and the model price, defined as
 \[\text{Absolute Error } = \left| B_{\text{Market}}(0,T) - B_{\text{Model}}(0,T) \right|.\]Note that errors for maturities smaller than $\tau_1$ are negligible compared to the maturities bigger than $\tau_1$.

\begin{figure}[h!]
\centering\begin{tikzpicture} 
 \begin{axis}[
 xlabel={Time (in years)},
width=11cm,height=8cm,
legend pos=south west,
legend style={fill=white,fill opacity=1,text opacity=1},
SimStyle,]
\addplot[
 thick,dashed, color=blue,
]
 coordinates {
(-0.997, 0.0347059070593179) (-0.977, 0.0352910608051327) (-0.956, 0.0358688521253713) (-0.936, 0.0364391843079146) (-0.916, 0.0370022336311642) (-0.895, 0.0375573777701848) (-0.875, 0.0381058290878891) (-0.854, 0.0386467625798516) (-0.834, 0.0391804808499658) (-0.814, 0.0397073844674575) (-0.793, 0.0402269259080873) (-0.773, 0.0407396491560166) (-0.753, 0.0412451984137707) (-0.732, 0.041744135918521) (-0.712, 0.0422364867563774) (-0.692, 0.0427213807668032) (-0.671, 0.0431999391388419) (-0.651, 0.0436718048656758) (-0.631, 0.0441371297586248) (-0.61, 0.0445958549027923) (-0.59, 0.0450480675076475) (-0.57, 0.0454940607652283) (-0.549, 0.0459332212528773) (-0.529, 0.0463662560106602) (-0.509, 0.0467929054127387) (-0.488, 0.0472134297992349) (-0.468, 0.0476277766968663) (-0.448, 0.0480358984512016) (-0.427, 0.048437888072131) (-0.407, 0.0488340811449968) (-0.387, 0.0492240083434335) (-0.366, 0.0496083053833101) (-0.346, 0.0499865693496569) (-0.326, 0.05035910992577) (-0.305, 0.0507254484803847) (-0.285, 0.0510864247836743) (-0.264, 0.0514416406576895) (-0.244, 0.0517910535780615) (-0.224, 0.0521353544572244) (-0.203, 0.05247368127321) (-0.183, 0.052806899697993) (-0.163, 0.053134061455644) (-0.142, 0.0534561855250587) (-0.122, 0.0537730129983335) (-0.102, 0.0540845979762323) (-0.081, 0.0543904380134929) (-0.061, 0.0546913461334654) (-0.041, 0.0549870524849346) (-0.02, 0.0552778508837572) (0.0, 0.0555631803431222)
};
\addplot[
 thick,dotted, color=red,
]
 coordinates {
(-1.997, 0.0420886741644304) (-1.956, 0.0426202531839227) (-1.915, 0.0431381295808836) (-1.875, 0.0436427277075705) (-1.834, 0.0441341629228358) (-1.793, 0.0446126051118473) (-1.752, 0.0450782233256231) (-1.712, 0.0455315081414432) (-1.671, 0.0459721823507033) (-1.63, 0.0464009548825608) (-1.589, 0.046817370752224) (-1.549, 0.0472223124093696) (-1.508, 0.0476153749684292) (-1.467, 0.0479974675099389) (-1.426, 0.0483679969870971) (-1.386, 0.0487275084341761) (-1.345, 0.0490763888197172) (-1.304, 0.0494144207440004) (-1.263, 0.0497418274670441) (-1.223, 0.0500589259213986) (-1.182, 0.0503660966178482) (-1.141, 0.0506631932432532) (-1.1, 0.0509506325487949) (-1.06, 0.0512281695976794) (-1.019, 0.0514962323961611) (-0.978, 0.051755226976647) (-0.937, 0.0520048286801721) (-0.897, 0.0522455945240398) (-0.856, 0.052477444979307) (-0.815, 0.0527006996268635) (-0.774, 0.0529153546706425) (-0.734, 0.0531217522018204) (-0.693, 0.0533198952754247) (-0.652, 0.0535096764662016) (-0.611, 0.0536918107728533) (-0.571, 0.0538659664600579) (-0.53, 0.0540326644878885) (-0.489, 0.0541919750486965) (-0.448, 0.0543436327403629) (-0.408, 0.0544880583910262) (-0.367, 0.054625396771794) (-0.326, 0.0547557476601501) (-0.285, 0.0548792800579426) (-0.245, 0.0549964516535582) (-0.204, 0.0551064982873801) (-0.163, 0.0552104217705321) (-0.122, 0.0553078145153309) (-0.082, 0.0553989608926299) (-0.041, 0.0554841918673816) (0.0, 0.0555631803431222)
};
\addplot[
 thick,densely dashed,  color=black,
]
 coordinates {
(-2.997, 0.0440717328353875) (-2.936, 0.0447297207240684) (-2.875, 0.045362762415358) (-2.813, 0.0459711091797707) (-2.752, 0.0465554498347825) (-2.691, 0.0471165303221906) (-2.63, 0.0476546430291059) (-2.569, 0.0481703520990419) (-2.508, 0.0486641516258569) (-2.446, 0.0491367976888185) (-2.385, 0.0495887209120721) (-2.324, 0.0500203428207792) (-2.263, 0.0504320711282711) (-2.202, 0.0508244001801969) (-2.141, 0.0511980588667516) (-2.079, 0.0515534219674435) (-2.018, 0.0518907661055416) (-1.957, 0.0522105799114836) (-1.896, 0.0525136019913189) (-1.835, 0.0527999716250926) (-1.774, 0.0530704209618777) (-1.713, 0.0533248913972129) (-1.651, 0.053564279337346) (-1.59, 0.053788791086539) (-1.529, 0.0539988256797049) (-1.468, 0.0541946972057045) (-1.407, 0.0543769104125396) (-1.346, 0.0545459203313344) (-1.284, 0.0547019679582822) (-1.223, 0.0548453176897508) (-1.162, 0.0549765264920502) (-1.101, 0.0550958745330944) (-1.04, 0.0552036341504746) (-0.979, 0.0553000897268507) (-0.917, 0.055385747941724) (-0.856, 0.0554609188332429) (-0.795, 0.0555257159179442) (-0.734, 0.0555804195456247) (-0.673, 0.0556256937645761) (-0.612, 0.0556614722320014) (-0.55, 0.0556881363865318) (-0.489, 0.0557059749554374) (-0.428, 0.0557153721189004) (-0.367, 0.0557163358177637) (-0.306, 0.0557094337316167) (-0.245, 0.0556946907228916) (-0.183, 0.0556723885255185) (-0.122, 0.0556430830835619) (-0.061, 0.0556064772745202) (0.0, 0.0555631803431223)
};
\addplot[
  thick, color=green,
]
 coordinates {
(-3.997, 0.045507843341058) (-3.915, 0.0463844885854699) (-3.834, 0.0472172050778273) (-3.752, 0.0480069699328032) (-3.671, 0.0487551928049286) (-3.589, 0.049462781863302) (-3.507, 0.0501313614842691) (-3.426, 0.0507617703482777) (-3.344, 0.0513554695805734) (-3.263, 0.0519133506842829) (-3.181, 0.0524367539565661) (-3.1, 0.0529266672314387) (-3.018, 0.0533843409129354) (-2.936, 0.0538106890829256) (-2.855, 0.0542070466840361) (-2.773, 0.0545740652108054) (-2.692, 0.0549130504116963) (-2.61, 0.0552247031488202) (-2.529, 0.0555103418156762) (-2.447, 0.055770625154603) (-2.366, 0.0560065224877212) (-2.284, 0.056218932799216) (-2.202, 0.0564088070660114) (-2.121, 0.0565768091005949) (-2.039, 0.0567240883345418) (-1.958, 0.0568510498069768) (-1.876, 0.0569585159351364) (-1.795, 0.0570473738970615) (-1.713, 0.0571186257128113) (-1.631, 0.0571723815862251) (-1.55, 0.0572095915269993) (-1.468, 0.0572309801071076) (-1.387, 0.0572373804959219) (-1.305, 0.0572290860506173) (-1.224, 0.0572067118632836) (-1.142, 0.0571713060077106) (-1.06, 0.057122877030673) (-0.979, 0.0570621691404507) (-0.897, 0.05698983282755) (-0.816, 0.0569065232756285) (-0.734, 0.0568123036066569) (-0.653, 0.0567081042881047) (-0.571, 0.056594319333125) (-0.489, 0.0564712029118011) (-0.408, 0.0563392255636465) (-0.326, 0.0561989902573414) (-0.245, 0.0560508737743314) (-0.163, 0.0558952873733146) (-0.082, 0.0557327191583166) (0.0, 0.0555631803431222)
};
\legend{$\tau_1 = 1$,$\tau_1 = 2$,$\tau_1 = 3$,$\tau_1 = 4$ }
\end{axis} 
 \end{tikzpicture}
  \caption{Implied initial value function $\phi$ obtained from \eqref{eq:phidef} obtained for different values of the delay parameter $\tau_1$.}
  \label{fig:impliedphi}
\end{figure}

\begin{figure}
 \begin{center}
\begin{tikzpicture}
\begin{groupplot}[group style={group name=my plots, group size=2 by 1, horizontal sep = 1cm}, width = 7.5cm, height=6cm, SimStyle]

\nextgroupplot[xlabel={Maturity date (years)},legend to name={CommonLegend},legend style={legend columns=5}]

\addplot[
 color=purple,
 mark=o,
]
 coordinates {
(0.083333, 0.9954094937267488) (0.166667, 0.9908964356493172) (0.25, 0.9864576101277268) (0.333333, 0.9820899328776004) (0.5, 0.9735563098872478) (1.0, 0.9493789401949392) (2.0, 0.9058873384763428) (3.0, 0.8666476881525217) (4.0, 0.8299452268424765) (5.0, 0.7947985325289223) (6.0, 0.7606887862350451) (7.0, 0.72738375540808) (8.0, 0.6948217932954353) (9.0, 0.663035561141132) (10.0, 0.6321027160653766) (11.0, 0.6021150591921182) (12.0, 0.573160251428892) (13.0, 0.545311940273575) (14.0, 0.5186253604388571) (15.0, 0.4931363533150869)
};
\addplot[
 color=blue,
 mark=square,
]
 coordinates {
(0.083333, 0.9954036915703092) (0.166667, 0.990885346866608) (0.25, 0.9864417075063324) (0.333333, 0.9820696500021387) (0.5, 0.9735284535939508) (1.0, 0.9493356212816576) (2.0, 0.9055132952388824) (3.0, 0.8649779628377292) (4.0, 0.8264972627182905) (5.0, 0.7897654259831941) (6.0, 0.7546709024888452) (7.0, 0.7211364006992812) (8.0, 0.6890920792801862) (9.0, 0.6584716751484364) (10.0, 0.6292119173562203) (11.0, 0.6012523407941551) (12.0, 0.5745351710200399) (13.0, 0.5490052041450636) (14.0, 0.5246096823284719) (15.0, 0.5012981956203866)
};
\addplot[
 color=red,
  mark=triangle,
]
 coordinates {
(0.083333, 0.9954036965292012) (0.166667, 0.9908853561822332) (0.25, 0.9864417207368568) (0.333333, 0.9820696668950288) (0.5, 0.9735284773307245) (1.0, 0.949335660866582) (2.0, 0.9058319774184664) (3.0, 0.8662794128430703) (4.0, 0.8283324409824883) (5.0, 0.791457503168771) (6.0, 0.7559059041291462) (7.0, 0.7219286492797439) (8.0, 0.6895526708466182) (9.0, 0.658677980448494) (10.0, 0.6291942422700292) (11.0, 0.601021692449088) (12.0, 0.5741032436994258) (13.0, 0.548388478767428) (14.0, 0.5238263901764163) (15.0, 0.5003654729634169)
};
\addplot[
 color=black,
  mark=diamond,
]
 coordinates {
(0.083333, 0.9954037044729692) (0.166667, 0.9908853687571848) (0.25, 0.9864417330727728) (0.333333, 0.982069672816664) (0.5, 0.973528448742905) (1.0, 0.9493353382032328) (2.0, 0.9058305093322112) (3.0, 0.8665879741551739) (4.0, 0.8296435627995373) (5.0, 0.7934581003464573) (6.0, 0.757912091701358) (7.0, 0.7233622101100468) (8.0, 0.6902453888679937) (9.0, 0.6587916894975898) (10.0, 0.6289776579795783) (11.0, 0.6006530670384286) (12.0, 0.5736430976286001) (13.0, 0.5478210998260861) (14.0, 0.5231162376807749) (15.0, 0.4994931409017951)
};
\addplot[
 color=green,
  mark=x,
]
 coordinates {
(0.083333, 0.9954037155742368) (0.166667, 0.9908853899415072) (0.25, 0.9864417639677332) (0.333333, 0.982069713536822) (0.5, 0.9735285110040552) (1.0, 0.94933549323576) (2.0, 0.9058309546658652) (3.0, 0.8665886294804694) (4.0, 0.8298867333787991) (5.0, 0.7945472101514379) (6.0, 0.7596484148462247) (7.0, 0.7252924504375787) (8.0, 0.6918748675548514) (9.0, 0.6597337241752658) (10.0, 0.6291425292325933) (11.0, 0.6001963091729996) (12.0, 0.5728108832667279) (13.0, 0.5468338312452975) (14.0, 0.5221043740194987) (15.0, 0.498484206143804)
};
\legend{Market price, $\tau_1 = 1$, $\tau_1=2$, $\tau_1 = 3$, $\tau_1 = 4$}

\nextgroupplot[ xlabel={Maturity date (years)}]
\addplot[
 color=blue,
 mark=square,
]
 coordinates {
(0.083333, 5.80215643974924e-06) (0.166667, 1.108878270927871e-05) (0.25, 1.5902621394170424e-05) (0.333333, 2.0282875461741234e-05) (0.5, 2.7856293297046086e-05) (1.0, 4.331891328146487e-05) (2.0, 0.0003740432374603) (3.0, 0.0016697253147924) (4.0, 0.003447964124186) (5.0, 0.0050331065457281) (6.0, 0.0060178837461999) (7.0, 0.0062473547087987) (8.0, 0.005729714015249) (9.0, 0.0045638859926956) (10.0, 0.0028907987091563) (11.0, 0.000862718397963) (12.0, 0.0013749195911478) (13.0, 0.0036932638714886) (14.0, 0.0059843218896148) (15.0, 0.0081618423052996)
};
\addplot[
 color=red,
 mark=triangle,
]
 coordinates {
(0.083333, 5.797197547674493e-06) (0.166667, 1.107946708389651e-05) (0.25, 1.5889390869827658e-05) (0.333333, 2.0265982571654465e-05) (0.5, 2.783255652338834e-05) (1.0, 4.327932835723747e-05) (2.0, 5.5361057876290864e-05) (3.0, 0.0003682753094513) (4.0, 0.0016127858599882) (5.0, 0.0033410293601513) (6.0, 0.0047828821058989) (7.0, 0.0054551061283361) (8.0, 0.005269122448817) (9.0, 0.004357580692638) (10.0, 0.0029084737953474) (11.0, 0.0010933667430301) (12.0, 0.0009429922705337) (13.0, 0.003076538493853) (14.0, 0.0052010297375592) (15.0, 0.00722911964833)
};
\addplot[
 color=black,
 mark=diamond,
]
 coordinates {
(0.083333, 5.789253779697745e-06) (0.166667, 1.1066892132371995e-05) (0.25, 1.5877054953983283e-05) (0.333333, 2.0260060936361235e-05) (0.5, 2.7861144342833377e-05) (1.0, 4.360199170627688e-05) (2.0, 5.682914413163154e-05) (3.0, 5.971399734783223e-05) (4.0, 0.0003016640429391) (5.0, 0.001340432182465) (6.0, 0.0027766945336871) (7.0, 0.0040215452980332) (8.0, 0.0045764044274415) (9.0, 0.0042438716435422) (10.0, 0.0031250580857983) (11.0, 0.0014619921536895) (12.0, 0.000482846199708) (13.0, 0.002509159552511) (14.0, 0.0044908772419177) (15.0, 0.0063567875867081)
};
\addplot[
 color=green,
 mark=x,
]
 coordinates {
(0.083333, 5.778152512014856e-06) (0.166667, 1.1045707810075632e-05) (0.25, 1.5846159993615494e-05) (0.333333, 2.0219340778560024e-05) (0.5, 2.779888319259793e-05) (1.0, 4.344695917912755e-05) (2.0, 5.638381047756713e-05) (3.0, 5.905867205224791e-05) (4.0, 5.849346367736708e-05) (5.0, 0.0002513223774843) (6.0, 0.0010403713888204) (7.0, 0.0020913049705012) (8.0, 0.0029469257405838) (9.0, 0.0033018369658662) (10.0, 0.0029601868327833) (11.0, 0.0019187500191185) (12.0, 0.0003493681621641) (13.0, 0.0015218909717225) (14.0, 0.0034790135806416) (15.0, 0.0053478528287171)
};

\end{groupplot}
\node[align=center,anchor=north] at ([yshift=-12mm]my plots c1r1.south) {(a) Zero coupon bond prices.};
\node[align=center,anchor=north] at ([yshift=-12mm]my plots c2r1.south) {(b) Absolute error.};
\path ([yshift=-12mm]my plots c1r1.south east) -- node[below,anchor=north]{\ref{CommonLegend}} ([yshift=-24mm]my plots c2r1.south west);
\end{tikzpicture}
\end{center}
\caption{Zero coupon bond prices and absolute value of the error between market prices and model prices using the method shown in Section \ref{sect:yield}}
\label{fig:cal_yield_partial}
\end{figure}

\subsection{Calibration against caplet prices} \label{sect:pract_app_caplet}

We calibrate the proposed model using US market caplet prices on 1 April 2024 from \href{https://www.lseg.com/en/data-analytics/}{LSEG Data \& Analytics} using \href{https://www.fenicsmd.com/ }{Fenics market data}. The data was downloaded on 4 October 2024. We focus on three-month forward-looking caplets based on SONIA rates, with maturity date from 2 June 2024 to 3 April 2029 and strikes ranging from 3.5\% to 6\%. Expiry dates are expressed in short format indicating the number of years and months; for example, 1Y6M means 1 year and 6 months from 1 April 2024. The data is divided into two sets: a calibration set consisting of caplets with non-zero market prices and short-term maturities from 3M to 4Y (comprising 172 caplets) and an out-of-sample set consisting of long-term maturities from 4Y3M to 5Y (comprising 44 caplets); see Figure \ref{fig:Caplet_price_mkt}. We have selected the calibration set from maturities 3M to 4Y since caplet prices tend to be convex from 3M to 1Y9M, and then they tend to be concave for maturities from 2Y to 4Y, as shown in Figure \ref{fig:mean_caplets}. We selected the calibration set in this way after observing that, on average (see Figure \ref{fig:mean_caplets}), caplet prices tend to be convex for maturities up to around 2 years, and concave after that. This suggests a change point at 2 years, and it would be interesting to verify whether the calibrated parameter $\tau_1$ has a value near 2.

\begin{figure}
 \begin{center}
\begin{tikzpicture}
\begin{groupplot}[group style={group name=my plots, group size=2 by 1, horizontal sep = 1.5cm}, width = 7cm, height=6cm,SimStyle]
\nextgroupplot[xlabel={Strike price}, ylabel={Caplet price},legend to name={CommonLegend},legend style={legend columns=4}]

\addplot[color=blue,mark=square,] 
coordinates {
(0.035, 0.416178894049314) (0.0375, 0.354654014673566) (0.04, 0.293129135297817) (0.0425, 0.231604255922068) (0.045, 0.17007937654632) (0.0475, 0.1085544971705709) (0.05, 0.0470296177948222)
};
\addlegendentry{3M}

\addplot[color=red,mark=triangle,] 
coordinates {
(0.035, 0.350781887263215) (0.0375, 0.289347231019456) (0.04, 0.227923450013974) (0.0425, 0.166765498963497) (0.045, 0.107594920701065) (0.0475, 0.0638369561568008) (0.05, 0.0378190959788312) (0.0525, 0.0238509501261299) (0.055, 0.016090550300391) (0.0575, 0.0111349731323403) (0.06, 0.0080857238217938)
};
\addlegendentry{6M}

\addplot[color=black,mark=diamond,] 
coordinates {
(0.035, 0.280952373032394) (0.0375, 0.220836106712543) (0.04, 0.161517498177964) (0.0425, 0.107447398337205) (0.045, 0.0625098887999757) (0.0475, 0.0424648091454794) (0.05, 0.0310612202057709) (0.0525, 0.0247746598100018) (0.055, 0.0208987325997579) (0.0575, 0.0177480804977739) (0.06, 0.0155472803535003)
};
\addlegendentry{9M}

\addplot[color=green,mark=x,] 
coordinates {
(0.035, 0.2110944547650609) (0.0375, 0.156050728261397) (0.04, 0.104088578137766) (0.0425, 0.0641854255600932) (0.045, 0.0351208487642906) (0.0475, 0.0284145114893137) (0.05, 0.0242933572161671) (0.0525, 0.0222955629836899) (0.055, 0.0211782690739658) (0.0575, 0.0198640161892471) (0.06, 0.0189788028544859)
};
\addlegendentry{1Y}

\addplot[dashed,blue,mark=square,] 
coordinates {
(0.035, 0.166796732512258) (0.0375, 0.122596356134641) (0.04, 0.0843926662124154) (0.0425, 0.0595184607953908) (0.045, 0.0404737326537517) (0.0475, 0.0360789992283108) (0.05, 0.032699712652479) (0.0525, 0.0306443578127743) (0.055, 0.029281268457915) (0.0575, 0.0277676082959408) (0.06, 0.0266519946908044)
};
\addlegendentry{1Y3M}

\addplot[dashed,red,mark=triangle,] 
coordinates {
 (0.035, 0.154378771239264) (0.0375, 0.1204934503925389) (0.04, 0.0921688232767022) (0.0425, 0.0747283221016545) (0.045, 0.0595210133454559) (0.0475, 0.0539754601052054) (0.05, 0.0493825239329591) (0.0525, 0.0458599897933439) (0.055, 0.0431444262440771) (0.0575, 0.0405872458658677) (0.06, 0.0385277796725604)
};
\addlegendentry{1Y6M}

\addplot[dashed,black,mark=diamond,] 
coordinates {
(0.035, 0.150289150930697) (0.0375, 0.126254303220834) (0.04, 0.1065645166434959) (0.0425, 0.0953011372533808) (0.045, 0.0836234132979504) (0.0475, 0.076172044853754) (0.05, 0.0698363897419911) (0.0525, 0.0642605648978354) (0.055, 0.0596183594188027) (0.0575, 0.0555840984901783) (0.06, 0.0521614006908143)
};
\addlegendentry{1Y9M}

\addplot[dashed,green,mark=x,]
coordinates {
 (0.035, 0.150506890644918) (0.0375, 0.135724448464234) (0.04, 0.1240970376234339) (0.0425, 0.1186700061900799) (0.045, 0.110718980469722) (0.0475, 0.100949923583256) (0.05, 0.0925588581615062) (0.0525, 0.0845423277045256) (0.055, 0.0775651743791184) (0.0575, 0.0717364654795286) (0.06, 0.0666283401901068)
 };
\addlegendentry{2Y}

\addplot[dotted,blue,mark=square,] 
coordinates { (0.035, 0.166067066875945) (0.0375, 0.151009128770807) (0.04, 0.13706040430733) (0.0425, 0.129035167305698) (0.045, 0.119204887955828) (0.0475, 0.109305944929208) (0.05, 0.100716136144805) (0.0525, 0.0927864760361426) (0.055, 0.0858278965992008) (0.0575, 0.0798873610890693) (0.06, 0.0746398848479352)
 };
\addlegendentry{2Y3M}

\addplot[dotted,red,mark=triangle,] 
coordinates {
(0.035, 0.185385838342239) (0.0375, 0.168966637272562) (0.04, 0.151538551141138) (0.0425, 0.139861431257646) (0.045, 0.127414041125208) (0.0475, 0.117283492689293) (0.05, 0.108409893119923) (0.0525, 0.100595829332934) (0.055, 0.0936967528888617) (0.0575, 0.0876535733387352) (0.06, 0.0822820844185727)
 };
\addlegendentry{2Y6M}

\addplot[dotted,black,mark=diamond,]
coordinates {
(0.035, 0.20968777970729) (0.0375, 0.191130295627199) (0.04, 0.169301466723749) (0.0425, 0.153118190817983) (0.045, 0.137399964127397) (0.0475, 0.1267831893710239) (0.05, 0.1174034353108259) (0.0525, 0.109580011792202) (0.055, 0.1026365022713009) (0.0575, 0.0963820160649003) (0.06, 0.090791975301562)
};
\addlegendentry{2Y9M}

\addplot[dotted,green,mark=x,]
coordinates {
(0.035, 0.2179090907792) (0.0375, 0.198390313153192) (0.04, 0.17359264030603) (0.0425, 0.153858656222261) (0.045, 0.135893948372128) (0.0475, 0.125564097924078) (0.05, 0.116364537643523) (0.0525, 0.109136566833777) (0.055, 0.102691711371184) (0.0575, 0.096708399013374) (0.06, 0.0913336198112621)
 };
\addlegendentry{3Y}

\addplot[dashdotted,blue,mark=square,] 
coordinates {
(0.035, 0.20760437321493) (0.0375, 0.188630277237479) (0.04, 0.165927919533029) (0.0425, 0.147845862560755) (0.045, 0.131513823044396) (0.0475, 0.120879768086885) (0.05, 0.111420660831905) (0.0525, 0.103699054787514) (0.055, 0.096813507304964) (0.0575, 0.0905366913095371) (0.06, 0.0849051095006769)
};
\addlegendentry{3Y3M}

\addplot[dashdotted,red,mark=triangle,] 
coordinates {
(0.035, 0.194093129606618) (0.0375, 0.175805516283564) (0.04, 0.1556650775523479) (0.0425, 0.1395801631705299) (0.045, 0.125207672552225) (0.0475, 0.11424646490273) (0.05, 0.104519461353688) (0.0525, 0.0962632222937178) (0.055, 0.0889134288973133) (0.0575, 0.082347705583813) (0.06, 0.0764746928844306)
};
\addlegendentry{3Y6M}

\addplot[dashdotted,black,mark=diamond,]
coordinates {
(0.035, 0.168225813624404) (0.0375, 0.151690110957237) (0.04, 0.13548948735649) (0.0425, 0.122474051384803) (0.045, 0.111021001511994) (0.0475, 0.100307905901036) (0.05, 0.0908387452430459) (0.0525, 0.0824885550601205) (0.055, 0.075086381832756) (0.0575, 0.068621710116528) (0.06, 0.0628716762232644)
};
\addlegendentry{3Y9M}

\addplot[dashdotted,green,mark=x,]
coordinates {
(0.035, 0.142709305324433) (0.0375, 0.127903138210658) (0.04, 0.115580641152897) (0.0425, 0.105562580551698) (0.045, 0.096939310340912) (0.0475, 0.0864931073104497) (0.05, 0.0773140687450959) (0.0525, 0.0689398432388197) (0.055, 0.0615710474329548) (0.0575, 0.0552920753008763) (0.06, 0.0497575106216909)
};
\addlegendentry{4Y}

\addlegendimage{blue,mark=square}
\addlegendentry{4Y3M}

\addlegendimage{red,mark=triangle}
\addlegendentry{4Y6M}

\addlegendimage{black,mark=diamond}
\addlegendentry{4Y9M}

\addlegendimage{green,mark=x}
\addlegendentry{5Y}

\nextgroupplot[ xlabel={Strike price}]
        \addplot[blue,mark=square] 
 coordinates {       (0.035, 0.167060723324864) (0.0375, 0.151163367517036) (0.04, 0.136344841822182) (0.0425, 0.124258500932036) (0.045, 0.113593024768348) (0.0475, 0.102379888852465) (0.05, 0.0924016175432752) (0.0525, 0.083395203877002) (0.055, 0.0753724398217675) (0.0575, 0.0683925747207483) (0.06, 0.0621672285527745)
 };

\addplot[red,mark=triangle] 
coordinates {
(0.035, 0.191245945378995) (0.0375, 0.174373795626683) (0.04, 0.157084591567545) (0.0425, 0.142951148146641) (0.045, 0.130233683241115) (0.0475, 0.118397464774662) (0.05, 0.1077490759881699) (0.0525, 0.0982551902365554) (0.055, 0.0897092439008801) (0.0575, 0.0821315701942771) (0.06, 0.075303773486371)
 };

\addplot[black,mark=diamond]
coordinates {
(0.035, 0.218899349686139) (0.0375, 0.200822194430432) (0.04, 0.180765595252751) (0.0425, 0.164344499870033) (0.045, 0.14933741861967) (0.0475, 0.136773370261313) (0.05, 0.125363338184545) (0.0525, 0.115326459523536) (0.055, 0.106210770436498) (0.0575, 0.0979861396143332) (0.06, 0.0905099591077407)
};

\addplot[green,mark=x] 
coordinates {
(0.035, 0.243155277281691) (0.0375, 0.224219764606852) (0.04, 0.20172530694552) (0.0425, 0.183289200003796) (0.045, 0.1662245653445239) (0.0475, 0.153225355829318) (0.05, 0.141322286097815) (0.0525, 0.131003337210971) (0.055, 0.121558510770731) (0.0575, 0.1128972323639379) (0.06, 0.104963099675497)
};

\end{groupplot}
\node[align=center,anchor=north] at ([yshift=-12mm]my plots c1r1.south) {(a) Short-term maturities data set.};
\node[align=center,anchor=north] at ([yshift=-12mm]my plots c2r1.south) {(b) Long-term maturities data set.};
\path ([yshift=-12mm]my plots c1r1.south east) -- node[below,anchor=north]{\ref{CommonLegend}} ([yshift=-24mm]my plots c2r1.south west);
\end{tikzpicture}
\end{center}
\caption{Market caplet prices on the entire data set.}
\label{fig:Caplet_price_mkt}
\end{figure}

\begin{filecontents}[overwrite]{MaturityStrikes.csv}
Expiry_dates,3.5,3.75,4.0,4.25,4.5,4.75,5.0,5.25, 5.5,5.75,6.0
0.252054794520548,0.416178894049314,0.354654014673566,0.293129135297817,0.231604255922068,0.17007937654632,0.108554497170571,0.0470296177948222,0,0,0,0
0.504109589041096,0.350781887263215,0.289347231019456,0.227923450013974,0.166765498963497,0.107594920701065,0.0638369561568008,0.0378190959788312,0.0238509501261299,0.016090550300391,0.0111349731323403,0.0080857238217938
0.756164383561644,0.280952373032394,0.220836106712543,0.161517498177964,0.107447398337205,0.0625098887999757,0.0424648091454794,0.0310612202057709,0.0247746598100018,0.0208987325997579,0.0177480804977739,0.0155472803535003
1.0027397260274,0.211094454765061,0.156050728261397,0.104088578137766,0.0641854255600932,0.0351208487642906,0.0284145114893137,0.0242933572161671,0.0222955629836899,0.0211782690739658,0.0198640161892471,0.0189788028544859
1.25205479452055,0.166796732512258,0.122596356134641,0.0843926662124154,0.0595184607953908,0.0404737326537517,0.0360789992283108,0.032699712652479,0.0306443578127743,0.029281268457915,0.0277676082959408,0.0266519946908044
1.5041095890411,0.154378771239264,0.120493450392539,0.0921688232767022,0.0747283221016545,0.0595210133454559,0.0539754601052054,0.0493825239329591,0.0458599897933439,0.0431444262440771,0.0405872458658677,0.0385277796725604
1.75616438356164,0.150289150930697,0.126254303220834,0.106564516643496,0.0953011372533808,0.0836234132979504,0.076172044853754,0.0698363897419911,0.0642605648978354,0.0596183594188027,0.0555840984901783,0.0521614006908143
2.0027397260274,0.150506890644918,0.135724448464234,0.124097037623434,0.11867000619008,0.110718980469722,0.100949923583256,0.0925588581615062,0.0845423277045256,0.0775651743791184,0.0717364654795286,0.0666283401901068
2.25205479452055,0.166067066875945,0.151009128770807,0.13706040430733,0.129035167305698,0.119204887955828,0.109305944929208,0.100716136144805,0.0927864760361426,0.0858278965992008,0.0798873610890693,0.0746398848479352
2.5041095890411,0.185385838342239,0.168966637272562,0.151538551141138,0.139861431257646,0.127414041125208,0.117283492689293,0.108409893119923,0.100595829332934,0.0936967528888617,0.0876535733387352,0.0822820844185727
2.76164383561644,0.20968777970729,0.191130295627199,0.169301466723749,0.153118190817983,0.137399964127397,0.126783189371024,0.117403435310826,0.109580011792202,0.102636502271301,0.0963820160649003,0.090791975301562
3.0027397260274,0.2179090907792,0.198390313153192,0.17359264030603,0.153858656222261,0.135893948372128,0.125564097924078,0.116364537643523,0.109136566833777,0.102691711371184,0.096708399013374,0.0913336198112621
3.25205479452055,0.20760437321493,0.188630277237479,0.165927919533029,0.147845862560755,0.131513823044396,0.120879768086885,0.111420660831905,0.103699054787514,0.096813507304964,0.0905366913095371,0.0849051095006769
3.50958904109589,0.194093129606618,0.175805516283564,0.155665077552348,0.13958016317053,0.125207672552225,0.11424646490273,0.104519461353688,0.0962632222937178,0.0889134288973133,0.082347705583813,0.0764746928844306
3.76164383561644,0.168225813624404,0.151690110957237,0.13548948735649,0.122474051384803,0.111021001511994,0.100307905901036,0.0908387452430459,0.0824885550601205,0.075086381832756,0.068621710116528,0.0628716762232644
4.00821917808219,0.142709305324433,0.127903138210658,0.115580641152897,0.105562580551698,0.096939310340912,0.0864931073104497,0.0773140687450959,0.0689398432388197,0.0615710474329548,0.0552920753008763,0.0497575106216909
\end{filecontents}

\begin{figure}[h!]
\centering\begin{tikzpicture} 
\pgfplotstableread[col sep=comma,]{MaturityStrikes.csv} {\prices}

\begin{axis}[
 xlabel={Maturity date (years)},
width=11cm,height=8cm,
legend pos=outer north east,
legend style={fill=white,fill opacity=1,text opacity=1},SimStyle]

\addplot[color=blue, mark=square, ] table [x={Expiry_dates}, y={3.5}] {\prices}; \addlegendentryexpanded{$K=3.5$}
\addplot[color=red, mark=triangle, ] table [x={Expiry_dates}, y={3.75}] {\prices}; \addlegendentryexpanded{$K=3.75$}
\addplot[color=black, mark=diamond, ] table [x={Expiry_dates}, y={4.0}] {\prices}; \addlegendentryexpanded{$K=4.0$}
\addplot[color=green, mark=x, ] table [x={Expiry_dates}, y={4.25}] {\prices}; \addlegendentryexpanded{$K=4.25$}
\addplot[color=blue, mark=square, dashed] table [x={Expiry_dates}, y={4.5}] {\prices}; \addlegendentryexpanded{$K=4.5$}
\addplot[color=red, mark=triangle, dashed] table [x={Expiry_dates}, y={4.75}] {\prices}; \addlegendentryexpanded{$K=4.75$}
\addplot[color=black, mark=diamond, dashed] table [x={Expiry_dates}, y={5.0}] {\prices}; \addlegendentryexpanded{$K=5.0$}
\addplot[color=green, mark=x, dashed] table [x={Expiry_dates}, y={5.25}] {\prices}; \addlegendentryexpanded{$K=5.25$}
\addplot[color=blue, mark=square, dotted] table [x={Expiry_dates}, y={5.5}] {\prices}; \addlegendentryexpanded{$K=5.5$}
\addplot[color=red, mark=triangle, dotted] table [x={Expiry_dates}, y={5.75}] {\prices}; \addlegendentryexpanded{$K=5.75$}
\addplot[color=black, mark=diamond, dotted] table [x={Expiry_dates}, y={6.0}] {\prices}; \addlegendentryexpanded{$K=6.0$}

\end{axis} 
 \end{tikzpicture}
 \vfill
 \caption{Caplet prices for each maturity date and a range of strikes on the entire data set.}
 \label{fig:mean_caplets}
\end{figure}

Calibration is performed by minimizing the following relative sum of squared errors \begin{equation}\label{eq:rel_error}
    \text{Error} = \sum_{i=1}^n \frac{1}{\text{Caplet}_{\text{Market}}^{i}} \left( \text{Caplet}_{\text{Market}}^{i} - \text{Caplet}_{\text{Model}}^{i}\right)^2 ,
\end{equation} where $\text{Caplet}_{\text{Market}}^{i}$ represents the market caplet price, $\text{Caplet}_{\text{Model}}^{i}$ is the caplet price given by the model, while $n$ is the number of caplets used in the calibration procedure. 

In what follows, we compare two approaches. Firstly, we calibrate against short-term maturities (the calibration set) and measure how well it performs for long-term maturities (the out-of sample set). Secondly, we calibrate using the entire dataset. The relevant sums of squared errors for the proposed model are presented in Tables \ref{table:SSE_all} and \ref{table:Rel_SSE_all}, alongside the Black, Bachelier, and Vasi\v{c}ek models. Parameter values appear in Table \ref{table:param_caplet_cali}.

\begin{table}
\tbl{Sum of squared errors obtained by different models on different data sets.}
{\begin{tabular}{|l| S[table-format=1.4e2] |S[table-format=1.5]| S[table-format=1.5] | }
\hline
Model &  \multicolumn{1}{c|}{Short-term maturities} & \multicolumn{1}{c|}{Long-term maturities} & \multicolumn{1}{c|}{Entire dataset} \\
 &  \multicolumn{1}{c|}{(calibrated)} & \multicolumn{1}{c|}{(out of sample)} & \multicolumn{1}{c|}{(calibrated)} \\\hline
\hline
Bachelier & 0.17405 & 0.02045 & 0.19252\\
\hline
Black & 0.18805 & 0.04221 &  0.20739  \\ 
\hline
Vasi\v{c}ek  &  0.16774 &  0.03726  & 0.18847  \\
\hline
Proposed model &  0.10292  & 0.02442  & 0.12752 \\
\hline
\end{tabular}}
 \label{table:SSE_all}
\end{table}

\begin{table}
\tbl{Relative sum of squared errors obtained by different models on different data sets.}
{\begin{tabular}{|l| S[table-format=1.4e2] |S[table-format=1.5]| S[table-format=1.5] | }
\hline
Model &  \multicolumn{1}{c|}{Short-term maturities} & \multicolumn{1}{c|}{Long-term maturities} & \multicolumn{1}{c|}{Entire dataset} \\
 &  \multicolumn{1}{c|}{(calibrated)} & \multicolumn{1}{c|}{(out of sample)} & \multicolumn{1}{c|}{(calibrated)} \\\hline
\hline
Bachelier & 2.11227 & 0.15997 & 2.26737 \\
\hline
Black &  2.39275 & 0.25876 &  2.61056   \\ 
\hline
Vasi\v{c}ek  & 1.92724 & 0.25374  & 2.09293  \\
\hline
Proposed model &  1.25596  & 0.18585  & 1.33631 \\
\hline
\end{tabular}}
 \label{table:Rel_SSE_all}
\end{table}

\begin{table}
\tbl{Calibrated parameters of the proposed model on different data sets.}
{\begin{tabular}{|c| S[table-format=2.5] |S[table-format=2.5]| S[table-format=2.5] | }
\hline
Parameter & \multicolumn{1}{c|}{Short-term maturities} & \multicolumn{1}{c|}{Entire data set} \\ \hline
\hline
$b$ &  -4.92594 $\cdot 10^{3}$   & -2.63030 $\cdot 10^{4}$ \\
\hline
$c_1$ & -7.94774 $\cdot 10^{2}$  & -4.40172 $\cdot 10^{3}$  \\ 
\hline
$\sigma$ & 2.92825 $\cdot 10^{2}$ & 1.51657 $\cdot 10^{3}$   \\
\hline
$\tau_1$ & 1.87482 &  1.64115  \\
\hline
\end{tabular}}
 \label{table:param_caplet_cali}
\end{table}

We first consider the performance on short-term maturities when calibrated against short-term maturities only (i.e. within the calibration set). It is clear from Tables \ref{table:SSE_all} and \ref{table:Rel_SSE_all}  that the proposed model obtains a significantly lower sum of squared errors compared to all other models (which perform similarly). The calibrated parameters (from Table \ref{table:param_caplet_cali}) show that the parameter $\tau_1$ is indeed near 2. The sum of squared errors for different maturities (in $\log_{10}$ terms) is presented in Figure \ref{fig:caplets_sse}. Overall, the proposed model achieves the smallest sum of squared errors for maturities, smaller than the calibrated $\tau_1=1.874 82$, except at 3M and 1Y9M. One can consider that the model can capture the dynamics shown in Figure \ref{fig:mean_caplets}. In other words, the model behaves differently depending on whether it has a short or long maturity. Figure \ref{fig:mean_in_sample} shows evidence of this, where we plot the mean of caplet prices per maturity date. We can observe that the model can adapt to the dynamics of caplet prices with maturities less than 2 years and the dynamics of caplet prices with maturities bigger than 2 years.

\begin{figure}[h!]
\centering\begin{tikzpicture} 
 \begin{axis}[
 xlabel={Maturity date (years)},
width=11cm,height=8cm,
legend pos=south east,
legend style={fill=white,fill opacity=1,text opacity=1},SimStyle]
\addplot[
  color=blue,
 mark=square,
]
 coordinates {
(0.252055, -7.427800622999848) (0.50411, -2.5754630514262242) (0.756164, -1.9882075434325277) (1.00274, -1.7139098901219667) (1.252055, -1.7991547550864968) (1.50411, -1.9873957339139854) (1.756164, -2.1238379155978975) (2.00274, -1.9357581209288386) (2.252055, -1.8917240032240925) (2.50411, -1.8327204400609003) (2.761644, -1.7365257827360168) (3.00274, -1.6920344192877304) (3.252055, -1.9866393560642006) (3.509589, -2.395434584954831) (3.761644, -2.387677939764917) (4.008219, -1.924993540692973)
};
\addlegendentry{Bachelier}

\addplot[
  color=red,
 mark=triangle,
]
 coordinates {
(0.252055, -7.2750696861258985) (0.50411, -2.231565786175953) (0.756164, -1.8028131931715332) (1.00274, -1.6330138695374605) (1.252055, -1.827090359186681) (1.50411, -2.1953988371562416) (1.756164, -2.5860377828404055) (2.00274, -2.089160649164806) (2.252055, -1.9329639928256428) (2.50411, -1.7756835573115113) (2.761644, -1.6048294239164083) (3.00274, -1.513395239185824) (3.252055, -1.741630413514029) (3.509589, -2.1384656162875832) (3.761644, -3.1703615934764784) (4.008219, -2.9418409233596283)
};
\addlegendentry{Black}

\addplot[
  color=black,
 mark=diamond,
]
 coordinates {
(0.252055, -7.430110822465058) (0.50411, -2.873125171485622) (0.756164, -2.206118800686241) (1.00274, -1.890557728016149) (1.252055, -1.9824179687904955) (1.50411, -2.145302258944985) (1.756164, -2.1496561461285175) (2.00274, -1.8761805633103061) (2.252055, -1.8413553062693742) (2.50411, -1.810696489920597) (2.761644, -1.7592696726689612) (3.00274, -1.7606221768806352) (3.252055, -2.156628857265302) (3.509589, -2.5169373954135956) (3.761644, -2.044156250854705) (4.008219, -1.590281693521361)
};
\addlegendentry{Vasi\v{c}ek}

\addplot[
 color=green,
 mark=x,
]
 coordinates {
(0.252055, -4.231992435516582) (0.50411, -3.224840857496841) (0.756164, -2.549030443589693) (1.00274, -2.2041864483607823) (1.252055, -2.3891413469855856) (1.50411, -2.2993637163052303) (1.756164, -1.8791025813918212) (2.00274, -2.018772380936854) (2.252055, -1.9905102493491658) (2.50411, -2.195568421486009) (2.761644, -2.165669007058113) (3.00274, -2.2711839263997953) (3.252055, -2.428170377684556) (3.509589, -2.518751941635426) (3.761644, -2.152352394793738) (4.008219, -1.7273889274147083)
};
\addlegendentry{Proposed model}
\end{axis} 
 \end{tikzpicture}
 \vfill
 \caption{ Sum of squared errors in $\log_{10}$ scale for each maturity date obtained by the different models on the short-term maturities data set.}
 \label{fig:caplets_sse}
\end{figure}

\begin{figure}[h!]
\centering\begin{tikzpicture} 
 \begin{axis}[
 xlabel={Maturity date (years)},
width=11cm,height=8cm,
legend pos=north east,
legend style={fill=white,fill opacity=1,text opacity=1},SimStyle]
\addplot[
  color=blue,
 mark=square,
]
 coordinates {
(0.252055, 0.23154387000581855) (0.50411, 0.12956010744343818) (0.756164, 0.11364634149087739) (1.00274, 0.09776381827187518) (1.252055, 0.08727877966734267) (1.50411, 0.08604689007170468) (1.756164, 0.0852858098339026) (2.00274, 0.08425266966131095) (2.252055, 0.08824247813360012) (2.50411, 0.09284574221973113) (2.761644, 0.09872584977163483) (3.00274, 0.09642525295684035) (3.252055, 0.10364887970376181) (3.509589, 0.11131753827826026) (3.761644, 0.11333173907009772) (4.008219, 0.1151250044659549)
};
\addplot[
  color=red,
 mark=triangle,
]
 coordinates {
(0.252055, 0.23155973525026793) (0.50411, 0.13739617331604334) (0.756164, 0.12279335105798914) (1.00274, 0.10483433708432424) (1.252055, 0.0908431256220494) (1.50411, 0.08705941432926595) (1.756164, 0.08363758599268474) (2.00274, 0.08024551410219753) (2.252055, 0.08239034850654922) (2.50411, 0.08523654252530188) (2.761644, 0.08926175079100553) (3.00274, 0.0860554152457947) (3.252055, 0.09156755486918998) (3.509589, 0.09746331486702134) (3.761644, 0.09845678654746984) (4.008219, 0.09940625356784713)
};
\addplot[
  color=black,
 mark=diamond,
]
 coordinates {
(0.252055, 0.23154081328408144) (0.50411, 0.12486601613385781) (0.756164, 0.10619897369559021) (1.00274, 0.08944931714111753) (1.252055, 0.07905269634300584) (1.50411, 0.07841337757022186) (1.756164, 0.07875417879133007) (2.00274, 0.07920758408471737) (2.252055, 0.08485912902073936) (2.50411, 0.09150790327420665) (2.761644, 0.09988867793878355) (3.00274, 0.09990889674840862) (3.252055, 0.11024442421326622) (3.509589, 0.12164350280307472) (3.761644, 0.12704856899761302) (4.008219, 0.132289550116026)
};
\addplot[
  color=green,
 mark=x,
]
 coordinates {
(0.252055, 0.23276179249704665) (0.50411, 0.1203078358138811) (0.756164, 0.09683626304122232) (1.00274, 0.07771090508960442) (1.252055, 0.06343296000944774) (1.50411, 0.05858014817863193) (1.756164, 0.055662225355358606) (2.00274, 0.09499732210659312) (2.252055, 0.1229238019592272) (2.50411, 0.12166758120664532) (2.761644, 0.11996567836336554) (3.00274, 0.12082812703856147) (3.252055, 0.12060951929187955) (3.509589, 0.12027021757952294) (3.761644, 0.12280009711977544) (4.008219, 0.12467345814114278)
};
\addplot[
  color=purple,
 mark=o,
]
 coordinates {
(0.252055, 0.2316042559220683) (0.50411, 0.11847556704340856) (0.756164, 0.08961436797021506) (1.00274, 0.06414223229958883) (1.252055, 0.05971835358606192) (1.50411, 0.07025161872451174) (1.756164, 0.08542412540361217) (2.00274, 0.10306349571731176) (2.252055, 0.11323094135108808) (2.50411, 0.12391710226610116) (2.761644, 0.13674680246503937) (3.00274, 0.13831305285727358) (3.252055, 0.13179791340109737) (3.509589, 0.12301059409827068) (3.761644, 0.10628322174651625) (4.008219, 0.08982387529368048)
};
\legend{Bachelier,Black, Vasi\v{c}ek, Proposed model,Market}
\end{axis} 
 \end{tikzpicture}
 \caption{Mean of caplet prices for each maturity date on the short-term maturities data set. }
 \label{fig:mean_in_sample}
\end{figure}

Turning to the out-of-sample set, the longer-term maturities, it can be seen from Table \ref{table:SSE_all} that the proposed model exhibits a sum of squared errors, which aligns with the other benchmark models. However, the proposed model performs slightly worse than the Bachelier model. Figure \ref{fig:caplets_sse_out} illustrates the $\log_{10}$ sum of squared errors for each maturity date. Our model does not perform best for any particular expiry date.

\begin{figure}[h!]
\centering\begin{tikzpicture} 
 \begin{axis}[
 xlabel={Maturity date (years)},
width=11cm,height=8cm,
legend pos = south east,
legend style={fill=white,fill opacity=1,text opacity=1},SimStyle]
\addplot[
 color=blue,
 mark=square,
]
 coordinates {
(4.257534, -2.2556025708518685) (4.506849, -2.6416955709737) (4.758904, -2.4799182951188823) (5.008219, -2.031199023013279)
};
\addplot[
 color=red,
 mark=triangle,
]
 coordinates {
(4.257534, -3.8919931252851945) (4.506849, -2.4602744484316474) (4.758904, -1.9135727235388442) (5.008219, -1.5780761227997597)
};
\addplot[
 color=black,
 mark=diamond,
]
 coordinates {
(4.257534, -1.74043653789572) (4.506849, -1.9536897833520834) (4.758904, -2.2355515744419314) (5.008219, -2.66849866387802)
};
\addplot[
 color=green,
 mark=x,
]
 coordinates {
(4.257534, -2.095506306509955) (4.506849, -2.601979421478236) (4.758904, -2.458064043150065) (5.008219, -1.9824584567034589)
};
\legend{Bachelier,Black, Vasi\v{c}ek,Proposed model}
\end{axis} 
 \end{tikzpicture}
  \caption{Sum of squared errors in $\log_{10}$ scale on long-term maturities by models calibrated to short-term maturities.}
 \label{fig:caplets_sse_out}
\end{figure}

Finally, we calibrate all models using the entire dataset. The calibration minimizes the relative sum of squared errors in \eqref{eq:rel_error}. The sum of squared errors for all models are presented in Tables \ref{table:SSE_all} and \ref{table:Rel_SSE_all}. Our results indicate that the proposed model achieves the lowest error when calibrated using the entire dataset. Table \ref{table:SSE_ExpDateCali_total} provides the parameter values obtained from the entire dataset. According to Proposition \ref{prop:stationary_distr}, the model exhibits a non-degenerate limiting distribution for these parameter values. Furthermore, the calibrated value of the delay parameter $\tau_1$ is again near to the value $2$. Figure \ref{fig:caplets_sse_all} shows that for maturities smaller than the calibrated delay parameter $\tau_1 =1.64115$, the proposed model obtains the smallest error except when for the maturity 3M. This can also be seen in Table \ref{table:SSE_ExpDateCali_total}. Figure \ref{fig:mean_in_sample_Total} shows the mean of caplet prices per maturity date; notice that the model differentiates between the caplets with maturities less than $\tau_1 = 1.641 15$ and with maturities bigger than  $\tau_1 = 1.641 15$.

Comparing the estimated delay parameters obtained in Section \ref{sect:estimation} with those obtained in the calibration procedure. One can observe that the estimated delays in Section \ref{sect:estimation} are smaller than those obtained from caplet calibration. This means the market measure differs from the risk-neutral measure. As we showed in Appendix \ref{sec:change-of-measure}, the interest rate process can exhibit different delays under the risk-neutral and physical measures. This is consistent with the empirical results that we obtained. The delay parameter selected according to the caplets is $\tau_1 = 1.874$ while the delay parameter with the highest value in the periodogram is $\tau_1 = 0.019$. According to \cite[Section 4.4]{smith2011introduction}, the deterministic delay differential equation \eqref{eq:deter}  will behave similarly to an ordinary differential equation when the delay parameter is small. This implies that the value of caplets for a small delay parameter $\tau_1$, will be similar to the Vasi\v{c}ek model.

\begin{table}
\tbl{Sum of squared errors in $\log_{10}$ scale  on the entire data set.}
{\begin{tabular}{|l| S[table-format=2.4] | S[table-format=2.4] | S[table-format=2.4]  | S[table-format=2.4] | S[table-format=2.4] | }
\hline
Maturity & \multicolumn{1}{c|}{Bachelier} & \multicolumn{1}{c|}{Black} & \multicolumn{1}{c|}{Vasi\v{c}ek} & \multicolumn{1}{c|}{Proposed model} \\
date &  &  &  &   \\\hline
\hline
3M&-7.4268&-7.2119&-7.4304&-6.6549\\
\hline
6M&-2.5526&-2.1719&-2.8177&-3.2824\\
\hline
9M&-1.9692&-1.7489&-2.1693&-2.6055\\
\hline
1Y&-1.6964&-1.5812&-1.8644&-2.2631\\
\hline
1Y3M&-1.7779&-1.7586&-1.9606&-2.4405\\
\hline
1Y6M&-1.9634&-2.0976&-2.1341&-2.2727\\
\hline
1Y9M&-2.1136&-2.5634&-2.1497&-2.046\\
\hline
2Y&-1.9499&-2.1873&-1.8699&-1.7732\\
\hline
2Y3M&-1.9131&-2.0362&-1.8232&-2.0085\\
\hline
2Y6M&-1.8598&-1.8725&-1.7761&-2.1966\\
\hline
2Y9M&-1.7668&-1.6903&-1.7052&-2.1518\\
\hline
3Y&-1.7228&-1.5872&-1.6882&-2.2383\\
\hline
3Y3M&-2.029&-1.8457&-2.0288&-2.3739\\
\hline
3Y6M&-2.4335&-2.3213&-2.4766&-2.5161\\
\hline
3Y9M&-2.3426&-4.0151&-2.2419&-2.1836\\
\hline
4Y&-1.8822&-2.6076&-1.7529&-1.779\\
\hline
4Y3M&-2.2003&-4.3033&-1.9724&-2.1693\\
\hline
4Y6M&-2.611&-2.7514&-2.2956&-2.6517\\
\hline
4Y9M&-2.5517&-2.0626&-2.6992&-2.3543\\
\hline
5Y&-2.0931&-1.6779&-2.7319&-1.9441\\

\hline
\end{tabular}}
 \label{table:SSE_ExpDateCali_total}
\end{table}

\begin{figure}[h!]
\centering\begin{tikzpicture} 
 \begin{axis}[
 xlabel={Maturity date (in years)},
width=11cm,height=8cm,
legend pos = south east,
legend style={fill=white,fill opacity=1,text opacity=1},SimStyle]
\addplot[
 color=blue,
 mark=square,
]
 coordinates {
(0.252055, -7.426770782695635) (0.50411, -2.552616841856077) (0.756164, -1.9691854865965273) (1.00274, -1.6963758061678975) (1.252055, -1.777923149213694) (1.50411, -1.9634236881506595) (1.756164, -2.1136190537516204) (2.00274, -1.9499302789696096) (2.252055, -1.9130686071949812) (2.50411, -1.8598154249095256) (2.761644, -1.7668157817342969) (3.00274, -1.7228185343507085) (3.252055, -2.0290035008118545) (3.509589, -2.4335103515247347) (3.761644, -2.342646715547975) (4.008219, -1.8821697814954956) (4.257534, -2.2003222903520445) (4.506849, -2.610979609416441) (4.758904, -2.551654955703708) (5.008219, -2.09305301013567)
};
\addplot[
 color=red,
 mark=triangle,
]
 coordinates {
(0.252055, -7.211861356918725) (0.50411, -2.1718946005283915) (0.756164, -1.748902757499878) (1.00274, -1.581221065046476) (1.252055, -1.758628931295345) (1.50411, -2.097559618136721) (1.756164, -2.5633946156539866) (2.00274, -2.187311665913043) (2.252055, -2.0361701876230454) (2.50411, -1.8725135344070092) (2.761644, -1.6902785365822075) (3.00274, -1.587227148713081) (3.252055, -1.8457121076449061) (3.509589, -2.3212581918934063) (3.761644, -4.015136299427025) (4.008219, -2.6076094532251397) (4.257534, -4.303346603407228) (4.506849, -2.7514466770431922) (4.758904, -2.0626436313319974) (5.008219, -1.677918940133642)
};
\addplot[
 color=black,
 mark=diamond,
]
 coordinates {
(0.252055, -7.430386202152834) (0.50411, -2.817746548877908) (0.756164, -2.169288616472981) (1.00274, -1.8643765334583626) (1.252055, -1.9605534429374811) (1.50411, -2.134119961903651) (1.756164, -2.149726642845529) (2.00274, -1.869893927616126) (2.252055, -1.8231641752268821) (2.50411, -1.7761250505653672) (2.761644, -1.7052320339682256) (3.00274, -1.6881772196532179) (3.252055, -2.02880887398938) (3.509589, -2.476576540046345) (3.761644, -2.2418904404032305) (4.008219, -1.7528796660340984) (4.257534, -1.9723903425436822) (4.506849, -2.295552623258762) (4.758904, -2.6992358377222225) (5.008219, -2.731891039575178)
};
\addplot[
 color=green,
 mark=x,
]
 coordinates {
(0.252055, -6.654874278678436) (0.50411, -3.2824129418338295) (0.756164, -2.6055134702651794) (1.00274, -2.2631334897350905) (1.252055, -2.4404961367785813) (1.50411, -2.2727358537429234) (1.756164, -2.046036321263238) (2.00274, -1.7732324398140695) (2.252055, -2.0085231394527727) (2.50411, -2.1966347769185) (2.761644, -2.1518110624990707) (3.00274, -2.2382549412238437) (3.252055, -2.3739122333753455) (3.509589, -2.516124343848065) (3.761644, -2.1836038204967667) (4.008219, -1.778954909733854) (4.257534, -2.16934130099718) (4.506849, -2.65170339841845) (4.758904, -2.3543311135692337) (5.008219, -1.9441100535006466)
};
\legend{Bachelier,Black, Vasi\v{c}ek,Proposed model}
\end{axis} 
 \end{tikzpicture}
\caption{Sum of squared errors in $\log_{10}$ scale for each maturity date obtained by the different models on the whole data set.}
 \label{fig:caplets_sse_all}
\end{figure}

\begin{figure}[h!]
\centering\begin{tikzpicture} 
 \begin{axis}[
 xlabel={Maturity date (years)},
width=11cm,height=8cm,
legend pos=north east,
legend style={fill=white,fill opacity=1,text opacity=1},SimStyle]
\addplot[
  color=blue,
 mark=square,
]
 coordinates {
(0.252055, 0.23154425014175867) (0.50411, 0.12998079947742133) (0.756164, 0.11437711557550258) (1.00274, 0.09867870833122715) (1.252055, 0.08832828866068401) (1.50411, 0.08721833704223753) (1.756164, 0.08655124880875857) (2.00274, 0.08557522460493487) (2.252055, 0.08966557356190319) (2.50411, 0.094368351756393) (2.761644, 0.10036310514531337) (3.00274, 0.09803398525286143) (3.252055, 0.10538179164285975) (3.509589, 0.11317845312681611) (3.761644, 0.11522249108198443) (4.008219, 0.11703862126685306) (4.257534, 0.12267170750359123) (4.506849, 0.12712417493703365) (4.758904, 0.13309775955979306) (5.008219, 0.13634043786315875)
};
\addplot[
 color=red,
 mark=triangle,
]
 coordinates {
(0.252055, 0.2315639447241641) (0.50411, 0.138986285929613) (0.756164, 0.12531851733122587) (1.00274, 0.10783289865271851) (1.252055, 0.09413110907437451) (1.50411, 0.09061400741941585) (1.756164, 0.08736339761298863) (2.00274, 0.08403703957178948) (2.252055, 0.08639291705881934) (2.50411, 0.08944875479150949) (2.761644, 0.09372333464385528) (3.00274, 0.09038113806712705) (3.252055, 0.09617654223237229) (3.509589, 0.10236297619521442) (3.761644, 0.10338892450403186) (4.008219, 0.10435841736527798) (4.257534, 0.10887076345491759) (4.506849, 0.11243475369568001) (4.758904, 0.11747066621372522) (5.008219, 0.12023144435994247)
};
\addplot[
  color=black,
 mark=diamond,
]
 coordinates {
(0.252055, 0.23154115952626392) (0.50411, 0.12564864385358765) (0.756164, 0.10735626159068311) (1.00274, 0.09059421691770662) (1.252055, 0.07997684001246246) (1.50411, 0.0789822566360091) (1.756164, 0.0788432252674187) (2.00274, 0.07873595579403954) (2.252055, 0.08370804158666773) (2.50411, 0.08954510564655506) (2.761644, 0.09693542577458553) (3.00274, 0.09618966639902636) (3.252055, 0.10529376511089204) (3.509589, 0.11523229032227902) (3.761644, 0.119393268959494) (4.008219, 0.12336435524260887) (4.257534, 0.13159426876236138) (4.506849, 0.1387320964763681) (4.758904, 0.14777375771471976) (5.008219, 0.15387318280605983)
};
\addplot[
 color=green,
 mark=x,
]
 coordinates {
(0.252055, 0.2316030296182204) (0.50411, 0.1194675258315575) (0.756164, 0.09543133722754625) (1.00274, 0.07582560519260874) (1.252055, 0.06130013165198564) (1.50411, 0.056244547965930675) (1.756164, 0.09274917475093906) (2.00274, 0.12427523426594343) (2.252055, 0.12147168824000604) (2.50411, 0.12003140354936746) (2.761644, 0.11926040458161637) (3.00274, 0.11964099511782819) (3.252055, 0.11838241757169649) (3.509589, 0.1192171137182965) (3.761644, 0.12160563599964734) (4.008219, 0.12205664289055393) (4.257534, 0.12398527010617438) (4.506849, 0.12523047362613582) (4.758904, 0.1271309189260212) (5.008219, 0.13084319755059)
};
\addplot[
 color=purple,
 mark=o,
]
 coordinates {
(0.252055, 0.2316042559220683) (0.50411, 0.11847556704340852) (0.756164, 0.08961436797021508) (1.00274, 0.06414223229958883) (1.252055, 0.05971835358606193) (1.50411, 0.07025161872451174) (1.756164, 0.08542412540361217) (2.00274, 0.10306349571731176) (2.252055, 0.11323094135108809) (2.50411, 0.12391710226610114) (2.761644, 0.13674680246503937) (3.00274, 0.13831305285727355) (3.252055, 0.1317979134010974) (3.509589, 0.1230105940982707) (3.761644, 0.10628322174651626) (4.008219, 0.08982387529368048) (4.257534, 0.10695721924840895) (4.506849, 0.12431231659471771) (4.758904, 0.14421264499881736) (5.008219, 0.1621439941936957)
};

\legend{Bachelier,Black, Vasi\v{c}ek, Proposed model,Market}
\end{axis} 
 \end{tikzpicture}
 \caption{Mean of caplet prices for each maturity date on the entire data set. }
 \label{fig:mean_in_sample_Total}
\end{figure}

\section{Conclusion}

In this paper, we obtained an analytical formula for the price of a bond when the short rate satisfies a stochastic delay differential equation. The model presented is a delayed version of the well-known Vasi\v{c}ek model. To our knowledge, this is the first time an analytical formula has been derived for the zero coupon bond under this model. We also give an analytical formula for the strong solution of the stochastic delay differential equation \eqref{eq:r}. In addition, we showed that the short rate follows a normal distribution, and has a limiting distribution under certain conditions. This last result has applications outside of the fixed-income securities context. For example, the valuation of weather derivatives uses an Ornstein-Uhlenbeck process with a periodic function in the drift to model temperature \citep{esunge2020weather}. It might therefore be possible to use equation \eqref{eq:r} to model the temperature since the presence of the delay parameter allows the model to capture the past dependencies that appear on the temperature. Additionally, we derived an analytical formula for the instantaneous forward rate, enabling us to apply the deterministic shift extension method  and achieve perfect calibration of the yield curve. Furthermore, we obtained an analytical formula for caplets on overnight risk-free rates. Numerical experiments were also conducted, demonstrating that the proposed model performs favorably compared to various benchmark models.

The techniques presented here can be applied to develop new model variations. For instance, a second delayed Ornstein-Uhlenbeck process can be introduced to create a delayed version of the two-factor Hull-White model \citep{hull1994numerical2}. 

In this framework, we assume that the noise in equation \eqref{eq:r} follows a Brownian motion, although a Lévy process could replace it. This substitution would result in a Lévy-driven Ornstein-Uhlenbeck process with delay. Lévy-driven Ornstein-Uhlenbeck processes are widely used in various financial applications, including commodities \citep{li2014time}, energy derivatives \citep{benth2007non}, and volatility derivatives \citep{Habtemicael2016}. 

Recently, Guinea Juli{\'a} and Caro-Carretero \cite{GuineaJulia2024} utilized the methods described here to construct a delayed version of the Barndorff-Nielsen and Shephard model \citep{nicolato2003option,Barndorff-Nielsen_Shephard2001a,Barndorff-Nielsen_Shephard2001b}. The results presented in this work may open new avenues for research in the field.

\bibliographystyle{tfs}
\bibliography{interacttfssample}

@book{pascucci2011pde,
  title={{PDE} and Martingale Methods in Option Pricing},
  author={Pascucci, Andrea},
  year={2011},
  publisher={Springer}
}

@article{nicolato2003option,
  Title                    = {Option Pricing in Stochastic Volatility Models of the {O}rnstein-{U}hlenbeck type},
  Author                   = {Elisa Nicolato and Emmanouil Venardos},
  Journal                  = {Mathematical Finance},
  Year                     = {2003},
  Number                   = {4},
  Pages                    = {445--466},
  Volume                   = {13},

  Abstract                 = {Stochastic volatility models of the Ornstein-Uhlenbeck type possess authentic capability of capturing some stylized features of financial time series. In this work we investigate this class of models from the viewpoint of derivative asset analysis. We discuss topics related to the incompleteness of this type of markets. In particular, for structure preserving martingale measures, we derive the price of simple European-style contracts in closed form. Furthermore, the range of viable prices is determined and an empirical application is presented.},
  File                     = {Nicolato\_Venardos2003.pdf:Article/Nicolato\_Venardos2003 - Option Pricing in Stochastic Volatility Models of the Ornstein Uhlenbeck Type.pdf:PDF},
  Keywords                 = {incomplete markets, martingale measures, Ornstein-Uhlenbeck type processes, option pricing, range of prices}
}

@techreport{merton1970dynamic,
  title={A dynamic general equilibrium model of the asset market and its application to the pricing of the capital structure of the firm},
  author={Merton, Robert C},
  year={1970},
institution = {Sloan School of Management},
type        = "Working Paper",
number      = "497",
address= {MIT, Cambridge},
 booktitle = "Continuous-Time Finance"
}

@article{merton1973theory,
  title={Theory of rational option pricing},
  author={Merton, Robert C},
  journal={The Bell Journal of {E}conomics and {M}anagement {S}cience},
  pages={141--183},
  year={1973},
 volume={4},
  number={1},
  publisher={JSTOR}
}

@article{flore2019feynman,
  title={A {F}eynman-{K}ac type formula for a fixed delay {CIR} model},
  author={Flore, Federico and Nappo, Giovanna},
  journal={Stochastic Analysis and Applications},
  volume={37},
  number={4},
  pages={550--573},
  year={2019},
  doi={10.1080/07362994.2019.1592691},
  publisher={Taylor \& Francis}
}

@article{vasicek1977equilibrium,
  title={An equilibrium characterization of the term structure},
  author={Vasi\v{c}ek, Oldrich},
  journal={Journal of financial economics},
  volume={5},
  number={2},
  pages={177--188},
  year={1977},
  publisher={Elsevier},
  doi={10.1016/0304-405X(77)90016-2}
}

@article{gomez2023estimating,
  title={Estimating and pricing commodity futures with time-delay stochastic processes},
  author={G{\'o}mez-Valle, Lourdes and Mart{\'i}nez-Rodr{\'i}guez, Julia},
  journal={Mathematical Methods in the Applied Sciences},
  year={2023},
pages = {1--12},
  doi={10.1002/mma.9115},
  publisher={Wiley Online Library}
}

@article{arriojas2007delayed,
author={Arriojas, Mercedes and Hu, Yaozhong and Mohammed, Salah-Eldin and Pap, Gyula},
title = {A Delayed {B}lack and {S}choles Formula},
journal = {Stochastic Analysis and Applications},
volume = {25},
number = {2},
pages = {471-492},
year = {2007},
  doi={10.1080/07362990601139669},
publisher = {Taylor \& Francis}
}

@article{mackey1995solution,
  title={Solution moment stability in stochastic differential delay equations},
  author={Mackey, Michael C and Nechaeva, Irina G},
  journal={Physical Review E},
  volume={52},
  number={4},
  pages={3366--3376},
  year={1995},
  publisher={American Physical Society},
  doi={10.1103/PhysRevE.52.3366}
}

@article{basse2020stochastic,
  title={Stochastic delay differential equations and related autoregressive models},
  author={Basse-O'Connor, Andreas and Nielsen, Mikkel Slot and Pedersen, Jan and Rohde, Victor},
  journal={Stochastics},
  volume={92},
  number={3},
  pages={454--477},
  year={2020},
  doi={10.1080/17442508.2019.1635601},
  publisher={Taylor \& Francis}
}

@unpublished{ott2006ornstein,
  title={On the {O}rnstein-{U}hlenbeck process with delayed feedback},
  author={Ott, Teunis J},
  url={https://www.teunisott.com/Papers/TCP\_Paradigm/Del\_O\_U.pdf},
note = { Accessed online 27-Sep-2023},
  year={2006}
}

@article{kuchler1992langevins,
  title={Langevins stochastic differential equation extended by a time-delayed term},
  author={K{\"u}chler, Uwe and Mensch, Beatrice},
  journal={Stochastics: An International Journal of Probability and Stochastic Processes},
  volume={40},
  number={1-2},
  pages={23--42},
  year={1992},
  publisher={Taylor \& Francis},
  doi={10.1080/17442509208833780}
}

@book{musiela2006martingale,
  title={Martingale Methods in Financial Modelling},
  author={Musiela, Marek and Rutkowski, Marek},
  year={2004},
  publisher={Springer Science \& Business Media}
}

@article{klebaner2014stochastic,
  title={When a stochastic exponential is a true martingale. {E}xtension of the {B}ene\v{s} method},
  author={Klebaner, Fima and Liptser, Robert},
  journal={Theory of Probability \& Its Applications},
  volume={58},
  number={1},
  pages={38--62},
  year={2014},
    doi={10.1137/S0040585X97986382},
  publisher={SIAM}
}

@book{smith2011introduction,
  title={An Introduction to Delay Differential Equations with Applications to the Life Sciences},
  author={Smith, Hal},
  year={2011},
  publisher={Springer}
}

@article{esunge2020weather,
  title={Weather derivatives and the market price of risk},
  author={Esunge, Julius and Njong, James J},
  journal={Journal of Stochastic Analysis},
  volume={1},
  number={3},
  pages={7},
  doi={10.31390/josa.1.3.07},
  year={2020}
}

@article{li2014time,
  title={Time-Changed {O}rnstein--{U}hlenbeck Processes and Their Applications in Commodity Derivative Models},
  author={Li, Lingfei and Linetsky, Vadim},
  journal={Mathematical Finance},
  volume={24},
  number={2},
  pages={289--330},
  year={2014},
  doi={10.1111/mafi.12003},
  publisher={Wiley Online Library}
}

@article{benth2007non,
  title={A non-{G}aussian {O}rnstein--{U}hlenbeck process for electricity spot price modeling and derivatives pricing},
  author={Benth, Fred Espen and Kallsen, Jan and Meyer-Brandis, Thilo},
  journal={Applied Mathematical Finance},
  volume={14},
  number={2},
  pages={153--169},
  year={2007},
  doi={10.1080/13504860600725031},
 publisher = {Routledge}
}

@Book{Dyke2014,
  author    = {Dyke, Phil},
  publisher = {Springer},
  title     = {An introduction to {L}aplace transforms and {F}ourier series},
  year      = {2014},
  address   = {London},
  edition   = {2nd ed.},
  isbn      = {978-1-4471-6394-7},
  series    = {Springer Undergraduate Mathematics Series},
  pagetotal = {318},
}

@article{Coffie+2023+67+89,
title = {Delay {Ait-Sahalia-type} interest rate model with jumps and its strong approximation},
author = {Emmanuel Coffie},
pages = {67--89},
volume = {40},
number = {3-4},
journal = {Statistics \& Risk Modeling},
doi = {10.1515/strm-2022-0013},
year = {2023},
lastchecked = {2023-10-19}
}

@article{inhoffen2021low,
  title={Low for long: side effects of negative interest rates},
  author={Inhoffen, Justus and Pekanov, Atanas and Url, Thomas},
  journal={Monetary Dialogue Papers},
  pages={3--32},
  year={2021},
url = {https://www.europarl.europa.eu/cmsdata/235693/02.DIW\_formatted.pdf},
note={Accessed online 4 October 2023}
}

@article{HANSEN2021106302,
title = {Modeling persistent interest rates with double-autoregressive processes},
journal = {Journal of Banking \& Finance},
volume = {133},
pages = {106302},
year = {2021},
issn = {0378-4266},
doi = {https://doi.org/10.1016/j.jbankfin.2021.106302},
url = {https://www.sciencedirect.com/science/article/pii/S0378426621002545},
author = {Anne Lundgaard Hansen},
keywords = {Yield curve, Unit roots, Cointegration, Persistence problem, Volatility-induced stationarity, Random coefficient vector autoregression, Macro-finance term structure model},
abstract = {We propose a term structure model featuring double-autoregressive dynamics, which can accommodate unit roots and cointegrating relations while maintaining stationarity. In an empirical application, the model reduces in-sample misspecification and significantly improves out-of-sample yield forecasts compared with term structure models based on linear VAR dynamics. The double-autoregressive process implies term premia that resemble the term spread. We test whether these premia help in overcoming the persistence problem of stationary VAR models with mixed results. Finally, we discuss alternative interpretations of the mechanism inducing stationarity in our model.}
}

@article{Lanne2003,
    author = {Lanne, Markku and Saikkonen, Pentti},
    title = "{Modeling the {U.S.} Short-Term Interest Rate by Mixture Autoregressive Processes}",
    journal = {Journal of Financial Econometrics},
    volume = {1},
    number = {1},
    pages = {96-125},
    year = {2003},
    month = {03},
    abstract = "{A new kind of mixture autoregressive model with GARCH errors is introduced and applied to the U.S. short-term interest rate. According to the diagnostic tests developed in the article and further informal checks, the model is capable of capturing both of the typical characteristics of the short-term interest rate: volatility persistence and the dependence of volatility on the level of the interest rate. The model also allows for regime switches whose presence has been a third central result emerging from the recent empirical literature on the U.S. short-term interest rate. Realizations generated from the estimated model seem stable and their properties resemble those of the observed series closely. The drift and diffusion functions implied by the new model are in accordance with the results in much of the literature on continuous-time diffusion models for the short-term interest rate, and the term structure implications agree with historically observed patterns.}",
    issn = {1479-8409},
    doi = {10.1093/jjfinec/nbg004},
    url = {https://doi.org/10.1093/jjfinec/nbg004},
    eprint = {https://academic.oup.com/jfec/article-pdf/1/1/96/2749554/nbg004.pdf},
}

@article{kim_kim_jo_2022, title={{P}RICING FORMULA FOR EXCHANGE OPTION BASED ON STOCHASTIC DELAY DIFFERENTIAL EQUATION WITH JUMPS}, volume={36}, DOI={10.1017/S0269964820000546}, number={2}, journal={Probability in the Engineering and Informational Sciences}, publisher={Cambridge University Press}, author={Kim, Kyong-Hui and Kim, Jong-Kuk and Jo, Ho-Bom}, year={2022}, pages={548–563}}

@article{LEE20112909,
title = {A delay financial model with stochastic volatility; martingale method},
journal = {Physica A: Statistical Mechanics and its Applications},
volume = {390},
number = {16},
pages = {2909-2919},
year = {2011},
issn = {0378-4371},
doi = {https://doi.org/10.1016/j.physa.2011.03.032},
url = {https://www.sciencedirect.com/science/article/pii/S0378437111002457},
author = {Min-Ku Lee and Jeong-Hoon Kim and Joocheol Kim},
keywords = {Black–Scholes formula, Delay, Stochastic volatility, Martingale, Option pricing},
abstract = {In this paper, we extend a delayed geometric Brownian model by adding a stochastic volatility term, which is driven by a hidden process of fast mean reverting diffusion, to the delayed model. Combining a martingale approach and an asymptotic method, we develop a theory for option pricing under this hybrid model. The core result obtained by our work is a proof that a discounted approximate option price can be decomposed as a martingale part plus a small term. Subsequently, a correction effect on the European option price is demonstrated both theoretically and numerically for a good agreement with practical results.}
}

@article{Coffie2024,
author = {Emmanuel Coffie},
title = {Numerical approximation of a hybrid {P}oisson-jump {A}it-{S}ahalia-type interest rate model with delay},
journal = {Stochastic Models},
volume = {40},
number = {3},
pages = {583--616},
year = {2024},
publisher = {Taylor \& Francis},
doi = {10.1080/15326349.2024.2305344},
URL = { https://doi.org/10.1080/15326349.2024.2305344},
eprint = { https://doi.org/10.1080/15326349.2024.2305344}
}

@article{KAZMERCHUK200769,
title = {The pricing of options for securities markets with delayed response},
journal = {Mathematics and Computers in Simulation},
volume = {75},
number = {3},
pages = {69-79},
year = {2007},
issn = {0378-4754},
doi = {https://doi.org/10.1016/j.matcom.2006.09.002},
url = {https://www.sciencedirect.com/science/article/pii/S0378475406002370},
author = {Yuriy Kazmerchuk and Anatoliy Swishchuk and Jianhong Wu},
keywords = {-securities market, Stochastic delay differential equations, GARCH, Black–Scholes formula},
abstract = {The analogue of Black–Scholes formula for vanilla call option price in conditions of (B,S)-securities market with delayed response is derived. A special case of continuous-time version of GARCH is considered. The results are compared with the results of Black and Scholes.}
}

@article{swishchuk2011pricing,
  title={Pricing variance swaps for stochastic volatilities with delay and jumps},
  author={Swishchuk, Anatoliy and Xu, Li},
  journal={International Journal of Stochastic Analysis},
  year={2011},
note={Article id 435145},
  publisher={Hindawi}
}

@article{swishchuk2014smiling,
  title={Smiling for the delayed volatility swaps},
  author={Swishchuk, Anatoliy and Vadori, Nelson},
  journal={Wilmott},
  volume={2014},
  number={74},
  pages={62--73},
  year={2014},
  publisher={Wiley Online Library}
}

@unpublished{lyashenko10looking,
  title={Looking forward to backward-looking rates: a modeling framework for term rates replacing {LIBOR}},
  author={Lyashenko, Andrei and Mercurio, Fabio},
year = {2019},
  url={https://www.deriscope.com/docs/Lyashenko\_Mercurio\_2019.pdf},
  note={Accessed online 5 September 2024}
}

@book{andersen2010interest,
  title={Interest Rate Modeling},
  author={Andersen, Leif and Piterbarg, Vladimir },
  year={2010},
  publisher={Atlantic Financial Press}
}

@unpublished{rutkowski2021pricing,
  title={Pricing and hedging of {SOFR} derivatives under differential funding costs and collateralization},
  author={Rutkowski, Marek and Bickersteth, Matthew},
url={https://arxiv.org/abs/2112.14033},
note = {Accessed online 17 September 2024},
  year={2021}
}

@article{turfus2022caplet,
  title={Caplet pricing with backward-looking rates},
  author={Turfus, Colin},
  journal={Wilmott Magazine},
  pages={106--9},
  year={2022}
}

@article{Fontana2024,
author = {Fontana, Claudio and Grbac, Zorana and Schmidt, Thorsten},
title = {Term structure modeling with overnight rates beyond stochastic continuity},
journal = {Mathematical Finance},
volume = {34},
number = {1},
pages = {151-189},
keywords = {affine processes, alternative risk-free rate, hedging, Libor reform, local risk-minimization, semimartingales, SOFR, SONIA, €stochastic discontinuities, STR},
doi = {https://doi.org/10.1111/mafi.12415},
url = {https://onlinelibrary.wiley.com/doi/abs/10.1111/mafi.12415},
eprint = {https://onlinelibrary.wiley.com/doi/pdf/10.1111/mafi.12415},
abstract = {Abstract Overnight rates, such as the Secured Overnight Financing Rate (SOFR) in the United States, are central to the current reform of interest rate benchmarks. A striking feature of overnight rates is the presence of jumps and spikes occurring at predetermined dates due to monetary policy interventions and liquidity constraints. This corresponds to stochastic discontinuities (i.e., discontinuities occurring at ex ante known points in time) in their dynamics. In this work, we propose a term structure modeling framework based on overnight rates and characterize absence of arbitrage in a generalized Heath–Jarrow–Morton (HJM) setting. We extend the classical short-rate approach to accommodate stochastic discontinuities, developing a tractable setup driven by affine semimartingales. In this context, we show that simple specifications allow to capture stylized facts of the jump behavior of overnight rates. In a Gaussian setting, we provide explicit valuation formulas for bonds and caplets. Furthermore, we investigate hedging in the sense of local risk-minimization when the underlying term structures feature stochastic discontinuities.},
year = {2024}
}

@book{brigo2006interest,
  title={Interest rate models-theory and practice: with smile, inflation and credit},
  author={Brigo, Damiano and Mercurio, Fabio},
  edition   = {2nd ed.},
  year={2006},
  publisher={Springer}
}

@article{hull1994numerical2,
  title={{N}umerical procedures for implementing term structure models {II}: {T}wo-factor models},
  author={Hull, John and White, Alan},
  journal={The Journal of Derivatives},
  volume={2},
  pages={37--47},
  year={1994}
}

@article{hull1994numerical,
  title={Numerical procedures for implementing term structure models I: {S}ingle-factor models},
  author={Hull, John and White, Alan},
  journal={Journal of Derivatives},
  volume={2},
  number={1},
  pages={7--16},
  year={1994}
}

@mastersthesis{cao2023lmm,
    title={{LMM} to {FMM}: {a}n {i}llustration of {SONIA} cap pricing},
    author={Cao, Weiqing},
    school  = {Imperial College London},
 url={https://www.imperial.ac.uk/media/imperial-college/faculty-of-natural-sciences/department-of-mathematics/math-finance/212266811---WEIQING-CAO---Cao\_Weiqing\_02282339.pdf},
  note={Accessed online 14 September 2024},
    year={2023}
}

@book{platen1992Numerical,
  title={Numerical Solution of Stochastic Differential Equations},
  author={Kloeden, Peter E. and Platen,  Eckhard},
  year={1992},
  publisher={Springer Berlin}
}

@book{wilmott2006paul,
  title={Paul {W}ilmott on Quantitative Finance, 3 Volume Set},
  author={Wilmott, Paul},
  year={2006},
  publisher={John Wiley \& Sons}
}

@misc {Svensson1994,
      author = "Lars E. O. Svensson",
      title = "Estimating and Interpreting Forward Interest Rates: Sweden 1992-1994",
      year = "1994",
      publisher = "International Monetary Fund",
      address = "USA",
      isbn = "9781451853759",
      volume = "1994",
      number = "114",
      doi = "10.5089/9781451853759.001",
}

@article{Refet2007,
title = {The {U}.{S}. {T}reasury yield curve: 1961 to the present},
journal = {Journal of Monetary Economics},
volume = {54},
number = {8},
pages = {2291-2304},
year = {2007},
issn = {0304-3932},
doi = {https://doi.org/10.1016/j.jmoneco.2007.06.029},
author = {Refet S. Gürkaynak and Brian Sack and Jonathan H. Wright},
keywords = {Yield curve, Treasury market, On the run premia, High-frequency data}
}

@article{GuineaJulia2024,
author = {Guinea Juli{\'a}, {\'A}lvaro and Caro-Carretero, Raquel},
title = {Option pricing in a stochastic delay volatility model},
journal = {Mathematical Methods in the Applied Sciences},
doi = {https://doi.org/10.1002/mma.10417},
url = {https://onlinelibrary.wiley.com/doi/abs/10.1002/mma.10417},
year = {2024}
}

@article{Habtemicael2016,
author = {Habtemicael, Semere and SenGupta, Indranil},
journal = {International Journal of Financial Engineering},
volume = {03},
number = {04},
pages = {1650027},
year = {2016},
doi = {10.1142/S2424786316500274},
title = {Pricing variance and volatility swaps for Barndorff-Nielsen and Shephard process driven financial markets},
URL = { https://doi.org/10.1142/S2424786316500274},
eprint = { https://doi.org/10.1142/S2424786316500274}
}

@Article{Barndorff-Nielsen_Shephard2001a,
  author   = {Ole E. Barndorff-Nielsen and Neil Shephard},
  journal  = {Journal of the Royal Statistical Society, Series B (Statistical Methodology)},
  title    = {Non-{G}aussian {O}rnstein-{U}hlenbeck-Based Models and Some of Their Uses in Financial Economics},
  year     = {2001},
  number   = {2},
  pages    = {167--241},
  volume   = {63},
  abstract = {Non-Gaussian processes of Ornstein-Uhlenbeck (OU) type offer the possibility of capturing important distributional deviations from Gaussianity and for flexible modelling of dependence structures. This paper develops this potential, drawing on and extending powerful results from probability theory for applications in statistical analysis. Their power is illustrated by a sustained application of OU processes within the context of finance and econometrics. We construct continuous time stochastic volatility models for financial assets where the volatility processes are superpositions of positive OU processes, and we study these models in relation to financial data and theory.},
  keywords = {Background driving L6vy process; Econometrics; L{\'e}vy density; L{\'e}vy process; Long-range dependence; Option pricing; Ornstein-Uhlenbeck processes; Particle filter; Stochastic volatility; Subordination; Superposition},
}

@Incollection{Barndorff-Nielsen_Shephard2001b,
  author    = {Barndorff-Nielsen, Ole E. and Shephard, Neil},
  editor    = {Barndorff-Nielsen, Ole E. and Resnick, Sidney I. and Mikosch, Thomas},
  pages     = {283--318},
  publisher = {Birkh{\"a}user Boston},
  title     = {Modelling by L{\'e}vy Processess for Financial Econometrics},
  year      = {2001},
  address   = {Boston, MA},
  isbn      = {978-1-4612-0197-7},
  abstract  = {This paper reviews some recent work in which L{\'e}vy processes are used to model and analyse time series from financial econometrics. A main feature of the paper is the use of posi- tive Ornstein-Uhlenbeck-type (OU-type) processes inside stochastic volatility processes. The basic probability theory associated with such models is discussed in some detail.},
  booktitle = {L{\'e}vy Processes: Theory and Applications},
}

@book{mao1997,
  title={Stochastic differential equations and their applications},
  author={Mao, Xuerong},
  year={1997},
  publisher={Chichester: Horwood Publishing}
}

@article{HALE1985533,
title = {Stability in linear delay equations},
journal = {Journal of Mathematical Analysis and Applications},
volume = {105},
number = {2},
pages = {533-555},
year = {1985},
issn = {0022-247X},
doi = {https://doi.org/10.1016/0022-247X(85)90068-X},
url = {https://www.sciencedirect.com/science/article/pii/0022247X8590068X},
author = {Jack K Hale and Ettore F Infante and Fu-Shiang Peter Tsen},
abstract = {For linear autonomous differential difference equations of retarded or neutral type, necessary and sufficient conditions are given for the zero solution to be stable (hyperbolic) for all values of the delays.}
}

@ARTICLE{2020SciPy-NMeth,
  author  = {Virtanen, Pauli and Gommers, Ralf and Oliphant, Travis E. and
            Haberland, Matt and Reddy, Tyler and Cournapeau, David and
            Burovski, Evgeni and Peterson, Pearu and Weckesser, Warren and
            Bright, Jonathan and {van der Walt}, St{\'e}fan J. and
            Brett, Matthew and Wilson, Joshua and Millman, K. Jarrod and
            Mayorov, Nikolay and Nelson, Andrew R. J. and Jones, Eric and
            Kern, Robert and Larson, Eric and Carey, C J and
            Polat, {\.I}lhan and Feng, Yu and Moore, Eric W. and
            {VanderPlas}, Jake and Laxalde, Denis and Perktold, Josef and
            Cimrman, Robert and Henriksen, Ian and Quintero, E. A. and
            Harris, Charles R. and Archibald, Anne M. and
            Ribeiro, Ant{\^o}nio H. and Pedregosa, Fabian and
            {van Mulbregt}, Paul and {SciPy 1.0 Contributors}},
  title   = {{{SciPy} 1.0: Fundamental Algorithms for Scientific
            Computing in Python}},
  journal = {Nature Methods},
  year    = {2020},
  volume  = {17},
  pages   = {261--272},
  adsurl  = {https://rdcu.be/b08Wh},
  doi     = {10.1038/s41592-019-0686-2},
}

@book{hale2006functional,
  title={Introduction to {F}unctional {D}ifferential {E}quations},
  author={Hale, Jack K. and Verduyn Lunel, Sjoerd M.},
  year={1993},
  publisher={Springer}
}

@Article{Cretarola_FigaTalamanca_Patacca2020,
  author    = {Alessandra Cretarola and Gianna Fig{\`{a}}-Talamanca and Marco Patacca},
  journal   = {Decisions in Economics and Finance},
  title     = {Market attention and {B}itcoin price modeling: {T}heory, estimation and option pricing},
  year      = {2020},
  month     = {jul},
  number    = {1},
  pages     = {187--228},
  volume    = {43},
  publisher = {Springer},
}

@article{Fontana2023,
author = {Fontana, Claudio},
title = {Caplet Pricing in {A}ffine Models for {A}lternative {R}isk-{F}ree {R}ates},
journal = {SIAM Journal on Financial Mathematics},
volume = {14},
number = {1},
pages = {SC1-SC16},
year = {2023},
doi = {10.1137/22M1513691},
URL = { https://doi.org/10.1137/22M1513691},
eprint = { https://doi.org/10.1137/22M1513691}
}

@article{Liu01112024,
author = {Fengming Liu and Yingda Song and},
title = {Risk-free rate caplets pricing by {CTMC} approximation},
journal = {Quantitative Finance},
volume = {24},
number = {11},
pages = {1579--1595},
year = {2024},
publisher = {Routledge},
doi = {10.1080/14697688.2024.2421377},

URL = { https://doi.org/10.1080/14697688.2024.2421377},
eprint = { https://doi.org/10.1080/14697688.2024.2421377}

}

@article{FANG2024102392,
title = {What drives jumps in the secured {O}vernight {F}inancing Rate? {E}vidence from the arbitrage-free {N}elson–{S}iegel model with jump diffusion},
journal = {Pacific-Basin Finance Journal},
volume = {86},
pages = {102392},
year = {2024},
issn = {0927-538X},
doi = {https://doi.org/10.1016/j.pacfin.2024.102392},
url = {https://www.sciencedirect.com/science/article/pii/S0927538X24001434},
author = {Dong-Jie Fang and Zong-Wei Yeh and Jie-Cao He and Shih-Kuei Lin},
keywords = {Secured Overnight Financing Rate (SOFR), SOFR futures, Arbitrage-free Nelson–Siegel model with jump diffusion (AFNSJ), Federal Open Market Committee (FOMC) meeting, Particle filter}
}

@article{Brace2024,
author = {Brace, Alan and Gellert, Karol and Schlögl, Erik},
title = {{SOFR} term structure dynamics—Discontinuous short rates and stochastic volatility forward rates},
journal = {Journal of Futures Markets},
volume = {44},
number = {6},
pages = {936-985},
keywords = {interest rate futures, interest rate options, interest rates, SOFR, term structure modeling},
doi = {10.1002/fut.22499},
year = {2024}
}

@article{SKOV2024,
author = {Schlögl, Erik and Skov, Jacob Bjerre and Skovmand, David},
title = {TERM STRUCTURE MODELING OF {SOFR}: {E}VALUATING THE IMPORTANCE OF SCHEDULED JUMPS},
journal = {International Journal of Theoretical and Applied Finance},
volume = {27},
number = {02},
pages = {2450009},
year = {2024},
doi = {10.1142/S0219024924500092},
eprint = { https://doi.org/10.1142/S0219024924500092},

}

@Inbook{deleeuw1992introduction,
author="deLeeuw, J.",
editor="Kotz, Samuel
and Johnson, Norman L.",
title="{I}ntroduction to {A}kaike (1973) Information Theory and an Extension of the Maximum Likelihood Principle",
bookTitle="Breakthroughs in Statistics: Foundations and Basic Theory",
year="1992",
publisher="Springer New York",
pages="599--609"
}

@book{percival2020spectral,
  title={Spectral analysis for univariate time series},
  author={Percival, Donald B and Walden, Andrew T},
  volume={51},
  year={2020},
  publisher={Cambridge University Press}
}

\appendix
\section{Change of measure}\label{sec:change-of-measure}

Instead of starting from a pre-selected risk-neutral measure $\mathds{Q}$ as we have done in this paper, it is possible to model the short rate under the real-world measure $\mathds{P}$, and then make a structure-preserving change of measure, the goal being the stochastic differential equation \eqref{eq:r}. To this end, let us assume that the short rate satisfies the stochastic differential equation
\begin{equation}
dr_t = \left(\alpha(t) + \beta r_t + \sum_{j=1}^J \gamma_j r_{t-\eta_j}\right)dt + \sigma(t) dB_t \label{eq:rP}
\end{equation}
under $\mathds{P}$, where $B=(B_t)$ is a Brownian motion under $\mathds{P}$ and where $\beta,\gamma_1,\ldots,\gamma_J\in\mathds{R}$, $\eta_J>\ldots>\eta_1>0$, and $\alpha:[0,\infty]\to \mathds{R}$, $\sigma:[0,\infty]\to (0,\infty)$ are two continuous functions with
\[ \sup_{s\in[0,\infty)} \frac{1}{\sigma^2(s)} <\infty.\] We take the parameters $a(.),b$ and $c_1,\ldots,c_N$ from \eqref{eq:r}.

For given (possibly large) time horizon $H>0$, define the processes $\lambda=(\lambda_t)_{t\in[0,H]}$ and $Z=(Z_t)_{t\in[0,H]}$ as
\begin{align}
\lambda_t &=  \frac{1}{\sigma(t)}\left[ a(t) - \alpha(t) + (b -\beta) r_t  - \sum_{j=1}^J \gamma_j r_{t-\eta_j}  + \sum_{i=1}^N c_i r_{t-\tau_i} \right], \label{eq:lambda} \\
Z_t &= e^{-\int_0^t \lambda_s dB_s - \frac{1}{2} \int_0^t \lambda_s^2 ds}
\end{align}
for all $t\in[0,H]$. It is sufficient to show that $Z$ is a martingale, because then, by the Girsanov theorem there exists a measure $\mathds{Q}$ such that $Z_T = \frac{d\mathds{Q}}{d\mathds{P}}$ and a Brownian motion $W=(W_t)_{t=0}^H$ under $\mathds{Q}$ such that \eqref{eq:r} holds true for all $t\in[0,H]$. This achieves the objective.

\begin{proposition}
 If the function $\phi$ is bounded on $[-(\tau_N\vee \eta_J) ,0]$, then $Z$ is a martingale.
\end{proposition}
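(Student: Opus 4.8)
The plan is to recognise $Z$ as the stochastic exponential $\mathcal{E}\bigl(-\int_0^\cdot\lambda_s\,dB_s\bigr)$ on $[0,H]$, which is automatically a nonnegative local martingale, hence a supermartingale with $E_\mathds{P}[Z_t]\le 1$; it therefore suffices to prove $E_\mathds{P}[Z_H]=1$, since a nonnegative supermartingale with constant expectation is a martingale. To control $\lambda$, I would first observe that \eqref{eq:rP} has exactly the form of \eqref{eq:r}, so Proposition \ref{prop:sol:SDE} applies with $b,c_j,a,W$ replaced by $\beta,\gamma_j,\alpha,B$ and $R$ replaced by the kernel $R_\mathds{P}$ obtained from $\beta,\gamma_1,\dots,\gamma_N$ via \eqref{eq:R}. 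This yields $r_t=f_\mathds{P}(t)+N_t$ for $t\in[0,H]$, where $f_\mathds{P}$ is deterministic and continuous (hence bounded on $[0,H]$) and $N_t=\int_0^t\sigma(s)R_\mathds{P}(t-s)\,dB_s$ is a centred Gaussian Wiener integral with $E_\mathds{P}[N_t^2]=\int_0^t\sigma^2(s)R_\mathds{P}^2(t-s)\,ds$ bounded uniformly on $[0,H]$.

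From the definition \eqref{eq:lambda} of $\lambda$ and the standing assumption $\sup_s 1/\sigma^2(s)<\infty$ it then follows that $\lambda_t$ is an affine function of the jointly Gaussian variables $r_t,r_{t-\tau_1},\dots,r_{t-\tau_N}$ with deterministic, uniformly bounded coefficients; here boundedness of $\phi$ on $[-\tau_N,0]$ is used to bound the values $r_s=\phi(s)$, $s\in[-\tau_N,0)$, that enter $\lambda_t$ for small $t$. Consequently there is a constant $C$ with $\lambda_t^2\le C\bigl(1+\sup_{0\le u\le t}N_u^2\bigr)$ for all $t\in[0,H]$, which in particular gives $E_\mathds{P}\bigl[\int_0^H\lambda_s^2\,ds\bigr]<\infty$, so $\int_0^\cdot\lambda_s\,dB_s$ is a genuine $L^2$-martingale and $Z$ is well defined. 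For the true martingale property I would invoke the localised version of Novikov's criterion: it is enough to exhibit a partition $0=t_0<t_1<\dots<t_n=H$ such that, for each $i$, $E_\mathds{P}\bigl[\exp\bigl(\tfrac12\int_{t_{i-1}}^{t_i}\lambda_s^2\,ds\bigr)\,\big|\,\mathcal{F}_{t_{i-1}}\bigr]<\infty$ almost surely. Applying Novikov's theorem under the regular conditional probability $\mathds{P}(\,\cdot\mid\mathcal{F}_{t_{i-1}})$, under which $(B_s-B_{t_{i-1}})_{s\ge t_{i-1}}$ is a Brownian motion, then gives $E_\mathds{P}[Z_{t_i}\mid\mathcal{F}_{t_{i-1}}]=Z_{t_{i-1}}$, and chaining over $i$ yields $E_\mathds{P}[Z_H]=1$.

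The technical core is the conditional exponential estimate on a short block $[t_{i-1},t_i]$. For $u\in[t_{i-1},t_i]$ I would split $N_u$ into an $\mathcal{F}_{t_{i-1}}$-measurable part, finite almost surely (its $u$-dependence handled by writing $R_\mathds{P}(u-s)=R_\mathds{P}(t_{i-1}-s)+\int_{t_{i-1}-s}^{u-s}R_\mathds{P}'(v)\,dv$ and using that $R_\mathds{P}$ is bounded and Lipschitz on $[0,H]$), plus the fresh increment $\int_{t_{i-1}}^u\sigma(s)R_\mathds{P}(u-s)\,dB_s$, whose conditional variance is at most $(t_i-t_{i-1})\,\|\sigma\|_{\infty,[0,H]}^2\,\|R_\mathds{P}\|_{\infty,[0,H]}^2$. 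By the Borell--TIS (or Fernique) inequality, the conditional exponential moment of $C(t_i-t_{i-1})\sup_{u\in[t_{i-1},t_i]}(\text{fresh increment})^2$ is finite as soon as $C(t_i-t_{i-1})$ times that variance bound is small enough, so a sufficiently fine partition makes every conditional exponential moment finite almost surely. I expect this to be the main obstacle: one must marry the non-Markovian (delay) structure, which forces conditioning on all of $\mathcal{F}_{t_{i-1}}$ rather than on a single state variable, with a Gaussian concentration estimate, while keeping track of the regularity of the kernel $R_\mathds{P}$. (An alternative would be to verify a Beneš-type linear-growth bound $\lambda_t^2\le C(1+\sup_{s\le t}r_s^2)$ directly and quote the corresponding martingale criterion, but the localised Novikov route seems cleanest.)
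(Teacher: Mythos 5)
Your argument is correct, but it is a genuinely different proof from the one in the paper. The paper does exactly what you relegate to your parenthetical last sentence: it quotes a Bene\v{s}-type martingale criterion (Klebaner and Liptser, Theorem 5.1) and verifies the required linear-growth bound \eqref{eq:Cond3}, namely that both $\lambda_t^2$ and $\bigl(a(t)+br_t+\sum_j c_jr_{t-\tau_j}\bigr)^2+\sigma^2(t)(1+\lambda_t^2)$ are dominated by $\xi\bigl(1+\sup_{s\in[0,t)}r_s^2\bigr)$; this needs only $(x+y)^2\le 2x^2+2y^2$, Cauchy--Schwarz, the boundedness of $\phi$ on $[-\tau_N,0]$ and the standing bound on $1/\sigma^2$, and never invokes the explicit solution formula or Gaussianity. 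Your route instead pulls the representation $r_t=f_\mathds{P}(t)+N_t$ from Proposition \ref{prop:sol:SDE} applied under $\mathds{P}$, exploits that $N$ is a continuous centred Gaussian process, and runs a localised Novikov argument with Borell--TIS on short blocks. This is sound, and it buys self-containedness (no appeal to a specialised citation) at the cost of substantially more machinery. Two small remarks: (i) your conditional form of the block criterion, $E_\mathds{P}[\exp(\tfrac12\int_{t_{i-1}}^{t_i}\lambda_s^2\,ds)\mid\mathcal{F}_{t_{i-1}}]<\infty$ a.s., requires the regular-conditional-probability argument you sketch, which is more delicate than the textbook version; here it is actually unnecessary, since $\int_{t_{i-1}}^{t_i}\lambda_s^2\,ds\le C(t_i-t_{i-1})\bigl(1+\sup_{u\le H}N_u^2\bigr)$ and $\sup_{u\le H}|N_u|$ already has Gaussian tails with a fixed variance proxy $\sup_{u\le H}\mathrm{Var}(N_u)<\infty$, so the plain unconditional localised Novikov criterion applies once the mesh is small enough, and the measurable-part/fresh-increment decomposition can be dropped. (ii) You should note that the kernel $R_\mathds{P}$ is only used at nonnegative arguments in $N_u$, where it is continuous and locally Lipschitz despite the jump of $R$ at $0$, so the continuity of $N$ and the applicability of Borell--TIS are indeed justified.
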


\begin{proof}
We use a result by Klebaner and Liptser \citep[Theorem 5.1]{klebaner2014stochastic}. In order to show that $Z$ is a martingale, we need to show that there exists a number $\xi\geq |r_0|$ such that 
\begin{equation}
    \max\left\{\lambda_t^2, \left(a(t) +\sum_{j=1}^N c_j r_{t-\tau_j} + br_t \right)^2 + \sigma^2(t)\left(1+ \lambda_t^2\right)\right\} \leq \xi \left( 1+ \sup_{s\in[0,t)} r_s^2\right)\label{eq:Cond3}
\end{equation} 
for all $t\in(0,H]$.

Define
\begin{align*}
 c_\phi & = \sup_{s\in[-(\tau_N\vee \eta_J),0]} |\phi(s)|,\\
 c_a & = \sup_{s\in[0,H]} (a(s)-\alpha(s))^2,\\
 c_{\sigma} & =  \sup_{s\in[0,H]} \frac{1}{\sigma^2(s)}.
\end{align*} Observe that the Cauchy–Schwarz inequality in $\mathds{R}^{J+N}$, gives us
\begin{align}
    &\left(\sum_{j=1}^J \gamma_j r_{t-\eta_j}  + \sum_{i=1}^N c_i r_{t-\tau_i}\right)^2  \leq (J+N)  \left(\sum_{j=1}^J  \gamma_j^2 r^2_{t-\eta_j}  + \sum_{i=1}^N c_i^2 r^2_{t-\tau_i}\right) \nonumber \\
    &= (J+N)  \left(\sum_{j=1}^J  \gamma_j^2 \phi^2(t-\eta_j) \mathds{1}_{[0,\eta_j]}(t) + \sum_{i=1}^N c_i^2 \phi^2(t-\tau_i) \mathds{1}_{[0,\tau_i]}(t)\right) \nonumber\\
    & + (J+N) \left(\sum_{j=1}^J  \gamma_j^2 r^2_{t-\eta_j}\mathds{1}_{(\eta_j,\infty]}(t)  + \sum_{i=1}^N c_i^2 r^2_{t-\tau_i} \mathds{1}_{(\tau_i,\infty]}(t)\right) \nonumber \\
    & \leq (J+N) c_\phi^2 \mathcal{M} + (J+N) \mathcal{M}   \sup_{s\in[0,t)} r_s^2,\label{eq:CS}
\end{align} where
\[ \mathcal{M} = \left( \sum_{j=1}^J  \gamma_j^2  + \sum_{i=1}^N c_i^2 \right).\]

The identity $(x+y)^2 \le 2x^2+2y^2$ for all $x,y\in\mathds{R}$ and the inequality \eqref{eq:CS}, alow us to obtain
\begin{align*}
    \lambda_t^2 
     & \leq  \frac{2}{\sigma^2(t)}\left(2( a(t)-\alpha(t))^2 + (J+N) c_\phi^2 \mathcal{M}  +  2(b-\beta)^2 r_t^2 + (J+N) \mathcal{M}  \sup_{s\in[0,t)} r_s^2\right) \\
    &\leq \frac{2}{c_{\sigma}}\left(2 c_a +  (J+N) c_\phi^2 \mathcal{M} +  \left(2(b-\beta)^2 + (J+N) \mathcal{M} \right)\ \sup_{s\in[0,t)} r_s^2\right)\\
    & \leq c_\lambda\left(1 + \sup_{s\in[0,t)} r_s^2\right),
\end{align*}
where
\begin{align*}
 c_\lambda & = \frac{2}{c_{\sigma}}\max\left\{ 2 c_a +  (J+N) c_\phi^2 \mathcal{M}, 2(b-\beta)^2 + (J+N) \mathcal{M}\right\}.
\end{align*}
Similarly, 
\begin{align*}
 (a(t) + br_t + \sum_{j=1}^N c_j r_{t-\tau_j})^2 + \sigma^2(t) 
 &\leq 2\left(2a^2(t) + 2b^2r_t^2 + N \sum_{j=1}^N c^2_jr_{t-\tau_j}^2\right) + \sigma^2(t) \\
 &\leq 4\tilde{c}_a + 2N c_\phi^2 \sum_{j=1}^N c_j^2 + \tilde{c}_{\sigma} \\
 & + \left(4b^2 + 2N \sum_{j=1}^N c_j^2 \right)\sup_{s\in[0,t)} r_s^2,
\end{align*} where\begin{align*}
    \tilde{c}_a & = \sup_{s\in [0,t)} a^2(s),\\
    \tilde{c}_{\sigma} & = \sup_{s\in [0,t)} \sigma^2(s).
\end{align*} Hence, the required number $\xi$ is
\[
 \xi = \max\left\{|r_0|, c_\lambda, \max\left\{4\tilde{c}_a + 2N c_\phi^2 \sum_{j=1}^N c_j^2 + \tilde{c}_{\sigma}, 4b^2 + 2N \sum_{j=1}^N c_j^2\right\} + \tilde{c}_{\sigma}c_\lambda\right\}.
\]
\end{proof}

\end{document}